\documentclass[10pt,twocolumn,aps,pra,superscriptaddress]{revtex4-1}

\usepackage{graphicx}
\usepackage{color}
\usepackage{xcolor}
\usepackage{amssymb}
\usepackage{amsmath}
\usepackage{bm}
\usepackage[american]{babel}
\usepackage{hyperref}
\usepackage{braket}
\usepackage[T1]{fontenc} 
\usepackage{comment}
\usepackage{textcomp}
\usepackage[ruled,vlined]{algorithm2e}
\usepackage{amsthm}
\usepackage{txfonts}

\DeclareGraphicsExtensions{.pdf,.png,.jpg}
\hypersetup{colorlinks=true,urlcolor=teal,breaklinks=true,citecolor=teal,linkcolor=teal,pdfstartview=FitH,pdfpagemode=UseNone}

\newtheorem{theorem}{Theorem}
\newtheorem{lemma}{Lemma}

\theoremstyle{remark}
\newtheorem{remark}{Remark}

\newcommand{\kt}[1]{{\color{blue}{KT: #1}}}

\newcommand{\nz}{N_{\textrm{z}}}
\newcommand{\nzb}{N_{\textrm{z,bulk}}}
\newcommand{\nzs}{N_{\textrm{z,side}}}
\newcommand{\nq}{L_{\textrm{q}}}

\newcommand{\nr}{N_{\textrm{rounds}}}

\begin{document}

\title{Parity-unfolded distillation architecture for noise-biased platforms}


\author{Konstantin Tiurev}
\affiliation{Parity Quantum Computing Germany GmbH, 20095 Hamburg, Germany}

\author{Christoph Fleckenstein}
\affiliation{Parity Quantum Computing GmbH, A-6020 Innsbruck, Austria}

\author{Christophe Goeller}
\affiliation{Parity Quantum Computing Germany GmbH, 20095 Hamburg, Germany}
\affiliation{Parity Quantum Computing France SAS, 75016 Paris, France}

\author{Paul Schnabl}
\affiliation{Institute for Theoretical Physics, University of Innsbruck, A-6020 Innsbruck, Austria}

\author{Matthias Traube}
\affiliation{Parity Quantum Computing Germany GmbH, 20095 Hamburg, Germany}

\author{Nitica Sakharwade}
\affiliation{Parity Quantum Computing GmbH, A-6020 Innsbruck, Austria}

\author{Anette Messinger}
\affiliation{Parity Quantum Computing GmbH, A-6020 Innsbruck, Austria}

\author{Josua Unger}
\affiliation{Parity Quantum Computing GmbH, A-6020 Innsbruck, Austria}

\author{Wolfgang Lechner}
\affiliation{Parity Quantum Computing Germany GmbH, 20095 Hamburg, Germany}
\affiliation{Institute for Theoretical Physics, University of Innsbruck, A-6020 Innsbruck, Austria}
\affiliation{Parity Quantum Computing France SAS, 75016 Paris, France}
\affiliation{Parity Quantum Computing GmbH, A-6020 Innsbruck, Austria}

\date{\today}

\begin{abstract}
We introduce the parity-unfolded architecture, a fault-tolerant quantum computing scheme that relies on direct preparation and teleportation of small-angle rotations $ Z^{1/2^{k}}$ rather than approximating them with the conventional (Clifford + $T$) gate set. The architecture is enabled by efficient distillation of gates from an arbitrary level of the Clifford hierarchy, which we refer to as parity unfolding. With it, a state $\ket{Z_k} = Z^{1/2^{k}}\ket{+}$ can be prepared fault-tolerantly using $2^{k+3} + \mathcal{O}(2^{k/2})$ biased-noise qubits on a planar chip with nearest-neighbour connectivity. For algorithms requiring native $Z^{1/2^{k}}$ gates, such as the Quantum Fourier Transform and phase estimation, the proposed scheme allows to reduce resource overheads for up to $k=7$, i.e., up to $T^{1/32}$. Furthermore, when used for the synthesis of arbitrary small-angle rotations, parity-unfolded distillation of ($T$ + $\sqrt{T}$) reduces the minimum achievable logical error rate by 43\% while cutting the resource requirements by 26\%, when compared to unfolded distillation of only the $T$ gate. 
\end{abstract}

\maketitle


\section{Introduction}

The Eastin--Knill theorem and the related no-go restrictions demonstrate that no quantum error-correcting code can implement a full universal gate set solely via transversal operations, in which physical gates act independently on corresponding physical qubits of a code block~\cite{PhysRevLett.102.110502}. This limitation is particularly relevant for fault-tolerant architectures based on 2D topological stabilizer codes, such as the surface or color codes, because in those geometries any gate implemented by a constant-depth local circuit is constrained to lie within the second level of the Clifford hierarchy, i.e., the Clifford group~\cite{PhysRevLett.110.170503}. 

To reach full universality, the Clifford set is typically augmented with at least one non-Clifford gate. One of the most common approaches involves a $T$ gate, which can be achieved by distillation~\cite{PhysRevA.71.022316,Litinski_2019,PhysRevA.86.052329} and injection~\cite{PhysRevA.62.052316} of the magic $\ket{T}$ state, or using techniques such as code switching~\cite{PRXQuantum.5.020345,PRXQuantum.2.020341,PhysRevLett.113.080501,PhysRevResearch.7.023080} and gauge fixing~\cite{PhysRevA.95.032338,Bombín_2015,Jones_2016}. The (Clifford + $T$) gate set is universal and, by virtue of the Solovay--Kitaev theorem~\cite{9781107002173}, allows approximation of any target unitary to arbitrary precision, with only polylogarithmic overhead in the error. 

Preparing high-fidelity magic $\ket{T}$ states remains one of the most resource-intensive tasks in fault-tolerant quantum computing. Conventional magic state distillation takes many noisy magic states as input and produces fewer output states with higher fidelity. Because distillation circuits themselves are not noise-free, this process typically must be carried out at the logical level, leading to substantial space and time overheads, despite significant improvements since the initial proposal~\cite{Litinski_2019}. More recent  state cultivation schemes rely on the measurement of a Clifford operator and allow to minimize overheads further, however, they still require a large time overhead due to a small post-selection probability~\cite{Chamberland2020,Gidney_2019,gidney2024magicstatecultivationgrowing,xu2026distillingmagicstatesbicycle}. Under certain conditions, the unfolded distillation protocol~\cite{ruiz2025unfoldeddistillationlowcostmagic} offers a much more hardware-efficient way to distill the $\ket{T}$ state on the physical-qubit level. The  protocol exploits the asymmetry in error types in biased-noise qubits, such as cat qubits~\cite{Mirrahimi_2014,PhysRevX.9.041053,10.1038/s41586-024-07294-3}, to drastically reduce resource overheads. The scheme requires only a few tens of physical qubits to produce a high-fidelity $\ket{T}$ state, representing an order-of-magnitude reduction in qubit count and circuit volume relative to leading standard approaches~\cite{Litinski_2019,Gidney_2019,gidney2024magicstatecultivationgrowing}. This dramatic resource saving is promising as it could significantly lower one of the major bottlenecks in constructing scalable, universal fault-tolerant quantum computers. 

A universal gate set can be further supplemented with gates from higher levels of the Clifford hierarchy~\cite{hu2021climbingdiagonalcliffordhierarchy,PhysRevA.95.012329,xu2026controlledjumpcliffordhierarchy}, e.g., those of the form $Z^{1/2^{k}}$ with $k>2$. Although the overhead required to distill and teleport high-fidelity copies of higher-level states grows, the subsequent compilation overhead to synthesise arbitrary rotations shrinks. Overall,  the extended gate sets reduce the total space-time cost required to synthesise arbitrary angle $U(3)$ rotation gates compared to the standard Clifford $+ T$ gate set. Furthermore, $Z^{1/2^{k}}$ rotations arise in many quantum algorithms, for example those that make use of the quantum Fourier transform. Direct access to native $Z^{1/2^{k}}$ gates might pave a way to even more resource-efficient execution for these algorithms. Universal quantum computing architectures based on extended gate sets with $k>2$ have been considered in literature before, see, e.g., Refs.~\cite{landahl2013complexinstructionsetcomputing,PhysRevA.91.042315,Campbell_2016,mooney2021cost}. However, due to significant resource overheads, demanding hardware requirements, and increased error rates of direct distillation of gates from higher levels of the Clifford hierarchy, such schemes have seen limited practical adoption to date.

\begin{figure*}[t]
\includegraphics[width=0.95\textwidth]{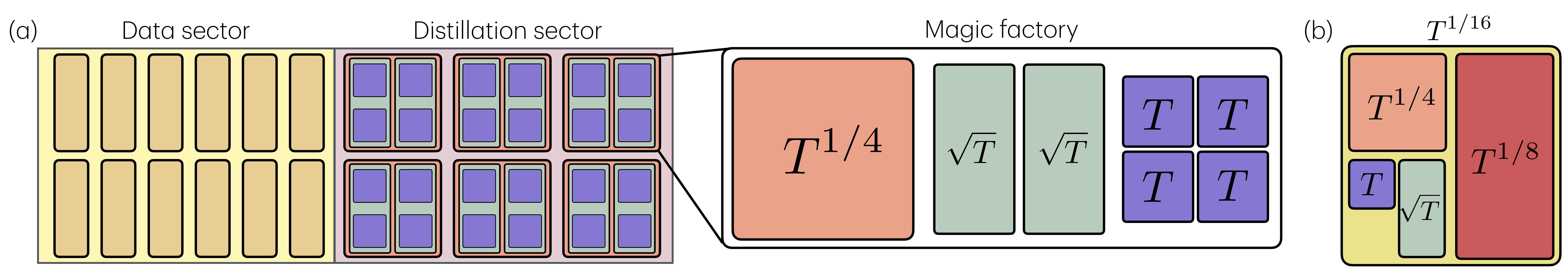}
\caption{\label{fig:scheme} High-level overview of the parity-unfolded architecture. (a)~A planar chip is divided into data and distillation sectors. Logical qubits of the data sectors hold information of the quantum algorithm. Magic factories are patches of qubits in the distillation sector. The inset illustrates magic factories containing 120 physical qubits and capable of producing either (i)~four $\ket{T}$, (ii)~two $\ket{\sqrt{T}}$, or (iii)~one $\ket{T^{1/4}}$ states in parallel. (b)~Magic factory for distillation and deterministic teleportation of $Z_{k\leq6}$ gates. In the first cycle, a gate $T^{1/16}$ is distilled on a magic factory~(yellow patch) and teleported onto a target logical qubit of the data sector. With a probability of 50\%, the teleportation requires a corrective $T^{1/8}$ gate, which can be distilled on a red patch and subsequently teleported, also with a success probability of 50\%. In the worst case, the entire sequence of gates down to the $T$ gate might be required. However, since a patch required for the distillation of a gate $Z_k$ only occupies a half of the patch required for the distillation of a gate $Z_{k+1}$, all potentially required corrective gates can be prepared in parallel within a second cycle of the protocol. Hence, a two-cycle protocol guarantees the availability of all corrective gates, making it deterministic.}
\end{figure*}

In this work, we devise hardware-friendly protocols for distilling gates from an arbitrary level of the Clifford hierarchy. The protocols are enabled by noise-biased characteristics of constituent physical qubits and realized by mapping face stabilizers of the Quantum Reed--Muller~(QRM)~\cite{gong2024computationquantumreedmullercodes,Quan_2018} codes onto classical planar parity codes, a procedure which we refer to as parity unfolding. We provide a constructive scheme for realizing such magic factories in planar chips with nearest-neighbour connectivity. The proposed scheme fault-tolerantly prepares any desired $Z^{1/2^{k}}$ rotation from a magic factory containing approximately $\mathcal{O}(2^{k+3})$ biased-noise physical qubits. This approach offers a new and practical design principle of fault-tolerant architectures, where gates from arbitrary high levels of the Clifford hierarchy are prepared directly in a parity-unfolded magic factory rather than approximated using the standard (Clifford + $T$) set. Fig.~\ref{fig:scheme} illustrates the design of the parity-unfolded architecture. We numerically simulate the full execution pipeline necessary to distill high fidelity magic states from different levels of the Clifford hierarchy. We use the distilled fault-tolerant gates to synthesise arbitrary rotations and demonstrate a clear advantage in logical error rate and space-time cost in experimentally relevant regimes. In particular, we find that using the extended parity-unfolded (Clifford + $T$ +$\sqrt{T}$) set 
for approximating an arbitrary single-qubit unitary yields a 38\% saving of resource cost at the same logical error rate when compared to the conventional (Clifford +$T$) set. Alternatively, the (Clifford + $T$ +$\sqrt{T}$) set allows for simultaneous reduction of resource cost and logical error rate by 26\% and 49\%, respectively. Furthermore, we show that algorithms that require only native non-Clifford gates of the form $Z^{1/2^{k}}$ benefit from our scheme even more.

The paper is organized as follows. In Section~\ref{sec:overview}, we provide a brief overview of the parity-unfolded architecture to motivate the development of magic states for higher levels of the Clifford hierarchy. The section also outlines the main results derived throughout the paper. Sections~\ref{sec:qrm}--\ref{sec:unfolding} are devoted to the design of unfolded magic factories, a central building block of our architecture. As such, Sec.~\ref{sec:qrm} provides background on the construction of Calderbank, Shor and Steane~(CSS) and QRM codes. In Sec.~\ref{sec:labels}, we revisit the most well-known 15-to-1 distillation scheme and its unfolding to a planar layout. In Sec.~\ref{sec:unfolding}, we introduce a constructive procedure to compose unfolded layouts for arbitrary levels of the Clifford hierarchy.  
The performance of parity-unfolded magic factories is analyzed analytically and simulated numerically in Sec.~\ref{sec:error-analysis}. In Sec.~\ref{sec:resource}, we estimate resource overheads of executing fault-tolerant rotations using the parity-unfolded architecture with an extended gate set and benchmark them against the conventional scheme based on $T$ state distillation. In the same section, we generalize the existing compilation algorithms to take into account gates from extended sets.Section~\ref{sec:conclusion} concludes the paper.

\section{Parity-unfolded architecture overview}\label{sec:overview}

Figure~\ref{fig:scheme} outlines the parity-unfolded architecture. As per standard fault-tolerant design, a planar chip has dedicated data and distillation sectors. The former holds qubits participating in logical computation. In a setting with very strongly noise-biased qubits, these logicals can be encoded in classical low-density parity check~(LDPC) codes, such as the repetition, Lechner--Hauke--Zoller~(LHZ)~\cite{PhysRevLett.129.180503,tiurev2025optimaldecodererrorcorrecting}, or LDPC-cat~\cite{ldpc-cat-code} codes. When bias is more moderate, noise-tailored quantum codes can be used~\cite{PhysRevLett.133.110601,XZZX,Lee2021surfacecode,leroux2025romanescocodesbiastailoredqldpc}. 

The distillation sector contains magic factories for distilling magic states of the form $\ket{Z_k} = Z^{1/2^{k}}\ket{+}$. Each factory takes a collection of noisy $\ket{Z_k}$ states~(or, in our case, requires a number of noisy $Z_k$ gates) and produces a single but less noisy state. High-quality magic states are subsequently consumed in a quantum teleportation protocol that applies a non-Clifford $Z^{1/2^{k}}$ rotation to one or a collection of logical qubits in the data sector. Unlike the conventional $T$ gate distilleries, each magic factory of the parity-unfolded architecture produces non-Clifford gates from all levels of the Clifford hierarchy up to level $(k+1)$. As an example, each factory of the distillation sector illustrated in Fig.~\ref{fig:scheme} gives an access to ($T$ + $\sqrt{T}$ + $T^{1/4}$) set of fault-tolerant non-Clifford gates. 

Our main finding is a resource-efficient and hardware-friendly design of magic factories for producing gates from arbitrary high levels of the Clifford hierarchy. The working principle of our magic factories relies on transversality of non-Clifford gates in quantum Reed--Muller~(QRM) codes and intrinsic noise bias of constituent qubits. We provide a step-by-step explicit construction of such factories in Secs.~\ref{sec:qrm}--\ref{sec:unfolding}. All gates up to and including $Z_k$ can be produced from a magic factory that contains approximately $2^{k+3}$ physical qubits~(including ancilla qubits that will be used for error correction). In particular, each such magic factory can distill $2^{k-k'}$ states $\ket{Z_{k'}}$ in parallel for any $k' \leq k$. As an example, magic factories of the distillation sector in Fig.~\ref{fig:scheme} contain each 120 physical qubits~(64 data and 56 ancillas) and can distill either of the following in parallel: (i)~four $\ket{T}$, (ii)~two $\ket{\sqrt{T}}$, (iii)~one $\ket{T^{1/4}}$ state, or any combinations of them with the combined qubit footprint not exceeding 120 qubits. As we show in Sec.~\ref{sec:error-analysis}, distilling either of the three gate sets requires the same number of error correction cycles on the same magic factory, yielding identical space-time cost. Hence, one can distill a state from the $(k+1)$st level of the Clifford hierarchy at a cost of two gates from the $k$th level.  In general, since distilling the gate $Z_{k+1}$ requires approximately twice the number of qubits required for the $Z_{k}$ gate, the sets of gate that can be distilled in parallel are determined by a number of patches that can be fit into the magic factory, as illustrated in Fig.~\ref{fig:scheme}~(b). The cost of distilling higher-level state is therefore rapidly grows. However, as we show in Sec.~\ref{sec:resource}, the additional overhead of distilling higher-level non-Clifford states is offset by the reduced number of states required for several important quantum computing subroutines.

To execute a quantum gate $Z_k$ on a logical data qubit, the gate must be teleported from the distillation sector. With a probability of 50\%, a gate $Z_k^{\dagger}$ will be applied instead, and needs to be corrected by applying $Z_k^2 = Z_{k-1}$. For the (Clifford + $T$) gate set, the only required correction operator is $T^2=S$, i.e., a Clifford gate that is considered cost-free. On the other hand, teleporting the distilled gate $Z_{k>2}$ might require gates from a sequence $Z_{k-1}, Z_{k-2},..Z_1$. The probabilistic nature of gate teleportation hence might introduce additional overhead in this case. However, it is easy to see that all such lower-level gates are simultaneously available with an additional overhead upper bounded by a factor of 2. Indeed, as we illustrate in Fig.~\ref{fig:scheme}~(b), all potentially required corrective gates fit into a magic factory used to distill $Z_k$ and can be prepared in the second distillation cycle. Hence, a maximum of two distillation cycles are required to apply a rotation $Z_k$ deterministically.

\section{Background:\\ transversality, Clifford hierarchy, and QRM codes}\label{sec:qrm}

We start by introducing the notations used throughout. We use $Z_k$ to denote Pauli-$Z$ rotations of the form
\begin{equation}\label{eq:rotation}
    Z_k :=
    \begin{pmatrix} 
    1 & 0 \\ 0 & e^{i\pi/2^k} 
    \end{pmatrix} 
    =
    e^{i\pi/2^{k+1}} 
    R_Z
    \!\left(\frac{\pi}{2^{k}}\right)
\end{equation}
with $k \geq 0$. As such, $k=0$ corresponds to a Pauli-$Z$ gate, $k=1$ to a Clifford $S$ gate, and $k=2$ to a non-Clifford $T$ gate. All distillation schemes considered in this paper rely on transversality---and, hence, inherent fault-tolerance---of non-Clifford gates in error-correcting QRM codes. In particular, it is known that a shortened $\overline{QRM}(1,k+2)$ code supports a transversal implementation of the gate $Z_k$. That is, applying the rotation $Z_k$ to each physical qubit of the $\overline{QRM}(1,k+2)$ code realizes the corresponding logical rotation on the encoded qubit. The most well-known example that supports a transversal ${T}$ gate is $\overline{QRM}(1,4)$ shown schematically in Fig.~\ref{fig:distillation}~(b).

A quantum gate corresponding to the rotation $Z_k$ is said to belong to the $(k+1)$st level of the Clifford hierarchy. As such, Pauli $Z_0 = Z$, Clifford $Z_1 = S$, and non-Clifford $Z_2=T$ gates belong to, respectively, the first, second, and third levels of the hierarchy. Since a gate $Z_k$ belongs to the $(k+1)$st level of the Clifford hierarchy and is transversal in $\overline{QRM}(1,k+2)$ a scheme for distilling a gate $Z_k$, a scheme for distilling a gate from the $(k+1)$st level of the hierarchy, and a scheme based on $\overline{QRM}(1,k+2)$ code are considered synonyms throughout the text, i.e.,
\[
Z_k
\; \leftrightarrow \;
\text{Clifford hierarchy level } (k+1)
\; \leftrightarrow \;
\overline{QRM}(1,k+2).
\]
To avoid confusion, we will use $k$ to denote the gate $Z_k$ in Eq.~\eqref{eq:rotation}, and $m$ to denote the shortened QRM code with parameters $\overline{QRM}(1,m)$. With this, $k$-order rotations are transversal in $\overline{QRM}(1,m)$ with $m=k+2$, which is always implied when $k$ and $m$ are used in the same equation. To avoid using the third set of labels, we refrain from classifying gates with levels of the Clifford hierarchy in the rest of the paper. 

The remainder of the section provides a brief overview of QRM codes that support various fault-tolerant non-Clifford gates. A more complete introduction to the theory of QRM codes can be found in Refs.~\cite{landahl2013complexinstructionsetcomputing}. 

The classical Reed–Muller code $RM(r,m)$ is defined as the set of all Boolean polynomials in $m$ binary variables of total degree at most $r$, evaluated on all $2^m$ binary inputs. A $RM(r,m$) code has block length $2^m$, distance $2^{m-r}$, and dimension 
\begin{equation} \label{eq:dimension}
    k = \sum_{i=0}^{r}\binom{m}{i}.    
\end{equation}
In standard coding theory notation, we say that $RM(r,m$) is a classical code with parameters
\begin{equation}
    [n,k,d]
    =
    [
    2^m, \;
    \sum_{i=0}^{r}\binom{m}{i}, \;
    2^{m-r}
    ],
\end{equation}
where $n$ is the number of physical (qu)bits, $k$ the number of logical (qu)bits and $d$ the distance of the code.
Importantly, the dual of a $RM(r,m$) is 
\begin{equation} \label{eq:duality}
{RM}^{\perp}(r,m) 
= 
{RM}(m-r-1,m).
\end{equation}

The CSS family is a subset of QEC codes with each stabilizer being made up either exclusively of $X$-type Pauli operators, or exclusively of $Z$-type Pauli operators. QRM codes, denoted as $QRM(r,m)$, arise naturally as the CSS codes whose defining $X$ and $Z$ parity check matrices are the generator matrices for $RM(r, m$) and its dual $RM(m-r-1,m$), respectively. Owing to duality of Eq.~\eqref{eq:duality}, this construction ensures that all $X$-type and $Z$-type stabilizers commute, which is necessary for a valid quantum code. 

In this paper, we will consider the family of first-order QRM codes defined by $r=1$. The first-order $QRM(1,m)$ code has parameters $[[2^m,0]]$, that is, it encodes no logical qubits. An operator with weight $2^{m}$~(which is equivalent to one of the  weight-$2^{m-1}$ stabilizers) can be promoted to logical operators, yielding the code with parameters
\begin{equation}\label{eq:full-qrm-parameters}
    [[n,k,d_x,d_z]]
    =
    [[2^m,1,d_x=2^{m-1},4]]
\end{equation}
It is sometimes more convenient to work with the shortened QRM code, which can be obtained from the full construction by puncturing one qubit and removing one stabilizer, which introduces a single logical degree of freedom. A shortened code $\overline{QRM}(1, m)$ is a CSS code formed by $2^m-2m-2$ type-$Z$ stabilizers~(which we will also call face stabilizers) and $m$ type-$X$~(which we will also call volume stabilizers) stabilizers. It has parameters 
\begin{equation}\label{eq:qrm-parameters}
    [[n,k,d_x,d_z]] = [[2^m-1,1,2^{m-1}-1,3]].
\end{equation}
The full code can be constructed from the shortened one by adding back one physical qubit and measuring the removed stabilizer. Hence one can see a full QRM code as a shortened code maximally entangled with the punctured qubit by the extra stabilizer. 
We will work with the shortened $\overline{QRM}(1,m)$ code throughout the paper, but occasionally use the full $QRM(1,m)$ when it is convenient for derivations. 

\section{Quantum state distillation using QRM codes}\label{sec:labels}

Shortened $\overline{QRM}(1,m)$ code can be used for distillation of a $Z_{m-2}$ gate that it supports transversally. A typical scheme is shown in Fig.~\ref{fig:distillation}. A QRM-encoded logical qubit is initially entangled with an external qubit we call target, which will hold the state $\ket{Z_{m-2}}$ at the end of the protocol. Encoding to the QRM code can be done in two ways. The first option is to use encoding circuits, such as the ones shown in Ref.~\cite{footnote-circuits} for $\overline{QRM}(1,4)$ and $\overline{QRM}(1,5)$. These circuits guarantee eigenvalues of certain multi-qubit Pauli operators to be $(+1)$, that is, set values of the code stabilizers to $(+1)$ without directly measuring them. Circuit-based encoding assumes noiseless qubits and multi-qubit gates during the encoding circuit. The scheme is hence typically executed on a logical level, where each qubit of the circuit in Fig.~\ref{fig:distillation} is itself encoded in, e.g., the surface code, and multi-qubit gates are executed in a fault-tolerant manner using protocols such as lattice surgery. 

Subsequently, a non-Clifford gate $Z^{\dagger}_{m-2}$ is applied transversally to each qubit of the code, resulting in a logical $\ket{\tilde{Z}^{\dagger}_{m-2}}$ state of the QRM-encoded qubit, where we used tilde to denote a state of a logical qubit. Upon measurement of all code qubits in the $X$ basis, the target qubit is left in the $\ket{\tilde Z_{m-2}}$ state. Unlike elements of the encoding circuit, $Z^{\dagger}_{m-2}$ gates are noisy and introduce errors even when executed on the logical level. To detect errors occurring due to noisy $Z^{\dagger}_{m-2}$ gates, one can reconstruct $X$-type stabilizers from the qubit measurements; a measured $(-1)$ eigenstate in at least one of these stabilizers indicates an error, and the corresponding distillation run shall be discarded. In $\overline{QRM}(1,m)$ codes, $X$ stabilizers can detect up to two errors, yielding the logical error rate $O(p_{\textrm{eff}}^3)$ with $p_{\textrm{eff}}$ being the effective probability of $Z$ or $Y$ error accumulated on each physical qubit prior to the final $X$ measurement. Since any configuration with one or two errors will be discarded, the probability to accept the outcome from the distillation protocol in the leading order reads $P_{\textrm{success}} = 1 - N_\text{qubits} p$, where $N_\text{qubits} = 2^m - 1$ is the number of physical qubits in the code. Despite tremendous progress in reducing the space-time cost, logical-level distillation remains a very resource-demanding sub-routine in fault-tolerant quantum computing. 

The second encoding scheme relies on direct measurement of the code stabilizers. As noted, the encoding scheme above is only applicable in the case where errors occur exclusively due to the noisy $Z^{\dagger}_{m-2}$ gates, while the encoding circuit is perfect. If, on the other hand, the encoding circuit is executed on the physical level, all components, including initialization and Clifford gates, are inevitably prone to noise. Such noisy encoding can transform less than three detectable errors into an undetectable error pattern, as well as introduce undetectable errors. 
To detect and correct errors during encoding, one can directly measure stabilizers of the shortened QRM code, as in standard quantum error correction. A practical complication in measuring stabilizers comes from the fact that a $\overline{QRM}(1,m)$ code is known to be local in $(m-1)$-dimensional space. Since the lowest-level non-Clifford $T$ gate is transversal in the $\overline{QRM}(1,4)$ code, it requires measurement of weight-8 $X$-stabilizers that can not be locally embedded in two-dimensional architectures, as examplified in Fig.~\ref{fig:distillation}~(b). 

For hardware platforms where qubits are subject to biased noise, unfolded distillation~\cite{ruiz2025unfoldeddistillationlowcostmagic} offers an elegant solution to the stabilizers non-locality problem. Since noise is strongly dominated by only one Pauli component, measuring stabilizers of that type is not necessary during encoding. Indeed, $X$-stabilizers always commute with errors that are fully biased towards Pauli-$X$. Only measurements of $Z$ stabilizers are necessary to detect errors taking place during encoding. Hence, encoding to the QRM code reduces to a classical error correcting code determined solely by the $Z$ stabilizer group of the original code. The remaining protocol is identical to the standard distillation scheme of Fig.~\ref{fig:distillation}.  
Ruiz~\emph{et al.}~\cite{ruiz2025unfoldeddistillationlowcostmagic} have shown that face stabilizers of the $\overline{QRM}(1,4)$ code that supports transversal $T$ can be placed on a planar chip using nearest-neighbour connectivity with only a few ancilla qubits, yielding a low-overhead and hardware-friendly alternative to costly logical-level schemes.

\begin{figure}[t]
\includegraphics[width=0.95\columnwidth]{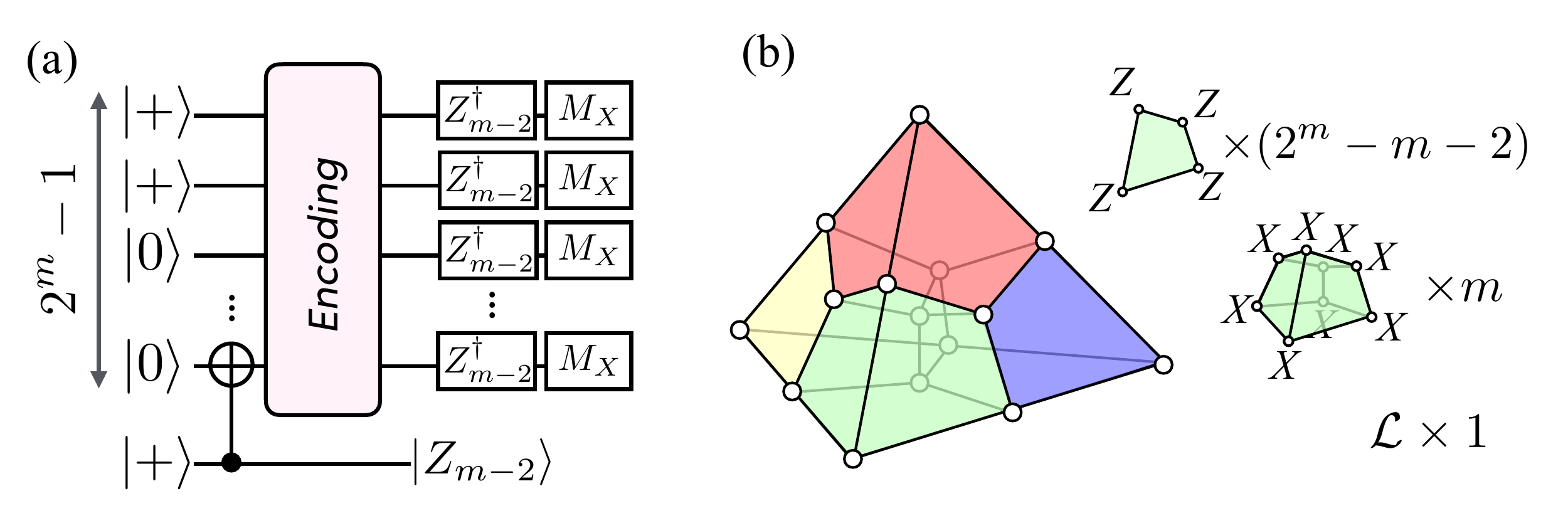}
\caption{\label{fig:distillation} Distillation of logical state $\ket{\tilde{Z}_{m-2}}$ using $\overline{QRM}(1,m)$ code. (a)~State distillation circuit consists of $2^m-1$ qubits of the code $\overline{QRM}(1,m)$ and one target qubit. First, the target is entangled with one of the code qubits in a Bell pair. The code qubits are then encoded in the $\overline{QRM}(1,m)$ code. 
Subsequently, $Z^{\dagger}_{m-2}$ gate are applied transversally to all qubits of the code, followed by $X$ measurement of all of the code qubits. The measurement leaves the target qubit in $\ket{Z_{m-2}}$. The volume stabilizers are reconstructed from the measurements and used to detect errors. 
(b)~The encoding circuit of~(a) fixes $2^m-m-2$ face and $m$ volume stabilizers. The encoding circuit can hence be replaced with direct stabilizer measurements. The shown example is the $\overline{QRM}(1,4)$ code that supports transversal $T$ gate and can be embedded locally in three-dimensional space. }
\end{figure}



In the following section, we develop the methodology for arbitrary $\overline{QRM}(1,m)$ codes and provide a constructive scheme for unfolding such codes onto a planar layout with very little qubit overhead, paving a way to resource-efficient distillation of higher-order non-Clifford gates $Z_k$. Since unfolded layouts constitute instances of parity codes, we refer to the technique as parity unfolding. 

\section{Parity-unfolded distillation}\label{sec:unfolding}

As outlined above, $\overline{QRM}(1,m)$ supports transversal implementation of the $Z_{m-2}$ gate. Here, we construct the parity-unfolded code, a classical error correcting code with weight-4 stabilizers equivalent to the code generated by face stabilizers of the $\overline{QRM}(1,m)$ code. This will be done in three steps: first, we fix the bulk of the unfolded code in Sec.~\ref{sec:bulk-structure}. Second, we remove the excessive degrees of freedom by adding boundary stabilizers in Sec.~\ref{sec:boundary-structure}. Finally, we use the parity code formalism to show transversality of $Z_{m-2}$ gates in the constructed parity-unfolded codes in Sec.~\ref{sec:parity-code}. Parity formalism is also employed to reconstruct volume stabilizers and correct errors. Finally, in Sec.~\ref{sec:nearest-neighbour} we show how the scheme can be realized in planar chips with only nearest-neighbour interaction at the cost of adding only a small number of physical qubits. 

\subsection{Bulk stabilizers}\label{sec:bulk-structure}

A shortened code $\overline{QRM}(1,m)$ encodes one logical qubit in $2^m-1$ physical qubits. It has $2^m - m - 2$ independent $Z$ stabilizers and $m$ independent $X$ stabilizers. The key property used in parity unfolding is that, although $\overline{QRM}(1,m)$ is local in at least $(m-1)$-dimensional space, all $Z$ stabilizer generators can be reduced to weight-4 for any $m$. Weights of $X$ stabilizers, on the other hand, grow exponentially with $m$, with each $X$ stabilizer having support on half of the QRM code qubits. 
However, since noise with a strong $X$-bias is considered, measuring volume $X$ stabilizers is not required during encoding. 

Before constructing parity-unfolded layouts, we note that the target qubit of the protocol in Fig.~\ref{fig:distillation} is itself, in principle, noisy and has to be protected, which can be done by encoding the target qubit either in the repetition code~(at infinite bias), or in a topological bias-tailored code~(e.g., thin surface, XZZX~\cite{XZZX} or $\textrm{X}^3\textrm{Z}^3$~\cite{PhysRevLett.133.110601} codes). In this section, we skip this complication and consider an ideal target qubit to focus solely on unfolding. A full scheme with noisy ancillas is then constructed by replacing the physical target qubit with an encoded qubit, which has no effect on the unfolding procedure. A scheme with noisy target qubits under biased noise is represented Fig.~\ref{fig:bulk} and analysed numerically in Sec.~\ref{subsec:numerical}. 

\begin{figure}[t]
\includegraphics[width=0.9\columnwidth]{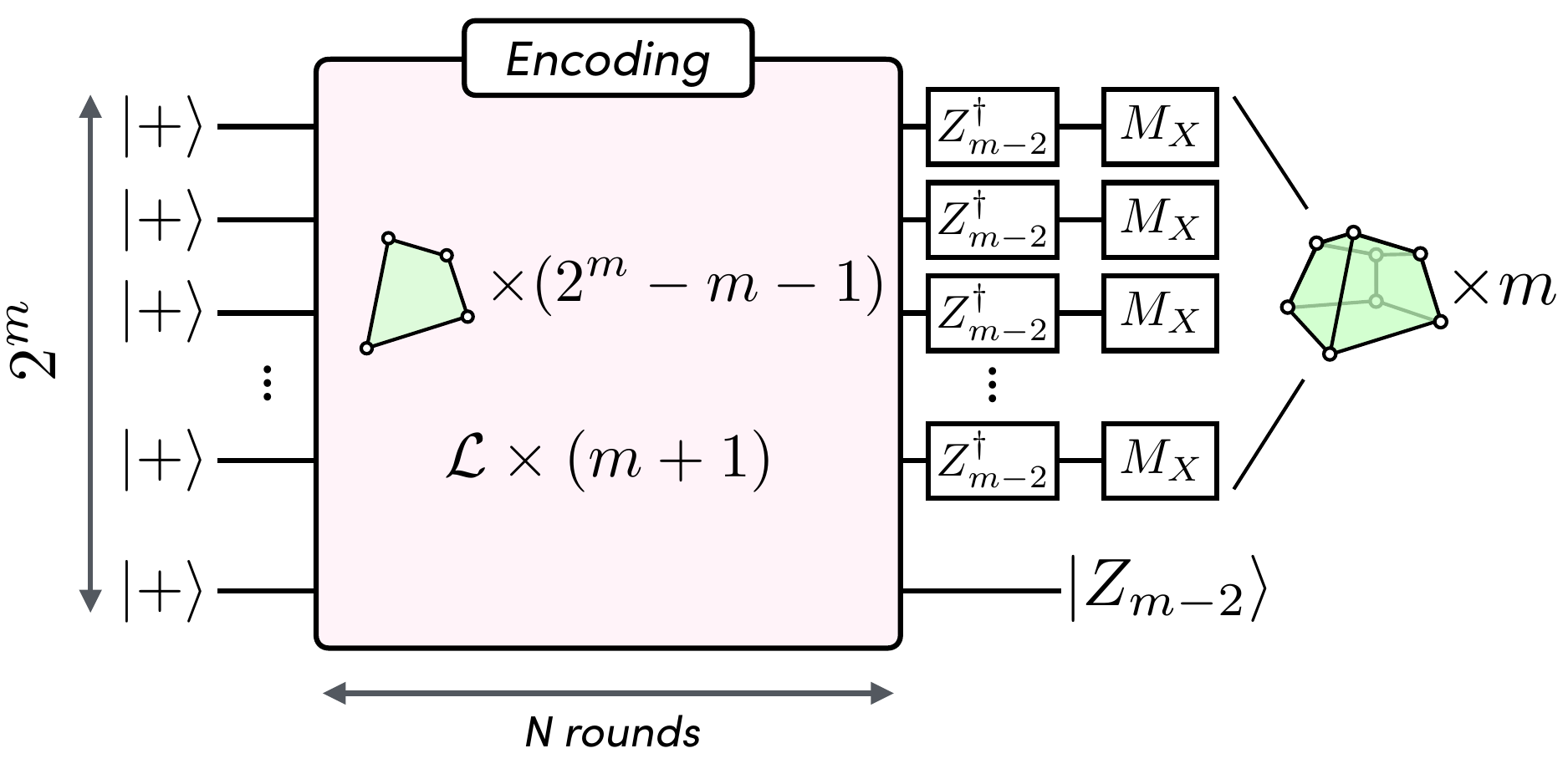}
\caption{\label{fig:unfolded-distillation} Parity-unfolded version of the distillation protocol in Fig.~\ref{fig:distillation}. Encoding to the QRM code is achieved by measuring stabilizers rather than executing the encoding circuit. Furthermore, since noise is strongly biased towards Pauli-$X$, only $Z$~(face) stabilizers are measured during encoding. Omitting measurement of $m$ volume promotes them to logical operators, resulting in $m$ additional encoded logical qubits. Secondly, the target qubit is entangled with the shortened QRM code by measuring an additional face stabilizer; it becomes part of the unfolded code and is included into the encoder box. Since one logical degree of freedom is added by including the target qubit to the code and one is removed by measuring an additional face stabilizer, the total number of logical qubits is unchanged. Finally, stabilizer measurements need to be performed $\nr$ times to correct measurement errors. 
}
\end{figure}

The usual description of the protocol provided in the previous section separates the system into a shortened Reed--Muller code $\overline{QRM}(1,m)$ and a target qubit. The target qubit must be maximally entangled with the logical qubit of the code, which can be done by measuring an additional weight-4 face stabilizer between the target and the code qubits. An equivalent description can be obtained by viewing the composite system consisting of $\overline{QRM}(1,m)$ together with the target qubit as the full (unshortened) $QRM(1,m)$ code, with one of the volume stabilizers promoted to a logical operator. As described in the Sec.~\ref{sec:qrm}, adding one qubit and one independent stabilizer indeed transforms the shortened code into the full one. Therefore, face stabilizers of the composite $\overline{QRM}(1,m)$ and target qubit is simply a classical $RM(1,m)$ code with parameters 
\begin{equation}\label{eq:rm-1-parameter}
    [n,k,d]
    =
    [2^m, m+1, 2^{m-1}].
\end{equation} 
Since only the face stabilizers need to be measured at strong bias, the task of parity unfolding reduces to finding a planar embedding of the $RM(1,m)$, which will be used instead of the encoding circuit, as illustrated in Fig.~\ref{fig:unfolded-distillation}.

\begin{figure}[t]
\includegraphics[width=0.95\columnwidth]{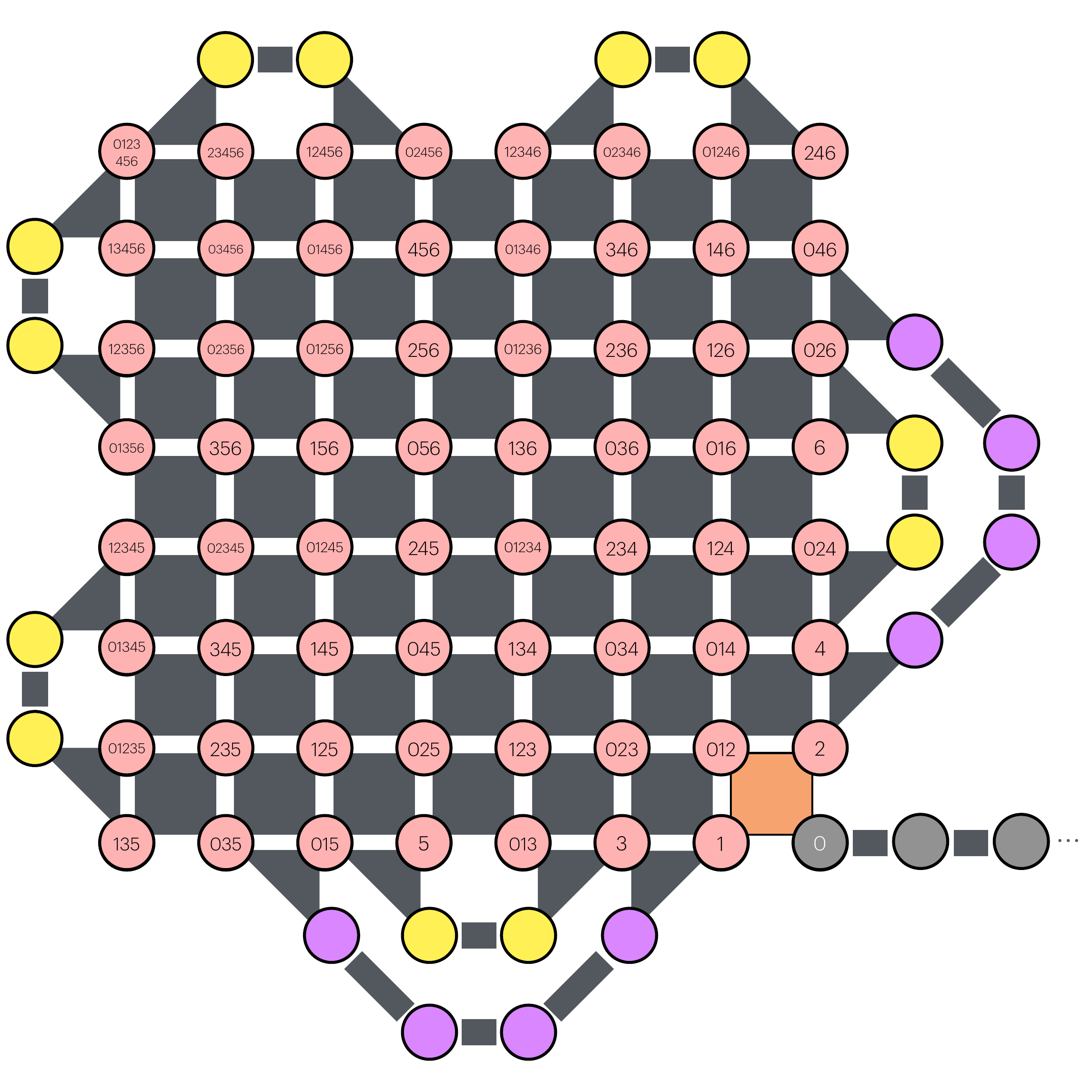}
\caption{\label{fig:bulk} A full layout for distilling rotation $Z_4$ using the unfolded code $uRM(6)$. The layout is constructed step by step, as described throughout Sec.~\ref{sec:unfolding}. There are $2^6$ bulk qubits~(circles with labels), of which $2^6-1$ are qubits of the $\overline{QRM}(1,6)$ code~(salmon) and one qubit of the target logical qubits~(labelled 0). Each square plaquette corresponds to a weight-4 $Z$ bulk stabilizer; there are $(2^3 - 1)^2 = 49$ independent bulk stabilizers. Construction of the bulk structure is described in Sec.~\ref{sec:bulk-structure}. The remaining $2^{m/2+1}-m - 2 = 8$ DOF are fixed by adding stabilizers $S(w,t)$ that have support on salmon qubits along the boundaries, as described in Sec.~\ref{sec:boundary-structure}. To make the layout compatible with the nearest-neighbour connectivity, each boundary stabilizer $S(w,t)$ is realized by distance-$w$ composite stabilizers. Here, each distance-2~(4) composite stabilizer consists of two yellow~(four purple) ancilla qubits and three~(five) stabilizers. Construction of the boundary stabilizers with nearest-neighbour connectivity is described in Sec.~\ref{sec:nearest-neighbour}. Each qubit is labelled using a parity label, as described in the proof of Lemma~\ref{lemma:2}. In case of even $m=2l$, there are $2l+1$ logical qubits and $2^{2l}$ parity qubits. For the bottom row, we create a pool of $l+1$ base labels and all possible $2^l - l - 1$ odd-length parity indices based on the base labels. We assign these indices to qubits of the bottom row according to the horizontal stabilizers. We do the same for the rightmost column. 
Parity labels of the remaining qubits are fixed from the rightmost column/bottom row by the bulk stabilizers step by step.  Logical $X$ operator $k$ in the unfolded layout corresponds to $X$ stabiliser $k$ and is given by the product of $X$ measurements of all parity qubits containing parity index $k$.
}
\end{figure}


Our construction relies on an important property of the first-order RM codes, namely, any linear code with parameters of Eq.~\eqref{eq:rm-1-parameter} is equivalent to the first order Reed--Muller code $RM(1,m)$. The proof is provided in Ref.~\cite{chen2009notesreedmullercodes}. Hence, any classical linear code with parameters of Eq.~\eqref{eq:rm-1-parameter} can be used for encoding under biased noise. Our goal is to find a constructive way to build such layouts on a planar chip with little qubit overhead. We will refer to such classical error correcting codes derived from the face stabilizers of the $\overline{QRM}(1,m)$ as unfolded Reed--Muller~(uRM) codes. For convenience we introduce the shortened and full uRM codes which we denote as $\overline{uRM}(m)$ and $uRM(m)$, respectively. The former corresponds to a classical code formed of face stabilizers of the shortened QRM code $\overline{QRM}(1,m)$, while the latter also includes the target qubit and stabilizers used to emulate the entangling CNOT gate in Fig.~\ref{fig:distillation}. When the target qubit is a single physical qubit, the full $uRM(m)$ must be equivalent to the $RM(1,m)$ code. 

We start by placing $2^m$ qubits on a $2^h \times 2^l$ squared lattice, with $l=h=m/2$ for even $m$ and $h=2^{(m-1)/2}$, $l=2^{(m+1)/2}$ for odd $m$. Without loss of generality, we first consider even $m=2l$, such as the one shown in Fig.~\ref{fig:bulk}. Weight-4 stabilizers are placed on the faces of the squared lattice, which we refer to as bulk stabilizers. Note that there are 
\begin{equation}
    \nzb
    = 
    (2^{l}-1)^2
    <
    \nz 
\end{equation}
bulk stabilizers. Hence, we need to fix the remaining $
\nz -\nzb = 2^{2l} - 2l - 1 - (2^{l}-1)^2 = 2^{l+1} - 2l - 2
$
independent weight-4 stabilizers, or, by symmetry, 
\begin{equation}\label{eq:side-stabilizers}
\begin{aligned}
    \nzs
    &=
    2^l - l - 1
\end{aligned}
\end{equation}
stabilizers along each of two types of boundaries, horizontal and vertical. We will refer to these missing stabilizers as boundary stabilizers.

\begin{figure}[t]
\includegraphics[width=0.85\columnwidth]{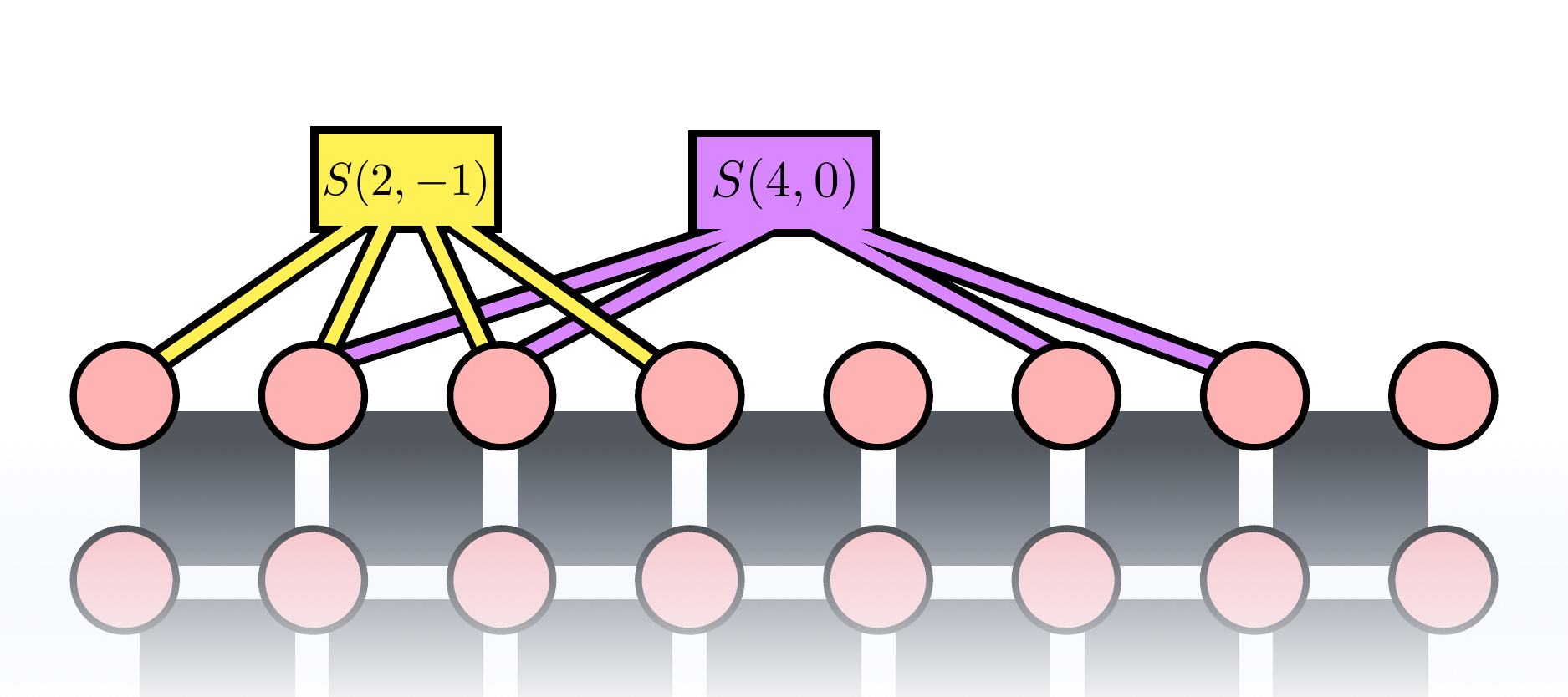}
\caption{\label{fig:edge} Examples of weight-4 stabilizers defined along the horizontal boundary of $uRM(4)$. Stabilizer $S(w,t)$ defined in Eq.~\eqref{eq:s-stabilizers} measures parity within two pairs of nearest-neighbour qubits. Distance $w$ of a stabilizers determines the distance between outer qubits of two pairs. Shift parameter $t$ determines the shift of the stabilizers to the left~($t<0$) and right~($t>0$) from the centre of the lattice. The two boundary stabilizers are realized using nearest-neighbour connectivity and shown in Fig.~\ref{fig:bulk} using the same colorscheme.}
\end{figure}

\subsection{Boundary stabilizers}\label{sec:boundary-structure}

We first fix boundary stabilizers along the horizontal boundary of the lattice. Stabilizers along the vertical boundaries can be derived by swapping row and column indices. There are many possibilities to fix the missing boundary stabilizers. Here, we provide the one we find especially simple due to its symmetry. Furthermore, as we show later, such choice of stabilizers can be realized with a small qubit overhead in planar chips with only nearest-neighbour connectivity. Consider the top row of the unfolded code $uRM(m)$, as shown in Fig.~\ref{fig:edge}. It has $\nq = 2^l$ qubits which we label 1 to $\nq$. Then, a boundary stabilizer we denote $S(w,t)$ will measure the $Z$ parity on sets of 4 qubits with indices 
\begin{equation}\label{eq:s-stabilizers}
    \begin{aligned}
        S(w,t)
        &=
        \Big{[}
        \frac{\nq}{2} - \frac{w}{2} + wt, \frac{\nq}{2} - \frac{w}{2} + wt + 1,
        \\&
        \frac{\nq}{2} + \frac{w}{2} + wt, \frac{\nq}{2} + \frac{w}{2} + wt + 1
        \Big{]},
    \end{aligned}
\end{equation}
where $w = 2^s$ with $s \in [1,l-1]$ and $t \in [-\nq / (2w) + 1, \nq / (2w) - 1]$. Parameters $w$ and $t$ in Eq.~\eqref{eq:s-stabilizers} admit simple geometrical interpretation. Each weight-4 stabilizer $S(w,t)$ measures Pauli-$Z$ on two pairs of qubits, left and right pairs. Each of the two pairs correspond to two nearest-neighbour qubits. Then, $w$ corresponds to a number of qubits between outer qubits of the two pairs. Parameter $t$ in $S(w,t)$ corresponds to a shift of qubits in $S(w,0)$ by $tw$ steps along the horizontal axis. Hence, we call $w$ and $t$ distance and shift parameters, respectively. Then the following theorem holds.
\begin{theorem}\label{theorem:1} 
Bulk stabilizers together with boundary stabilizers defined by Eq.~\eqref{eq:s-stabilizers} form the $RM(1,m)$ code.
\end{theorem}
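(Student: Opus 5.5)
The plan is to exhibit the code defined by the bulk and boundary stabilizers explicitly as the space of affine Boolean functions on $\mathbb{F}_2^m$, under a suitable relabelling of the lattice sites. This gives equality with $RM(1,m)$ up to coordinate permutation directly. As a fallback, it is enough to verify the parameters of Eq.~\eqref{eq:rm-1-parameter} and invoke the uniqueness result of Ref.~\cite{chen2009notesreedmullercodes}. I treat even $m=2l$ first; the odd case is identical with a $2^h\times 2^l$ rectangle and $h\neq l$.

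\emph{Step 1 (bulk).} Let $c_{ij}$, $i,j\in\{1,\dots,2^l\}$, be a bit string on the lattice that satisfies all weight-4 plaquette constraints
\[
c_{ij}+c_{i,j+1}+c_{i+1,j}+c_{i+1,j+1}=0 .
\]
A standard induction over rows and columns shows that $c_{ij}=r_i+s_j$ for some vectors $r,s\in\mathbb{F}_2^{2^l}$. The pair $(r,s)$ is unique up to adding $1$ to both. The bulk code therefore has dimension $2^{l+1}-1$, which is consistent with the count $\nzb=(2^l-1)^2$.

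\emph{Step 2 (boundary constraints decouple).} A horizontal boundary stabilizer $S(w,t)$ acts on four qubits of the top row. On a bulk codeword it evaluates to
\[
s_a+s_{a+1}+s_b+s_{b+1}, \qquad b=a+w ,
\]
because $r_1$ appears four times and cancels. Write $d_j=s_j+s_{j+1}$ for $j=1,\dots,2^l-1$. The constraint then reads $d_a=d_{a+w}$. For $w=2^{s'}$, Eq.~\eqref{eq:s-stabilizers} gives $a\equiv 2^{l-1}-2^{s'-1}\pmod{2^{s'}}$, so $a$ is an odd multiple of $2^{s'-1}$. Let $\nu(j)$ be the $2$-adic valuation of $j$. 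Every constraint with parameter $s'$ thus identifies two consecutive positions with $\nu=s'-1$. As $t$ runs over its range, these identifications form a path through all $2^{l-s'}$ such positions. There are $2^{l-s'}-1$ such constraints, and summing over $s'$ recovers $\nzs=2^l-l-1$.

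Together, the constraints force $d_j=\lambda_{\nu(j)}$ for free bits $\lambda_0,\dots,\lambda_{l-1}$. The value $\nu=l-1$ occurs only at the single middle position $j=2^{l-1}$, so $\lambda_{l-1}$ is free without any constraint. Hence $s$ ranges over a space of dimension exactly $l+1$, and by symmetry so does $r$ under the vertical boundary stabilizers.

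\emph{Step 3 (identification with $RM(1,m)$).} Label column $j$ by the Gray code $g(j-1)\in\mathbb{F}_2^l$. Consecutive Gray codewords $g(j-1)$ and $g(j)$ differ exactly in bit $\nu(j)$. An affine function $f(x)=\lambda\cdot x+\mu$ therefore satisfies $f(g(j-1))+f(g(j))=\lambda_{\nu(j)}$. So the affine functions of $g$ satisfy all constraints of Step 2, and they form an $(l+1)$-dimensional space. The solution space of Step 2 also has dimension $l+1$, so the two spaces coincide: $s$ is exactly $RM(1,l)$ in Gray labelling, and likewise for $r$. Label site $(i,j)$ by $(g(i-1),g(j-1))\in\mathbb{F}_2^{2l}$. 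Then $c=r(x)+s(y)$ is precisely an affine function on $\mathbb{F}_2^{2l}$, and every affine function arises this way. The code is therefore $RM(1,m)$, with dimension $2(l+1)-1=m+1$ and distance $2^{m-1}$.

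The main obstacle is Step 2. One must check carefully from the index ranges of $w$ and $t$ in Eq.~\eqref{eq:s-stabilizers} that the identifications really connect \emph{all} positions of each valuation into a single chain, with no gaps and no stray constraints crossing valuation classes. I would first verify this on $l=3$ (Fig.~\ref{fig:bulk}) and then argue by the explicit arithmetic progression $a=2^{l-1}-2^{s'-1}+2^{s'}t$.

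For odd $m$, the same argument runs with $l$ and $h$ row and column lengths. This gives $(l+1)+(h+1)-1=m+1$ dimensions and again affine functions on $\mathbb{F}_2^{l+h}$.
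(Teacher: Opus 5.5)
Your proof is correct, and it takes a genuinely different route from the paper; the fallback you mention at the start is essentially what the paper does.

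\textbf{The paper's route.} It argues indirectly:
\begin{itemize}
\item Lemma~\ref{lemma:1} counts the $S(w,t)$ and shows they are independent by a pair-matching argument.
\item Lemma~\ref{lemma:2} propagates parity labels through the stabilizers and shows they are exactly the $2^m$ odd-size subsets of $\{0,\dots,m\}$.
\item Lemma~\ref{lemma:3} checks that each of the $m+1$ basic logical operators has weight $2^{m-1}$.
\end{itemize}
The theorem then follows from the uniqueness of $[2^m,m+1,2^{m-1}]$ codes in Ref.~\cite{chen2009notesreedmullercodes}.

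\textbf{Your route.} You solve the check equations outright: $c_{ij}=r_i+s_j$, then $d_a=d_{a+w}$, then the Gray labelling onto affine functions.

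The step you flag as the main obstacle is already closed by your own arithmetic progression. For fixed $s'$, $a$ runs from $2^{s'-1}$ to $2^l-3\cdot 2^{s'-1}$ in steps of $2^{s'}$. So the pairs $(a,a+2^{s'})$ link consecutively all $2^{l-s'}$ positions $j\le 2^l-1$ with $\nu(j)=s'-1$, ending at $2^l-2^{s'-1}$. The position $j=2^{l-1}$ is never touched.

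\textbf{What each approach buys.} Your route gives three things without any appeal to the external uniqueness theorem:
\begin{itemize}
\item an explicit coordinate permutation onto $RM(1,m)$;
\item the rank of the full stabilizer set, and hence independence, as a by-product;
\item the minimum distance over all nonzero codewords, rather than only over the $m+1$ generating logicals.
\end{itemize}
The paper's route instead hands you the parity labels directly. It reuses them to reconstruct the volume stabilizers and for the $k$-parity transversality argument. In your picture the same information is the statement that the logical $X$ operators of $uRM(m)$ are the supports of affine functions of the Gray labels.
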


The proof of this theorem relies on relies on Lemmas~\ref{lemma:1}, \ref{lemma:2}, \ref{lemma:3}, and the final proof is deferred to Sec.~\ref{sec:parity-code}.

\begin{lemma}\label{lemma:1} 
Bulk stabilizers together with boundary stabilizers defined by Eq.~\eqref{eq:s-stabilizers} form an independent set of $2^m - m - 1$ stabilizers on $2^m$ physical qubits, i.e., generate a code with parameters $[n,k]$ of Eq.~\eqref{eq:rm-1-parameter}, required for the $RM(1,m)$ code.
\end{lemma}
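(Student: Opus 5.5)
The plan is to prove the count and the independence separately, working over $\mathbb{F}_2^{2^m}$, where each $Z$-type stabilizer is identified with the indicator vector of its support. I treat even $m=2l$ on the $2^l\times 2^l$ lattice; odd $m$ should follow the same way on the rectangular lattice. I place the horizontal boundary stabilizers on the top row and the vertical ones on the left column, so that the two boundary rows/columns meet in the single corner qubit $(1,1)$. If the layout instead uses the bottom row and right column, the argument is identical after reflection.

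\emph{Counting.} There are $\nzb=(2^l-1)^2$ bulk plaquettes. For fixed $w=2^s$, the shift $t$ ranges over $t\in[-\nq/(2w)+1,\nq/(2w)-1]$, which gives $2^{l-s}-1$ values. Summing over $s=1,\dots,l-1$ gives $(2^l-2)-(l-1)=2^l-l-1=\nzs$ stabilizers per boundary, in agreement with Eq.~\eqref{eq:side-stabilizers}. The total is therefore $(2^l-1)^2+2(2^l-l-1)=2^{2l}-2l-1=2^m-m-1$. Once independence is established, the code has dimension $2^m-(2^m-m-1)=m+1$, which is the $[n,k]$ of Eq.~\eqref{eq:rm-1-parameter}.

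\emph{Independence.} I would proceed in three steps.

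First, assign to each bulk plaquette its bottom-right qubit. These qubits are distinct and all lie outside the top row and the left column. Suppose some $\mathbb{F}_2$-combination of bulk plaquettes is supported inside (top row)$\,\cup\,$(left column). Process the plaquettes in reverse lexicographic order of their bottom-right corners. The plaquette that comes first in this order is the only one containing its own bottom-right qubit, so its coefficient must vanish; by induction every coefficient vanishes. Consequently the bulk plaquettes are independent, and their span meets the span of the boundary stabilizers only in $0$.

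Second, show that the span of the horizontal boundary stabilizers meets the span of the vertical ones only in $0$. A common vector would be supported on the corner qubit alone. However, every generator has even weight, so the vector has even weight and must be $0$.

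Third, show that the horizontal stabilizers $S(w,t)$ are independent; the vertical case follows by symmetry. Here I again use a triangular argument, this time via the rightmost index $R(w,t)=\nq/2+w/2+wt+1$. Writing
\begin{equation*}
R-1-\tfrac{\nq}{2}=2^{s-1}(2t+1),
\end{equation*}
the $2$-adic valuation of the left-hand side recovers $s$, and the odd part then recovers $t$. Hence the map $(w,t)\mapsto R(w,t)$ is injective. Now take a vanishing combination of horizontal stabilizers and, among those with nonzero coefficient, pick the one with the largest $R$. Qubit $R$ is not the rightmost qubit of any other stabilizer in the combination. It could still appear as one of the other three qubits of a stabilizer whose rightmost qubit exceeds $R$, but such a stabilizer is absent from the combination by maximality. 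Hence qubit $R$ appears exactly once in the sum, contradicting the assumption that the combination vanishes.

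I expect this third step to be the main obstacle. The injectivity of the rightmost index is clean, but the triangularity claim requires the leading coordinate $R$ to be covered by no other stabilizer that is actually in the combination. Maximality of $R$ seems to settle this, but I would double-check it, together with the range of $t$, which must keep all four indices within $[1,\nq]$.
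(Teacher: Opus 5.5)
Your proof is correct and takes a different route from the paper's; it is also more complete. The counting is the same as the paper's.

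\textbf{Horizontal stabilizers.} Both you and the paper use the same $2$-adic identity $R-1-\nq/2=2^{s-1}(2t+1)$; the paper phrases it as uniqueness of ``right pairs''. The paper then does a case analysis on how the left pair $\{i,i+1\}$ of $S(w,t)$ could be covered by other stabilizers. It dismisses the remaining case of consecutive shifts $S(w,t),S(w,t\pm1)$ as ``independent by construction''. Your leading-index argument replaces this with a clean echelon argument. The step you were worried about holds. Every other qubit of a stabilizer is strictly smaller than its rightmost qubit. So a stabilizer containing qubit $R$ as a non-rightmost qubit has a strictly larger rightmost index, and maximality excludes it. The range of $t$ is also fine: it gives smallest index $w/2\ge 1$ and largest index $\nq-w/2+1\le\nq$.

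\textbf{Cross-independence.} The paper simply adds $\nzb+2\nzs$. It never shows that the bulk plaquettes are independent, that their span meets the boundary span only in $0$, or that the horizontal and vertical boundary spans intersect trivially. Your bottom-right-corner triangularization and your even-weight/corner-qubit argument supply exactly these missing pieces.

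\textbf{Other layouts.} Moving a boundary generator between the top and bottom rows changes it only by a product of bulk plaquettes along column strips. The span and the number of generators are unchanged, so your independence result also covers the paper's split layout.

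\textbf{Odd $m$.} On the $2^h\times 2^l$ lattice with $h+l=m$, your three steps go through verbatim. The only thing to add is the count $(2^h-1)(2^l-1)+(2^l-l-1)+(2^h-h-1)=2^m-m-1$.
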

We prove Lemma~\ref{lemma:1} in Appendix~\ref{app:lemma-1}. In short, the Lemma shows that the number of independent stabilizers defined by~\eqref{eq:s-stabilizers} along the horizontal boundary of the code lattice is exactly $\nzs$ of Eq.~\eqref{eq:side-stabilizers}, so the total number of independent stabilizers is $(2^{l}-1)^2 + 2\nzs = 2^{m}-m-1$. Since the proof applies independently to horizontal and vertical boundaries of the lattice, it is also valid for the case of odd $m$, with replacing $l$ with $h$ in Eqs.~\eqref{eq:side-stabilizers} and \eqref{eq:s-stabilizers}.

\subsection{Equivalence of $RM(1,m)$ and $uRM(m)$}\label{sec:parity-code}

The unfolded code $uRM(m)$ constructed above constitutes a classical LDPC code with the required parameters $[n,k]$ of Eq.~\eqref{eq:rm-1-parameter}. 
To show that the code $uRM(m)$ is in fact identical to the $RM(1,m)$ code, we still require to show that the distance of all logical qubits of $uRM(m)$ is $2^{m-1}$. For that, we employ the parity framework~\cite{PhysRevLett.129.180503}. We also use the parity code techniques to reconstruct the volume stabilizers used in post-selection. 

A classical error correcting code with $Z$ stabilizers only protects against bit-flip errors, hence, each logical $Z$ operator has support on a single physical qubit. Since the $uRM(m)$ code encodes $(m+1)$ logical qubits, we can pick any $(m+1)$ physical qubits on which a single-qubit $Z$ operator corresponds to a logical single-qubit $Z$. In the parity code language, such physical qubits are referred to as base qubits. Physical $Z$ operators applied to any of the remaining $2^m - m - 1$ physical qubits correspond to multi-qubit logical $Z$ operators. These physical qubits are referred to as parity qubits. To each qubit of the code we assign a parity label, which denotes a subset of logical qubits, such that a $Z$ operator on the physical qubit translated to a product of logical $\tilde  Z$ operators on qubits of the subset. As such, to each base qubit we assign only a parity label corresponding to one logical qubit and to each parity qubit we assign parity label corresponding to a few logical qubits. Then the following lemma holds.

\begin{lemma}\label{lemma:2} 
The parity labels of qubits in $uRM(m)$ code are all possible odd-length combinations of $m+1$ logical indices.
\end{lemma}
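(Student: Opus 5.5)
We would prove Lemma~\ref{lemma:2} by assigning labels explicitly, row by row, as sketched in the caption of Fig.~\ref{fig:bulk}. Take even $m=2l$ first; odd $m$ is the same argument with $h\neq l$. Each label is an element of $\mathbb{F}_2^{m+1}$, i.e.\ a subset of the $m+1$ logical indices. A physical $Z$ on a qubit with label $v$ acts as $\prod_{i\in v}\tilde Z_i$. Hence every $Z$ stabilizer forces the labels in its support to sum to zero, and the labels of all $2^m$ qubits must span $\mathbb{F}_2^{m+1}$. Since the code is fixed by Lemma~\ref{lemma:1}, the logical basis may be chosen freely, so it suffices to exhibit one consistent labelling.

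\textbf{Step 1.} Split the $m+1=2l+1$ indices into a common index $0$ (carried by the target qubit), $l$ ``column'' indices $a_1,\dots,a_l$ and $l$ ``row'' indices $b_1,\dots,b_l$. Label the bottom-right corner qubit $\{0\}$.

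\textbf{Step 2.} Label the bottom row with the $2^l$ odd-size subsets of $\{0,a_1,\dots,a_l\}$. The natural choice places the subset $\{0\}\cup\mathrm{supp}(x)$ or $\mathrm{supp}(x)$, whichever has odd size, at column position $x\in\mathbb{F}_2^l$, read as a binary number. The qubit count matches, since there are exactly $2^l$ odd subsets. We must then check that each horizontal boundary stabilizer $S(w,t)$ of Eq.~\eqref{eq:s-stabilizers} has labels summing to zero.
\begin{itemize}
\item For $w=2^s$, the two pairs are $(p,p+1)$ and $(p+w,p+w+1)$, with $p$ chosen so that the pairs are aligned.
\item Within a pair, positions differ in the lowest bit, so the pair sum is $\{a_1\}\cup\{0\}$ up to the parity correction.
\item The two pair sums coincide precisely because adding $w$ to an aligned position flips bits in a way common to both elements of the pair.
\end{itemize}
The precise bit-bookkeeping, including where the centring offset $\nq/2 - w/2$ lands in binary, is the main obstacle. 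If the binary-position labelling fails, the fallback is to treat the labelling as unknowns and solve the boundary constraints directly, showing that solutions are affine images of $\mathbb{F}_2^l$.

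\textbf{Step 3.} Label the right column in the same way, using odd subsets of $\{0,b_1,\dots,b_l\}$, with the shared corner $\{0\}$.

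\textbf{Step 4.} Propagate into the bulk. Each plaquette fixes its fourth label from the other three, which gives the qubit at (column $x$, row $y$) the label $c_x + r_y + \{0\}$. Here $c_x$ and $r_y$ are the bottom-row and right-column labels, and $\{0\}$ is the corner label. All three summands are odd, so every label is odd.

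\textbf{Step 5.} Count and conclude. As $(x,y)$ ranges over the grid, the label ranges over all sums of an odd subset of $\{0\}\cup A$ and an odd subset of $\{0\}\cup B$, plus $\{0\}$. Decompose any odd subset of all $2l+1$ indices by its $A$-part and its $B$-part; the $0$-membership is then forced by parity. This shows the map is a bijection onto the odd subsets, of which there are $2^{2l}=2^m$. This matches the qubit count, so every odd-length combination appears exactly once.

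Since the labels are determined up to a change of logical basis, they agree with the parity labels of any valid logical basis. This is also what feeds the distance argument used to complete Theorem~\ref{theorem:1}.
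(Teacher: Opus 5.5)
Your route follows the paper's construction. The caption of Fig.~\ref{fig:bulk} builds exactly this labelling: the bottom row gets all odd subsets of $l+1$ base indices, the right column likewise with a shared corner, and the bulk is fixed by plaquettes. Your Steps 4--5 are correct and sharper than the appendix proof, because the explicit bijection $(x,y)\mapsto c_x+r_y+\{0\}$ gives oddness, distinctness and exhaustiveness at once, whereas the paper infers distinctness from the independence of the stabilizers. The gap is Step 2, the only place where the specific $S(w,t)$ of Eq.~\eqref{eq:s-stabilizers} enter. You leave it unverified, and the justification you sketch is false for $w\ge 4$.

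Your bullets assume each pair is aligned, i.e.\ starts at an even position so that its two qubits differ only in the lowest bit. That holds only for $w=2$. You must work with 0-based positions (paper index minus one). With 1-based indices the check already fails: $S(2,0)$ at $l=3$ acts on $3,4,5,6$, and $3\oplus4\oplus5\oplus6=4\neq0$. In 0-based positions the left pair of $S(w,t)$, $w=2^s$, starts at $p$ with
\[
p+1=2^{l-1}-2^{s-1}+2^{s}t=2^{s-1}\bigl(2^{l-s}-1+2t\bigr),
\]
so $p$ is odd whenever $s\ge2$. For example, at $l=3$, $S(4,0)$ sits on positions $1,2,5,6$; positions $1$ and $2$ differ in two bits, and the pair sum is $\{a_1,a_2\}$, not $\{a_1\}\cup\{0\}$.

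The correct invariant is that the cofactor $2^{l-s}-1+2t$ is odd. Then $p+1\equiv2^{s-1}\pmod{2^s}$, so the pair never straddles a multiple of $w$. Consequently:
\begin{itemize}
\item $p$ and $p+1$ agree on all bits of place value at least $w$;
\item adding $w$ changes only those bits, and identically for both;
\item hence $p\oplus(p+w)=(p+1)\oplus(p+1+w)$, so the four positions XOR to zero.
\end{itemize}
Since your label map $x\mapsto(1+\mathbf 1\cdot x,\;x)\in\mathbb F_2^{l+1}$ is affine, the four labels then sum to zero. The range of $t$ keeps all positions in $\{0,\dots,2^l-1\}$, and the same computation handles the right column.

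Two minor points remain. First, the binary map puts $\{0\}$ at position $0$. To place it at the bottom-right corner, precompose with $x\mapsto x\oplus(2^l-1)$; this preserves every four-term XOR. Second, your last sentence holds only for logical bases obtained by choosing $m+1$ physical base qubits, not for an arbitrary logical basis. Such a change of basis sends odd labels to singletons, so it preserves $v\mapsto\mathbf 1\cdot v$ and hence the set of odd labels.
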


\begin{lemma}\label{lemma:3}
Distance of each logical operator in the $uRM(m)$ code is $2^{m-1}$.
\end{lemma}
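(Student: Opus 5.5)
We derive Lemma~\ref{lemma:3} from Lemma~\ref{lemma:2} by recasting the question about logical $X$ operators as a weight-distribution question about codewords. Index the $m+1$ logical qubits by $1,\dots,m+1$. By Lemma~\ref{lemma:2}, each of the $2^m$ physical qubits carries a distinct parity label $P\subseteq\{1,\dots,m+1\}$ with $|P|$ odd, and every odd-size subset occurs exactly once; there are indeed $2^{m}$ such subsets. The natural choice of logical $X$ operator $\tilde X_j$ is the product of physical $X$ over all qubits whose label contains $j$, as stated in the caption of Fig.~\ref{fig:bulk}. More generally, a logical $X$ operator is any product $\prod_{j\in A}\tilde X_j$ with $A\neq\emptyset$. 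Such a product acts on a physical qubit with label $P$ exactly when $|P\cap A|$ is odd. The claim to prove is therefore that, for every nonempty $A$, the number of odd subsets $P$ with $|P\cap A|$ odd is $2^{m-1}$. That makes every nontrivial codeword of the classical code have weight $2^{m-1}$, so the distance is $2^{m-1}$.

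The first step is to justify this correspondence. The $uRM(m)$ code is a classical code correcting bit flips, and its logical $X$'s are the codewords of the $Z$-stabilizer kernel. For a codeword $x$ and a parity qubit with label $P$, we have $x_P=\sum_{j\in P} a_j \pmod 2$, where $a\in\mathbb{F}_2^{m+1}$ collects the values on the base qubits. This is exactly the defining property of parity labels: a physical $Z$ on qubit $P$ equals $\prod_{j\in P}\tilde Z_j$. The codewords are thus the evaluations $P\mapsto a\cdot \mathbb{1}_P$ over the odd-weight points $\mathbb{1}_P\in\mathbb{F}_2^{m+1}$. Since there are $2^{m+1}$ choices of $a$, this recovers $k=m+1$, consistent with Lemma~\ref{lemma:1}.

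The second step is the counting. Fix a nonzero $a$, with support $A$ and $|A|=r\ge1$. We count vectors $v\in\mathbb{F}_2^{m+1}$ with $|v|$ odd and $a\cdot v=1$. Split $v=(v_A,v_{\bar A})$. The condition $a\cdot v=1$ means $|v_A|$ is odd. Oddness of $|v|$ then forces $|v_{\bar A}|$ to be even.

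1. The number of odd $v_A$ is $2^{r-1}$.
2. The number of even $v_{\bar A}$ is $2^{m-r}$ when $m+1-r\ge1$.

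The product is $2^{m-1}$. In the edge case $A=\{1,\dots,m+1\}$, $\bar A$ is empty and contains only the empty (even) vector. Then every odd $v$ satisfies $a\cdot v=1$, giving weight $2^m$. This is the all-ones vector on the odd points, which is the "weight-$2^m$ operator" mentioned in Sec.~\ref{sec:qrm}, i.e.\ the full-code logical.

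The third step handles this edge case. One way is to note that the single all-ones combination is equivalent, up to the promoted volume stabilizers, to lower-weight representatives. The other way is to observe that the minimum distance of the code is the minimum over nonzero codewords, which equals $2^{m-1}$, attained by each individual $\tilde X_j$ and each proper combination. Either route gives $d=2^{m-1}$ and matches Eq.~\eqref{eq:rm-1-parameter}. I would also remark that the resulting weight enumerator ($2^{m+1}-2$ words of weight $2^{m-1}$ and one of weight $2^m$) coincides with that of $RM(1,m)$, which is what Theorem~\ref{theorem:1} needs.

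The main obstacle is not the counting but making the correspondence rigorous: showing that the parity-label description fully determines the codewords, i.e.\ that the labels are consistent with every bulk and boundary stabilizer. This rests entirely on Lemma~\ref{lemma:2}, which I take as given. The edge-case interpretation of the weight-$2^m$ word also needs care, so that "distance of each logical operator" is read as minimum weight over equivalent representatives.
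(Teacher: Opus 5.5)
Your argument is correct, and its core is the same as the paper's: both take Lemma~\ref{lemma:2} as input and count odd-size labels. The paper only counts labels that contain a fixed index $i$. The rest of such a label must have even size, which gives $\sum_{k\,\mathrm{even}}\binom{m}{k}=2^{m-1}$. In other words, the paper computes the weight of each generator $\tilde X_i$, which is exactly your count with $|A|=1$. You instead treat every nonzero $a$, obtaining $2^{r-1}\cdot 2^{m-r}=2^{m-1}$ for proper $A$ and $2^m$ for $A=\{1,\dots,m+1\}$.

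The extra generality is worth having. The minimum distance of a linear code is the minimum weight over all nonzero codewords, and the weights of the $m+1$ generators alone do not determine it. So your count is what actually justifies $d=2^{m-1}$ in Eq.~\eqref{eq:rm-1-parameter} as it is invoked in the proof of Theorem~\ref{theorem:1}. You are also one line from bypassing the uniqueness theorem altogether. The odd-weight points of $\mathbb{F}_2^{m+1}$ form the affine hyperplane $\{v:\mathbf{1}\cdot v=1\}\cong\mathbb{F}_2^m$. Linear functionals restricted to this hyperplane are exactly the affine functions on $\mathbb{F}_2^m$, so the code is literally $RM(1,m)$ up to a permutation of coordinates.

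Two small corrections. First, your opening sentence says every nontrivial codeword has weight $2^{m-1}$, which your own edge case contradicts. Second, that edge case needs no separate ``third step'': a word of weight $2^m>2^{m-1}$ cannot lower the minimum. Drop the alternative route via equivalence ``up to the promoted volume stabilizers''. In the classical code $uRM(m)$ those operators are logicals rather than stabilizers, so each codeword has a unique representative.
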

The proof of Lemmas~\ref{lemma:2} and \ref{lemma:3} is provided in Appendices~\ref{app:lemma-2} and \ref{app:lemma-3}, respectively. The proof of Theorem~\ref{theorem:1} follows directly from Lemmas~\ref{lemma:1}--\ref{lemma:3}.

\textbf{Proof of Theorem~\ref{theorem:1}.}
Each of $(m+1)$ logical qubits of the $uRM(m)$ codes has distance $2^{m-1}$, yielding parameters~\eqref{eq:rm-1-parameter}. Due to the uniqueness property of the first-order RM codes, the parity-unfolded $uRM(m)$ code is equivalent to the $RM(1,m)$ code. \qed

Since in the $QRM(1,m)$ code, $m$ logical operators of the $RM(1,m)$ code become stabilizers, parity formalism allow to reconstruct each volume stabilizer in Fig.~\ref{fig:unfolded-distillation}. 


\begin{remark}
    As a remark, we note that transversality of the $Z_{m-2}$ gate in the $uRM(m)$ can be derived solely from the parity formalism and without referring to the QRM codes. In Appendix~\ref{app:transversality-criteria}, we introduce a new gate transversality criteria that we call $k$-parity of the parity code and show that it is equivalent to the $k$-orthogonality criteria.
\end{remark}

\subsection{Nearest-neighbour connectivity}\label{sec:nearest-neighbour}

Equation~\eqref{eq:s-stabilizers} describes the complete set of boundary stabilizers. However, it does not explicitly describe a physical realization compatible with nearest-neighbour qubit connectivity, which is not a trivial task. As such, stabilizer $S(4,0)$ shown in Fig.~\ref{fig:edge} has support on qubits blocked by stabilizer $S(2,-1)$. In this section, we show how to realize stabilizers $S(w,t)$ on a planar layout with only nearest-neighbour connectivity and small qubit overhead. 

We start by noting that stabilizers measured along horizontal boundaries can be measured on qubits belonging to either the top or the bottom boundary of the code, since they are equivalent up to a product of bulk stabilizers along columns. Hence, approximately only a half of the required stabilizers need to be measured along one boundary. As such, in Fig.~\ref{fig:bulk}, two stabilizers are measured along the top boundary and two along the bottom one. The same applies to vertical stabilizers. 

As we climb up the Clifford hierarchy, distance $w$ of a certain small subset of stabilizers $S(w,t)$ grows exponentially in $m$. As $w$ corresponds to a number of qubits between the outer qubits of $S(w,t)$, some long-range connectivity is required for a small number of the code qubits located along the boundary. To realize such long-range stabilizers, we add new layers of ancilla qubits around the boundaries of the code, with higher-distance stabilizers constructed withing layers further away from the bulk of the code. Schematically, this is illustrated in Fig.~\ref{fig:arches}~(a). Next, we introduce two constructions that allow to access parity of the required code qubits by ancilla qubits located in outer layers. We refer to such constructions as centre-blocking and side-blocking. As shown in Fig.~\ref{fig:arches}~(b), the former allows to measure the parity of each of outer pairs of stabilizer $S(w,t)$ by adding a new layer of ancillas, but the parity of the inner two qubits can not be accessed. Similarly, the latter allows to measure only parity of the inner pairs of qubits while blocking the parity of the outer pairs. Because parities of the inner and outer pairs of qubits within each stabilizer can be accessed from the top and bottom boundaries, pairs of qubits required to measure a new stabilizer $S(w,t)$ are always accessible at the outer layer of the qubit layout--either the top or the bottom one. 

By placing stabilizers $S(w,t)$ with larger $w$ to the outer layers of the code, each distance-$w$ stabilizer $S(w,t)$ can be measured using exactly $w$ ancilla qubits. Since a distance-$w$ stabilizer has support on qubits which are $w$ qubits apart, our construction yields the smallest possible~(albeit possibly not unique) configuration that measures all stabilizers of unfolded codes $uRM(m)$ with nearest-neighbour connectivity.

\begin{figure}[t]
\includegraphics[width=0.95\columnwidth]{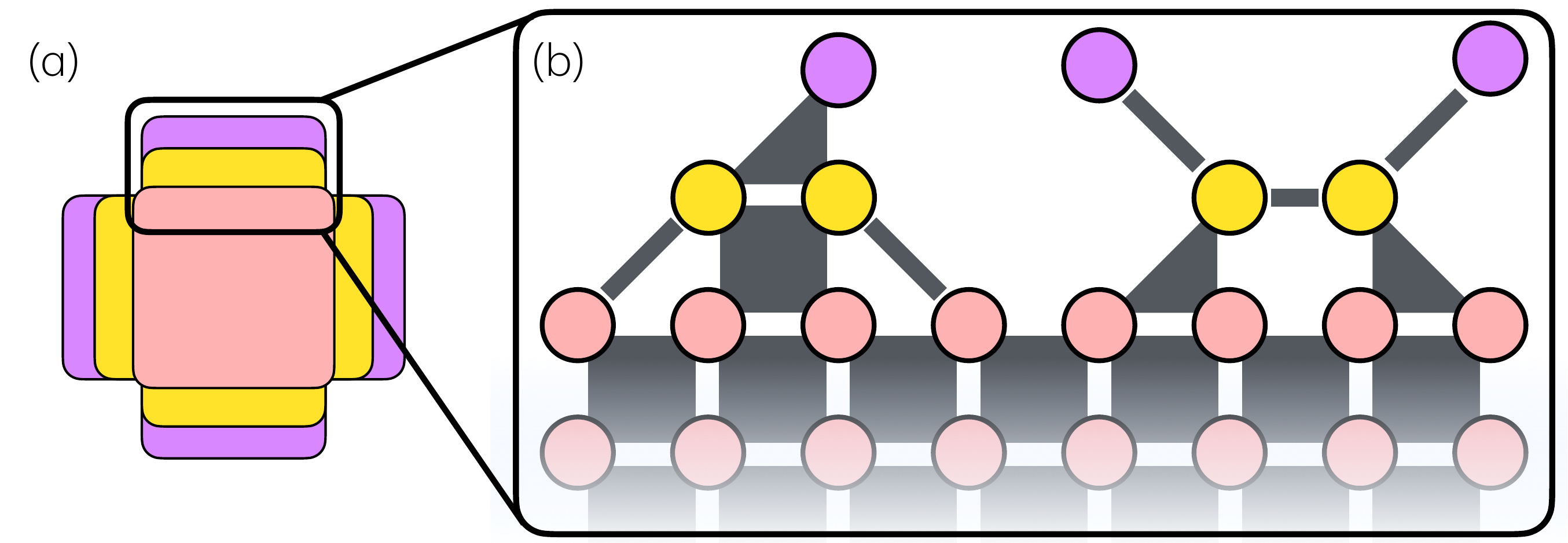}
\caption{\label{fig:arches} Construction of stabilizers $S(w,t)$ from Eq.~\eqref{eq:s-stabilizers} and Fig.~\ref{fig:edge} using nearest-neighbour connectivity. (a)~Schematic construction of qubit layout Salmon area corresponds to the code+target qubits. Yellow stripes corresponds to layers of ancilla qubits required to measure weight-2 stabilizers $S(2,t)$. Purples stripes corresponds to layers of ancilla qubits required to measure weight-4 stabilizers $S(4,t)$. To measure higher-distance stabilizers, more layers needed to be added. (b)~A zoom into a qubit structure for measuring boundary stabilizers. Shown are examples of two boundary distance-2 stabilizers $S(2,-1)$~(left) and $S(2,1)$~(right). Both stabilizers measure parity of 4 subsequent qubits and realized with nearest-neighbour connectivity. Each boundary stabilizer is measured by the means of two ancillas~(yellow) and three stabilizers. However, $S(2,-1)$ is measured using side-blocking construction. That is, the parity of the inner two qubits can be accessed from the next layer of the ancilla qubits~(purple), while parities of the side pairs are blocked. Similarly, the right construction used to measure $S(2,1)$ is centre-blocking, since only the parity of left and right pairs of qubits can be accessed from the top, while the parity of the inner pair is blocked.}
\end{figure}

Figure~\ref{fig:bulk} shows an explicit form of boundary stabilizers for $m=2l=6$, constructed as described above. As expected, the number of independent boundary stabilizers is $2^{l+1} - 2l - 2 = 8$, 6 of which are distance-2 and 2 and distance-4. Note also that our construction reproduces the unfolded $T$-distillation scheme of Ref.~\cite{ruiz2025unfoldeddistillationlowcostmagic} for $m=4$. In particular, the $2^4=16$ qubit layout requires $2^2 - 2 - 1 = 1$ stabilizer on the boundary of each type, described by $S(2,0)$. This stabilizer can be measured by using either a centre-blocking, or a side-blocking construction. 
In Appendix~\ref{app:nn-layout}, we provide explicit constructions of boundary stabilizers for distilling gates up to $Z_{k=8}$, i.e., up to $T^{1/64}$.

\subsection{Noisy target qubits}

Above we considered an idealized case with noiseless target qubit. In any real situation, the target qubit is itself noisy and has to be protected by an error correcting code. With infinite bias, we replace the target qubit with a repetition code, with the first qubit of the repetition code attached to the shortened $\overline{uRM}(m)$ code, which we refer to as interface qubit. The logical $Z$ operator on the repetition code translates to a single-qubit Pauli-$Z$ on the interface qubit. The situation is hence identical to the ideal target case, with the interface qubit acquiring a parity index 0, and the rest of the repetition code attached to it. This construction is illustrated in Fig.~\ref{fig:bulk}. 

Under finite noise bias, the target qubit has to be protected against both bit flips and dephasing. The shortened unfolded code $\overline{uRM}(m)$ in this case should be entangled with the target qubit on a logical level. As an example, when the surface code is used to protect the target logical qubit, the entangling CNOT gate of Fig.~\ref{fig:scheme} can be realised in a lattice-surgery manner. An example of a target qubit encoded in a rectangular surface code is illustrated in Ref.~\cite{ruiz2025unfoldeddistillationlowcostmagic}. In this case, the entire logical $Z$ operator of the surface code acquires the parity label 0. The discussion from the preceding sections remains valid, with the target qubit replaced by a logical qubit. We furthermore note that replacing the physical qubit with a logical one can only increase the code $X$-distance, hence, can only improve the protection against the dominant $X$ error during encoding.  

\section{Error analysis}\label{sec:error-analysis}

We now turn to the performance analysis of the parity-unfolded architecture. First, in Sec.~\ref{subsec:analytical}, we provide simple analytical consideration, in order to develop intuition about the qualitative behaviour of logical errors as we climb up the Clifford hierarchy. We then proceed to a full circuit simulation of the distillation experiment in Sec.~\ref{subsec:numerical}  

\subsection{Analytical analysis}\label{subsec:analytical}

Let us analyze how errors propagate in the distillation circuit of Fig.~\ref{fig:unfolded-distillation} and affect the fidelity of the output logical state. For simplicity, we again assume a noiseless target qubit, so only a single physical qubit is used to hold the distilled state in the end of the protocol.

We start by revisiting the distillation protocol of Fig.~\ref{fig:unfolded-distillation}. It begins with initializing $2^m$ physical qubits in a $(+1)$ eigenstate of Pauli-$X$ operator. This guarantees that a product of Pauli-$X$ operators on any subset of qubits is in the $(+1)$ eigenstate, i.e., volume stabilizers of the shortened $\overline{QRM}(1,m)$ code are fixed to $(+1)$. Face stabilizers of the $\overline{QRM}(1,m)$ code are fixed by measuring all stabilizers of the $uRM(m)$, which also entangles the encoded qubit with the target qubit. Since all Clifford elements, including CNOT gates used to measure stabilizers and ancilla measurements, are noisy, several rounds of syndrome extractions $\nr$ are needed to encode the state into the QRM codes with high fidelity. Subsequently, a layer of $Z^{\dagger}_{m-2}$ gates are applied to all qubits except the target qubit, which realizes a transversal logical $\tilde{Z}_{m-2}$ gate on the encoded qubit. Finally, all the qubits but the target are measured and volume stabilizers are reconstructed.

As introduced in Sec.~\ref{sec:qrm}, the $X$-distance of the $\overline{QRM}(1,m)$ is governed by the face stabilizers of the code and grows exponentially with $m$ as $d_x=2^{m-1}$. Here, we assume a noise level below the error correcting threshold of the parity-unfolded code $uRM(m)$. Furthermore, the number of syndrome extraction rounds $\nr$ is assumed to be constant for any considered $m$, but large enough to reduce the logical error rate of the unfolded code, governed by distance $d_x$ well below the total logical error rate of the distillation limited by distance-3 Pauli-$Z$ errors. In this setting, the logical error rate due to $X$ errors can be considered vanishingly small. 

Since the $Z$-distance of the code is governed by volume stabilizers and is fixed to $d_z=3$ for any $m$, weight-2 and 1 errors are always detectable. Some weight-3 Pauli-$Z$ configurations are undetectable, and cause a logical fault in the distilled state. A simple counting argument shows that there are $\binom{2^{k+2} - 1}{2}/3$ such undetectable combinations in the scheme used for distilling $Z_k$ gates, yielding the total distillation error rate
\begin{equation}\label{eq:distillation-ler}
\begin{aligned}
    P_{\textrm{dist}}(k)
    &=
    \frac{1}{3}
    \binom{2^{k+2} - 1}{2}
    p_{\textrm{eff},k}^3,
\end{aligned}
\end{equation}
where $p_{\textrm{eff},k}$ is the probability of a physical Pauli-$Z$ error affecting the final $X$ measurement when distilling the state $\ket{Z_k}$. Hence, behaviour of $p_{\textrm{eff},k}$ with $k$ determines the logical error rate of distillation. In the following, we will consider a regime where $p_{\textrm{eff},k}$ grows monotonically with $k$, as well as a regime where $p_{\textrm{eff},k}$ reduces with $k$ faster than the combinatorial term grows, leading to suppressed logical errors.

There are various sources of $Z$ errors affecting qubits before $X$ measurement. The first contribution is due to residual $X$ errors taking place just before the application of $Z_{k}$ gates. Since the gate $Z_{k}$ is not bias-preserving, it turns the residual $X$ error into a combination of $X$ and $Y$ errors, with the latter affecting the final $X$ measurement. Such a residual error can occur due to errors undetectable by stabilizers of the $uRM(m)$ code. In particular, a $X$ error on data qubits of CNOT gates used during the last round of syndrome extraction are not detectable by face stabilizers. Noise in each component of the unfolded code can be described by a Pauli channel
\begin{equation}\label{eq:1q-noise-model}
\begin{aligned}
    \mathcal{E}_1(\rho)
    =
    \sum_{P \in \{I,X,Y,Z\}}
    p_P
    P \rho P,    
\end{aligned}
\end{equation}
where $p_I$ is the gate fidelity and $p_P$ with $P \in \{X,Y,Z\}$ is the probability of the corresponding error. Similarly, a two-qubit gate between pair of qubits 1 and 2 is subject to noise described by a channel
\begin{equation}\label{eq:2q-noise-model}
\begin{aligned}
    \mathcal{E}_2(\rho)
    =
    \sum_{P_1,P_2 \in \{I,X,Y,Z\}}
    p_{P_1,P_2}
    P_1P_2 \rho P_1P_2.
\end{aligned}
\end{equation}
Under infinite bias, $p_X = p$ and $p_{IX} = p_{XI} = p_{XX} = p/3$, and other components are zero. With this, the probability of an undetectable error on the last-round CNOT is $p^{\textrm{CNOT}} = p_{IX} + p_{XX} = 2p/3$. The residual $X$ error is subsequently a subject to a $Z_k$ gate that transforms it as
\begin{equation}\label{eq:error-commutation}
    Z_k X Z_k^{\dagger}
    =
    \cos{\Big{(}\frac{\pi}{2^k}\Big{)}} X
    +
    \sin{\Big{(}\frac{\pi}{2^k}\Big{)}} Y,
\end{equation}
where the second term anti-commutes with the final $X$ measurement, hence contributing to the effective physical error rate $p_{\textrm{eff},k}$.

In addition to the residual errors, a noisy non-Clifford gate $Z_{k}$ produces Pauli noise described by a channel
\begin{equation}\label{eq:nc-noise-model}
    \mathcal{E}'_{k}(\rho)
    =
    \sum_{P \in \{I,X,Y,Z\}}
    q_P(k)
    P \rho P,
\end{equation}
where noise amplitudes $q_P(k)$ are, in principle, different from  the ones describing noise in Clifford gates. Unlike other components of the scheme, the non-Clifford gates does not have intrinsic noise bias. Hence, we consider symmetric depolarizing noise described by $q(k)\coloneqq q_X(k)=q_Y(k)=q_Z(k)$. It is, however, reasonable to assume an error probability scaling with the rotation angle $\pi/2^k$. Indeed, the duration of the pulse used to implement a $Z_k$ rotation scales as $1/2^k$ when Rabi frequency is kept constant, leading to the proportional scaling of accumulated noise. Alternatively, if errors during the gate occur predominantly due to non-adiabatic excitations, reducing Rabi frequency proportional to the rotation angle at constant pulse duration allow to significantly reduce non-adiabatic effects. This model agrees with experimental results in cat qubits. As such, Ref.~\cite{10.1038/s41467-017-00045-1} demonstrated reduction of noise approximately by a factor of 2 in $T$ gates compared to a $\pi/2$ rotation. Hence, we consider this model realistic at least for small values of $k$. In the following we will consider non-Clifford gates operating in two noise regimes. In a constant noise regime, the noise model does not depend on $k$,
\begin{equation}\label{eq:noise-constant}
    q{(k)}
    =
    q_X=q_Y=q_Z
    =
    \frac{p}{3}.
\end{equation}
In a scaled noise regime, the noise scales proportionally to the rotation angle $\pi/2^k$,
\begin{equation}\label{eq:noise-scaled}
    q{(k)}
    =
    q_X(k)=q_Y(k)=q_Z(k)
    =
    \frac{p}{3}
    \frac{1}{2^{k-2}},
\end{equation}
which is chosen such that for $T$ gates the noise is identical to the noise of Pauli and Clifford gates.

\begin{figure}[t]
\includegraphics[width=0.95\columnwidth]{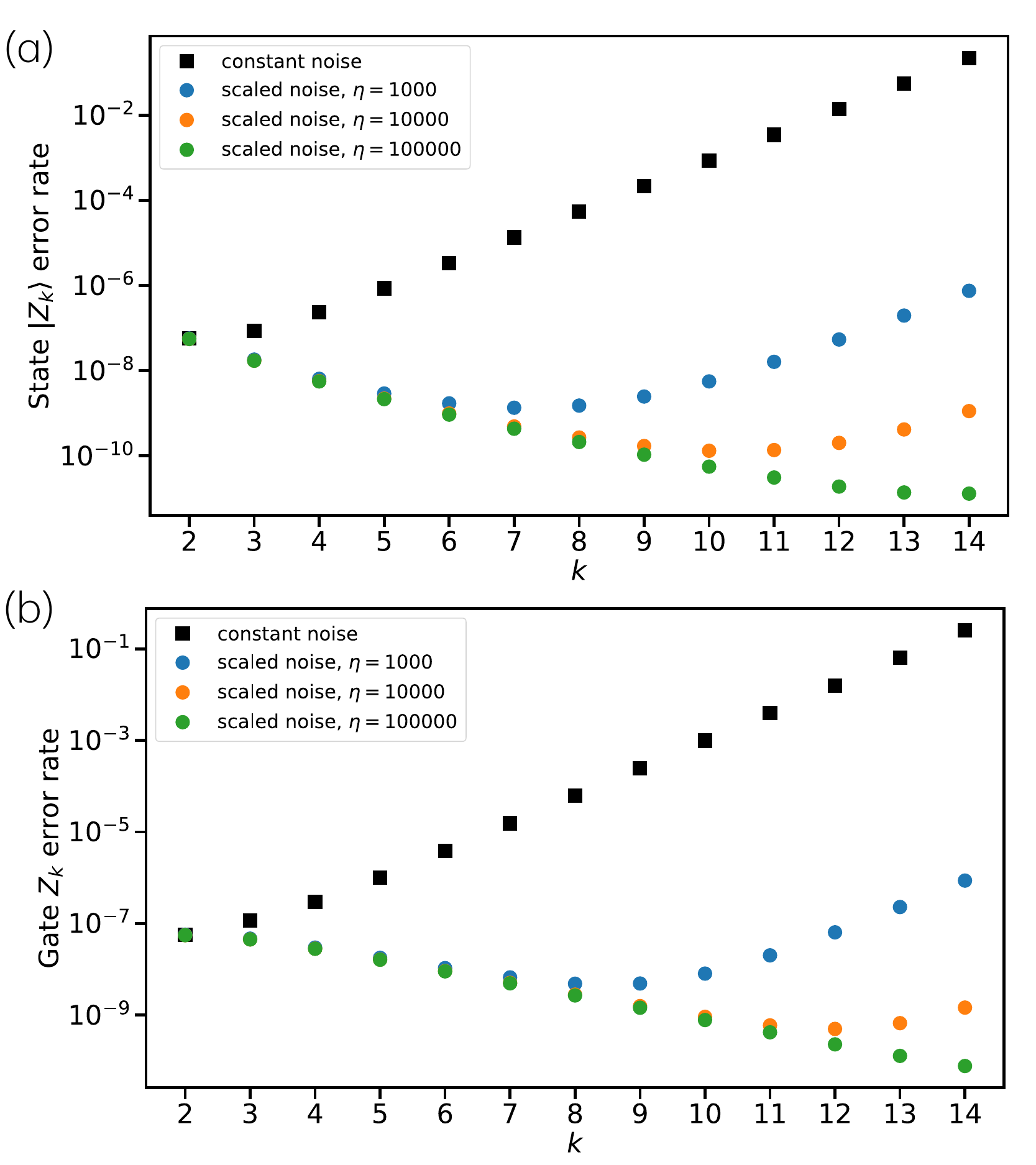}
\caption{\label{fig:ler}~Logical error rates for different values of bias $\eta$. (a)~Logical error rate of a single-round distillation of state $\ket{Z_k}$ calculated using Eqs.~\eqref{eq:distillation-ler} and \eqref{eq:p-eff}. Curves for two different noise models are shown. Squares correspond to a constant noise model of Eq.~\eqref{eq:noise-constant}, i.e., a situation where physical $Z_k$ rotations yield identical noise for all $k$. Since in this regime non-Clifford noise $q(k)$ dominates over other terms in Eq.~\eqref{eq:p-eff},  dependence on $\eta$ is unnoticeable and only a single curve is shown. Dots correspond to a scaled error model defined in Eq.~\eqref{eq:noise-scaled}.
(b)~Logical error rate of a full sequence of gates required to implement logical rotation $\tilde{Z}_k$ deterministically, calculated according to Eq.~\eqref{eq:cisc-seq-error}. In both plots, we use $p=10^{-3}$ and $\nr=10$.}
\end{figure}

Finally, in the finite-bias regime, a low-rate error can occur during $Z$ stabilizer measurements of parity-unfolded code with probability $\approx p/\eta$, where $\eta = p_Z/(p_X+p_Y)$ is noise bias. These errors accumulate during $\nr$ rounds of syndrome extraction and commute through the layer of non-Clifford gates unchanged, contributing to the total effective $Z$ noise at the final measurement. The total probability of a $Z$ error prior to $X$ measurement in a single physical qubit then reads
\begin{equation}\label{eq:p-eff}
    \begin{aligned}
    p_{\textrm{eff},k}
    &=
    p
    \Big{[}
    \frac{2}{3}
    \sin^2{\Big{(}\frac{\pi}{2^k}\Big{)}}
    +
    \frac{\nr}{\eta}    
    \Big{]}
    +
    2q(k),
\end{aligned}
\end{equation}
with $q(k)$ defied in either Eq.~\eqref{eq:noise-constant} or Eq.~\eqref{eq:noise-scaled}, depending on a noise regime. The factor 2 in the last term accounts for 2 out of 3 components of $q(k)$ anti-commuting with the $X$ measurement.

The two noise regimes result in qualitatively different scaling of the distillation fidelity with $k$. With a constant noise model, the logical error rate grows monotonically with $k$ due to a larger number of harmful weight-3 configurations in Eq.~\eqref{eq:distillation-ler}. That is, the fidelity of parity-unfolded distillation degrades as we climb up the Clifford hierarchy. A similar result has been previously demonstrated in other works on higher-level Clifford distillation~\cite{landahl2013complexinstructionsetcomputing}. However, we note that even this result should not be discouraging. As we will show in the following sections, even in this worst-case regime, the total error of approximating small-angle rotations using extended gate sets can be reduced due to shorter gate sequences required for synthesis.  

When noise of non-Clifford gate scales proportionally to the rotation angle, as in Eq.~\eqref{eq:noise-scaled}, we observe a qualitatively different behaviour. The logical error rate first decreases with $k$ until it achieves minimum at $k=k_{\textrm{th}}$. In this regime, suppression of the effective error rate rate in Eq.~\eqref{eq:p-eff} dominates over the growths of the number of harmful weight-3 error paths in Eq.~\eqref{eq:distillation-ler}. Eventually, the effective physical error rate saturates at $p\nr/\eta$, while the number of combinations in Eq.~\eqref{eq:distillation-ler} continues growing, leading to the increase in the logical error rate. Stronger bias $\eta$ shifts $k_{\textrm{th}}$ to higher values. 

Distillation of state $\ket{Z_{k}}$ itself is not the end goal of state injection protocols. The distilled state is subsequently consumed in a quantum teleportation circuit that applies the rotation $Z_{k}$ to the qubits of the data block, as explained in Fig.~\ref{fig:scheme}. Gate teleportation is inherently probabilistic and half of the time will produce rotation $\tilde{Z}^{\dagger}$ instead of $\tilde{Z}$. However, the incorrect sign will be detected during teleportation, indicating that it has to be corrected, which can be done by teleporting a rotation belonging to the previous level of the Clifford hierarchy $\tilde{Z}_{k-1} = \tilde{Z}_{k}^2$. Hence, deterministic application of the gate $Z_k$ forms a sequence of states $Z_k, Z_{k-1}..$, where distillation of the state $\ket{Z_{k'<k}}$ is requried with a probability
\begin{equation}\label{eq:p-smaller-k}
    P(k \rightarrow k')
    =
    \frac{1}{2^{k-k'}}.
\end{equation}
Then the total logical error rate generated by a chain of distillations required to apply $Z_k$ is 
\begin{equation}\label{eq:cisc-seq-error}
    \begin{aligned}
    P_{\textrm{gate}}(k)
    &=
    \sum_{2 \leq j \leq k}
    \frac{P_{\textrm{dist}}(j)}{2^{k-j}}
    \leq
    2
    \Big{(}
    1
    -
    2^{1-k}
    \Big{)}
    \mathrm{max}_{j \leq k}
    \Big{(}
    P_{\textrm{dist}}(j)
    \Big{)}
    \\&<
    2\mathrm{max}_{j \leq k}
    \Big{(}
    P_{\textrm{dist}}(j)
    \Big{)},
\end{aligned}
\end{equation}
where the maximum is over all levels of the Clifford hierarchy below the target rotation. From the last inequality, the logical error rate of applying the distilled gate $Z_k$ is upper bounded by twice the error rate of the noisiest $Z_{k' \leq k}$. Since under the scaled noise model the logical error rate monotonically reduces with $k$ for $k \leq k_{\textrm{th}}$, logical gate error in this regime is upper-bounded by $2P_{\textrm{dist}}(2)$. Logical error rates of the full sequence of distilled gates required for rotation $Z_k$ are calculated numerically according to Eq.~\eqref{eq:cisc-seq-error} and shown in Fig.~\ref{fig:ler}~(b). As in the case of a single state distillation under scaled non-Clifford noise, the total logical error rate of the sequence reduce with $k$ until $p_{\textrm{eff},k}$ in Eq.~\eqref{eq:p-eff} achieves saturation due a finite bias. 

\subsection{Numerical analysis}\label{subsec:numerical}


We have constructed quantum circuits for unfolded distillation experiments and numerically simulated distillation of gates for the first few levels of the Clifford hierarchy. Simulations are performed under a standard circuit-level noise with preparation, measurement, and gate errors described by Eqs.~\eqref{eq:1q-noise-model}, \eqref{eq:2q-noise-model}, and \eqref{eq:nc-noise-model} using both constant~[Eq.~\eqref{eq:noise-constant}] and scaled~[Eq.~\eqref{eq:noise-scaled}] error models of the non-Clifford gates. 

\begin{figure}[t]
\includegraphics[width=0.95\columnwidth]{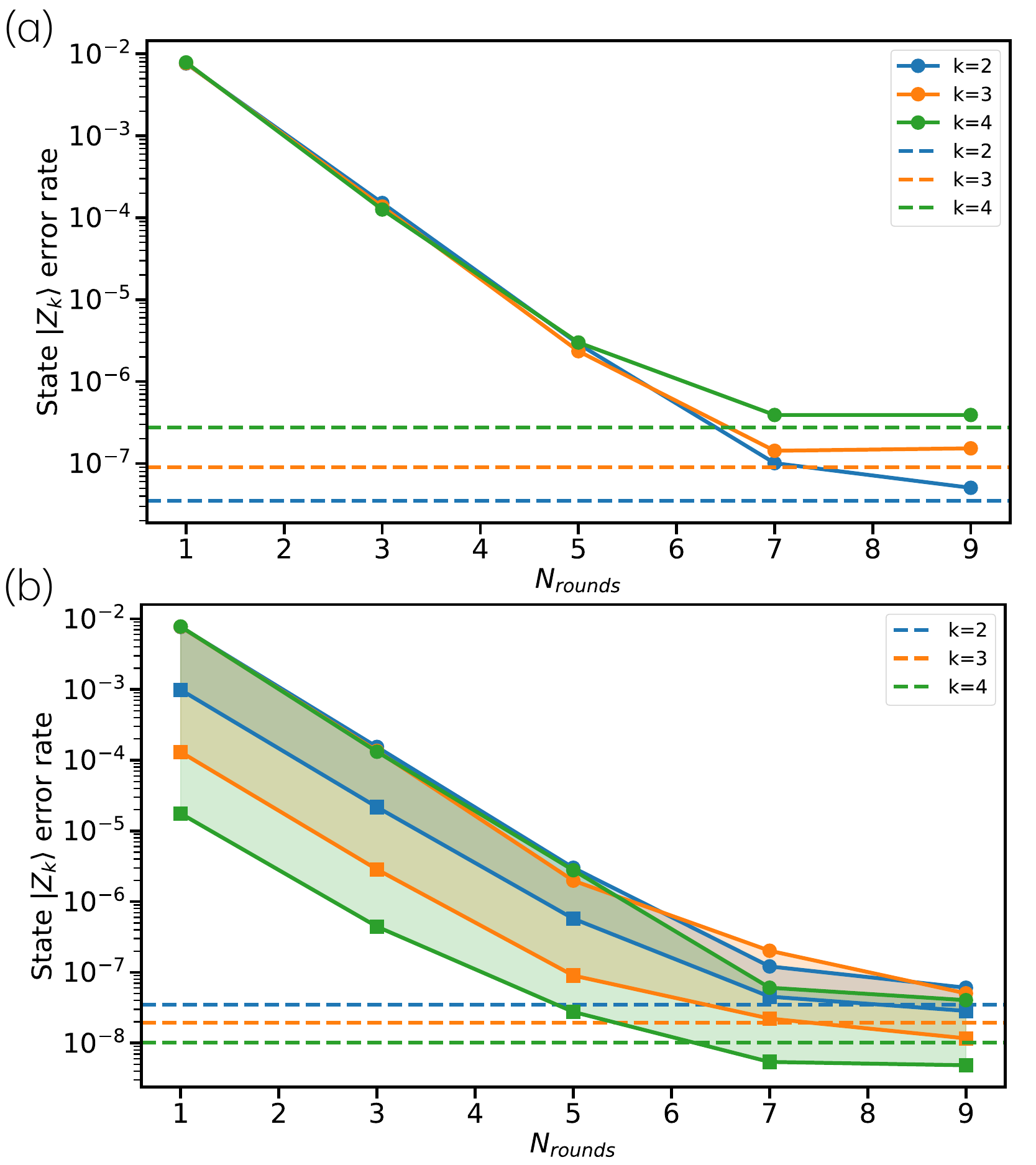}
\caption{\label{fig:ler-simulation}~Simulated logical error rates of the $Z_k$ state prepared using parity-unfolded distillation with (a)~constant and (b)~scaled non-Clifford noise. Dashed lines are the smallest theoretically achievable logical error rates calculated according to Eqs.~\eqref{eq:distillation-ler} and \eqref{eq:p-eff} and shown in Fig.~\ref{fig:ler}~(a). Since measurements of the $uRM(m)$ stabilizers are noisy, a minimum number of rounds is required to reach the best distillation logical error rate of Eq.~\eqref{eq:distillation-ler}. In (a), the curves, in fact, show the upper bound on the logical error rate, since our simulations overestimate the amount of noise due to $Z_k$ being replaced by $T$ gates. In~(b), we show both the upper- and lower-bound logical errors, as explained in the text. The true error rate lies in between the two bounds. Physical error probability of state preparation, measurement, and Clifford gates is assumed to be pure bit-flip with probability $p=10^{-3}$.}
\end{figure}

We use a few assumptions to simplify the analysis. Firstly, we simulate parity-unfolded layouts with long-range boundary stabilizers of Fig.~\ref{fig:edge} rather than composite boundary stabilizers of Fig.~\ref{fig:bulk}. Since the latter always yields larger code distances against Pauli-$X$ errors, the logical error rate derived in our simulations serves as an upper bound on the error rate achievable using a layout with composite stabilizers. Secondly, for all circuit elements except the non-Clifford gates, we assume a regime of infinite bias, so a repetition code of distance identical to the distance of the unfolded code can be used to protect the target qubit. Finally, we do one more modification to the parity-unfolded distillation circuit of Fig.~\ref{fig:unfolded-distillation} by executing half of the syndrome extraction rounds after the layer of non-Clifford gates. By doing so, we reduce the effect of measurement errors on the distilled state, since measurement errors preceding the non-Clifford gate will be corrected in the subsequent $\nr/2$ rounds of error correction. Without this, measurement error in the last rounds of syndrome extraction would go undetectable and lower-bound the achievable logical error rate. An alternative to this would be to use pre-selection used in Ref.~\cite{ruiz2025unfoldeddistillationlowcostmagic}. In a scheme with pre-selection, the measured syndrome is accepted when the measured syndrome does change within the $n_{\textrm{silent}}$ rounds of syndrome extraction. Such pre-selection discards experiments where errors take place just before the $Z_k$ gate, reducing the effect of measurement errors on the distilled state. However, the acceptance probability is suppressed exponentially with $k$, making the scheme incompatible with distillation of high-order rotations. 

Simulated logical error rates for the first few levels of the Clifford hierarchy are shown in Fig.~\ref{fig:ler-simulation}, with panels (a) and (b) corresponding to, respectively, the constant~[Eq.~\eqref{eq:noise-constant}] and scaled~[Eq.~\eqref{eq:noise-scaled}] noise of the non-Clifford gate. In agreement with the analytical results of Fig.~\ref{fig:ler}, the best achievable logical error rate increases with $k$ for the former and decreases for the latter. Since measurements of the $uRM(m)$ stabilizers are noisy, syndrome extraction needs to be repeated $\nr$ times, which plays the role of the temporal code distance. While the spatial distance of the parity-unfolded code $uRM(m)$
grows exponentially with $m$, $\nr$ needs not to be increased in a similar manner. Instead, we merely need to choose a minimum number of rounds $\nr$ required to reach the best distillation logical error rate of Eq.~\eqref{eq:distillation-ler}. Increasing $\nr$ beyond that does not suppress the distillation logical error rate any further. 

Our simulations show that the true minimum error rates of Eq.~\eqref{eq:distillation-ler} can be achieved using approximately the same number of syndrome extraction rounds $\nr = 9$ for all simulated values of $k$. In fact, for the constant non-Clifford noise model, saturation with higher $k$ is achieved even faster. This determines the choice $\nr=10$ made for all $k$ in the previous section. Importantly, this also means that temporal overhead of performing error correction on a parity-unfolded code does not grow with $k$. 

When the scaled noise model is used for the non-Clifford gates, Fig.~\ref{fig:ler-simulation}~(b) shows two curves for each value of $k$. 
Since simulating non-Clifford circuits is computationally intractable, we replace non-Clifford gates $Z_k$ with Clifford $T$ gates for each $k$. As a consequence, any $X$ error present in a qubit before the layer of $Z_k$ gates turns into $Y$ error with a probability of $1/2$ instead of $\sin^2(\pi/2^k)$, as it would in the real non-Clifford circuit, according to Eq.~\eqref{eq:error-commutation}. Hence, our simulations overestimate this type of error by a factor of $2/\sin^2(\pi/2^k)$, and yield an upper bound on the true logical error rate. We can extract the lower bound on the logical error rate by manually re-scaling the output logical error rate. To do that, we first extract the residual noise from the measured distillation error of Eq.~\eqref{eq:distillation-ler} as
\begin{equation}
    p_{\textrm{residual}}
    =
    p_{\textrm{eff},k}
    -
    \frac{2p}{3}\frac{1}{2^{k-2}}.
\end{equation}
We then re-scale it and use the new efficient $p_{\textrm{eff},k}$ in Eq.~\eqref{eq:distillation-ler},
\begin{equation}
    p_{\textrm{eff},k}
    =
    \frac{p_{\textrm{residual}}}{2^{k-2}}
    +
    \frac{2p}{3}\frac{1}{2^{k-2}}.
\end{equation}
As shown in Fig.~\ref{fig:ler-simulation}, the corresponding curve saturates slightly below the smallest expected logical error rate, hence being the lower-bound value. The true fidelity of the distillation is bounded within the shaded area in Fig.~\ref{fig:ler-simulation}, which includes the expected logical error rate. 

The code for constructing and simulating the performance of unfolded layouts, along with the simulation data, are available at~\cite{unfolding-simulation-repo}. Our implementation uses qLDPC Python package~\cite{perlin2023qldpc} for simulating stabilizer QECCs.

\section{Resource overheads}\label{sec:resource}

Finally, we compute resource overheads required for certain important subroutines with parity-unfolded architecture and benchmark them against overheads of the unfolded (Clifford + $T$) gate set. We denote $C_k$ a set of unfolded gates that contains all non-Clifford gates from the levels of the Clifford hierarchy up to the level $(k+1)$. Because we are focused on the distillation of non-Clifford gates, Clifford gates are considered freely available at no cost. With this, (Clifford+$T$) becomes $C_2$, (Clifford+$T$+$\sqrt{T}$) becomes $C_3$, etc.

Resource overhead is calculated in terms of space-time cost required to apply a gate, as given by the number of physical qubits times the circuit depth. Furthermore, we are interested in the cost of deterministic gate as a meaningful metric. That is, the cost of applying the rotation $Z_k$ to the logical qubit of the data block is given by the the cost of distilling $Z_k$ plus the cost of distilling all the corrective gates $Z_{k'<k}$ weighted by the probability~\eqref{eq:p-smaller-k}, with Clifford gates~($k'=1$) considered free. 

We will focus on two important use cases. In Sec.~\ref{subsec:cost-native}, we calculate the cost of executing fault-tolerant rotations $Z_k$, which emerge naturally in a number of important quantum computing subroutines, such as the Quantum Fourier Transform or phase estimation. As these rotations are native to our scheme, one can expect the biggest advantage in this case. In Sec.~\ref{subsec:cost-arbitrary}, we compute the cost of approximating an arbitrary single-qubit unitary $U$ using unfolded $C_2$ and extended $C_{k>2}$ sets, and observe that the latter allows to significantly reduce both the cost and total logical error rate.

\subsection{Native $Z_k$ rotations}\label{subsec:cost-native}

Let us calculate the total space-time cost of applying the gate $Z_{k}$. This gate is transversal in the code $\overline{QRM}(1,m)$ which we realize using parity-unfolded code $uRM(m)$ with $m=k+2$. The latter contains $2^{m}$ physical qubits, of which $2^m-1$ are qubits of the $\overline{QRM}(1,m)$ code and one target qubit. For simplicity, we assume ideal target qubit, so no additional encoding of the target is required. When the target logical qubit is noisy, it needs to be encoded in, e.g, a repetition code of the same distance as logical qubits of the QRM code, which is a half of the QRM code size. Hence, adding the encoded ancilla would increase the cost of distillation for each $k$ by a constant pre-factor of 3/2, therefore not affecting the relative cost between distillation of different $Z_k$ gates.

In addition to the $2^m$ bulk qubits of the $\overline{QRM}(1,m)$ code, we require qubits to measure all boundary stabilizers $S(w,t)$ in Eq.~\eqref{eq:s-stabilizers}. When realized using long-range connectivity along the boundary, such stabilizers do not require additional data qubits, and the total number of physical qubits including ancillas for stabilizer measurement in the $uRM(m)$ code is
\begin{equation}\label{eq:n-qubits-lr}
\begin{aligned}
    N^{(\textrm{lr})}_{\textrm{qubits}}(k)
    &=
    2^{k+3} - k - 3
    \approx
    2^{k+3},
\end{aligned}
\end{equation}
where we switched to variable $k=m-2$ denoting the rotation $Z_k$ .

With nearest-neighbour connectivity only, each boundary stabilizer is composite and requires $w$ additional data qubits, as shown in Fig.~\ref{fig:bulk} and explained in Sec.~\ref{sec:nearest-neighbour}. Assume again even $m=2l$ for simplicity. Recall from the proof of Lemma~\ref{lemma:1} that for each $w$ there are $\nq /w - 1 = 2^{m/2-s} - 1$ boundary stabilizers, each requiring $w=2^s$ data qubits. Then, for each of two types of boundaries, one needs
\begin{equation}
    \begin{aligned}
        \sum_{s=1}^{m/2-1}
        (2^{m/2-s} - 1)2^s
        =
        \Big{(}
        \frac{m}{2}-2
        \Big{)}
        2^{m/2} + 2
    \end{aligned}
\end{equation}
additional data qubits. With nearest-neighbour connectivity, $uRM(m)$ hence contains
\begin{equation}\label{eq:n-data}
    N^{(\textrm{nn})}_{\textrm{data}}(k)
    =
    2^{k+2}
    +
    \Big{(}
    k-2
    \Big{)}
    2^{k/2+1}
    +
    4
\end{equation}
data qubits. The total number of qubits in the $uRM(m)$ code with nearest-neighbour connectivity is then
\begin{equation}\label{eq:n-qubits-nn}
\begin{aligned}
    N^{(\textrm{nn})}_{\textrm{qubits}}(k)
    &=
    2
    N_{\textrm{data}}(k)^{(nn)}
    -
    k
    -
    3
    \\&=
    2^{k+3}
    +
    (k-2)2^{k/2+2}
    -
    k
    +
    5
    \\&\approx
    2^{k+3} + k2^{k/2}
\end{aligned}
\end{equation}
In the following, $N_{\textrm{data}}(k)$ will refer to either Eq.~\eqref{eq:n-qubits-lr} or Eq.~\eqref{eq:n-qubits-nn} depending on our assumption on the available connectivity along the boundary of the code patch. Intermediate regimes are possible, e.g., when next-neighbour connectivity is available. However, in any case, all such configurations only differ by extra qubits along the code boundary, and all yield identical qubit overhead in the leading order.  

\begin{figure}[t]
\includegraphics[width=0.95\columnwidth]{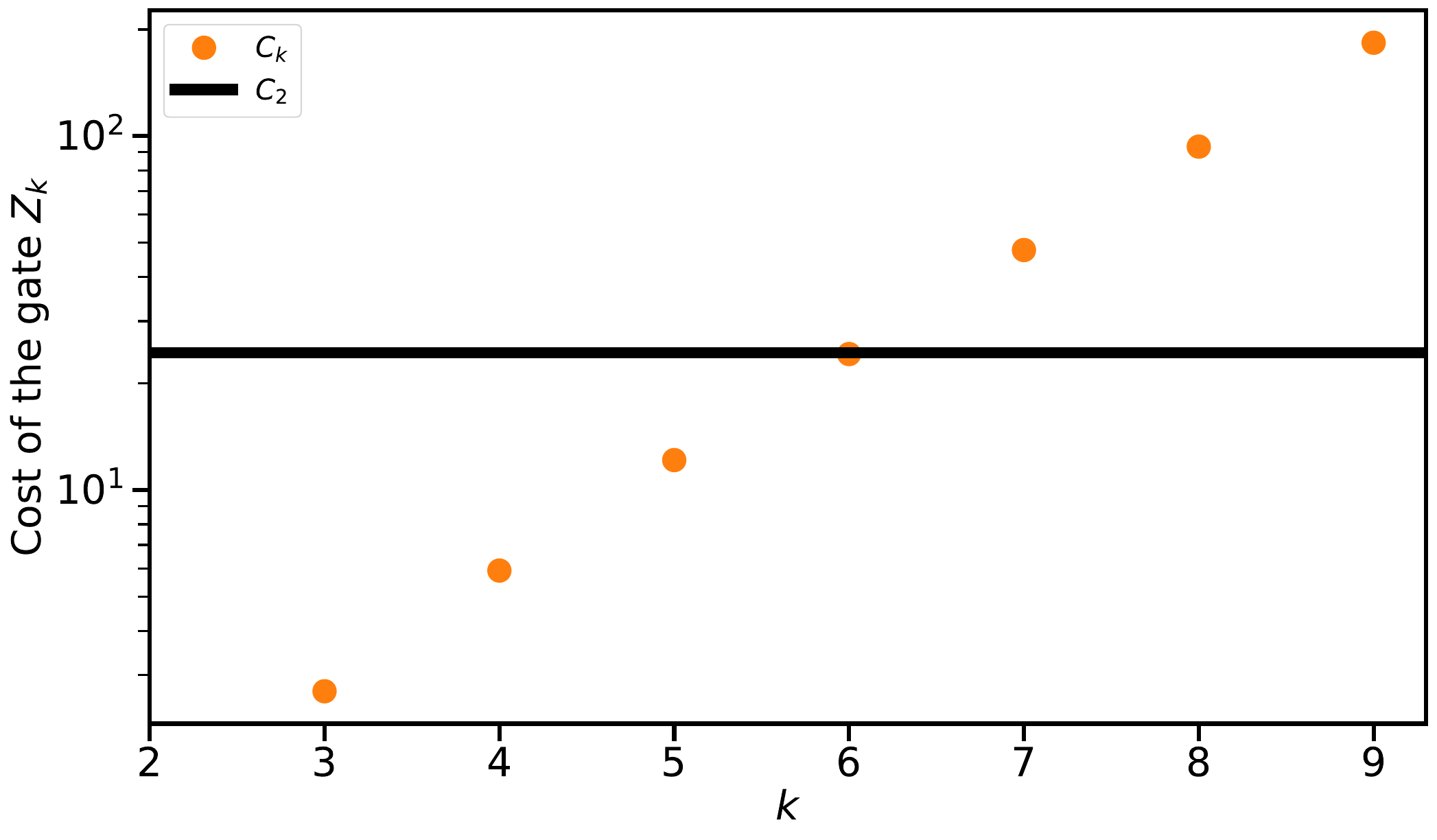}
\caption{\label{fig:cost} Space-time overheads of applying a $Z_k$ rotation using parity-unfolded $C_k$ gate set. Orange circles correspond to the exact and upper-bound expression Eq.~\eqref{eq:cisc-cost}. The horizontal line shows the cost of approximating $Z_k$ with gates from $C_2$, i.e., the (Clifford + $T$) set, calculated according to Eq.~\eqref{eq:risc-cost}. Here, we use $\eta = 10^5$, $\epsilon=10^{-3}$ and the rest or parameters are identical to those of Fig.~\ref{fig:ler}. Space-time cost is normalized on the cost of distilling a single $T$ gate.}
\end{figure}

As in the standard distillation protocol, the distilled state is rejected  when one or two errors are detected. In this case, the experiment is discarded and the protocol is executed again from the beginning. In the leading order, the success probability reads $P_{\textrm{success}}(k) = 1- N_{\textrm{data}}(k)p_{\textrm{eff},k}$ 
with $p_{\textrm{eff},k}$ defined in Eq.~\eqref{eq:p-eff}. The number of distillation attempts before the state is accepted is 
\begin{equation}
\begin{aligned}
    N_{\textrm{attempts}}(k)
    \approx
    \frac{1}{1
    -
    N_{\textrm{data}}(k)p_{\textrm{eff},k}}
\end{aligned}
\end{equation}
Additionally, the space-time overhead increases proportionally to the number of syndrome extraction rounds $\nr$. Since, according to the results of the previous section, the smallest lower-bound logical error rate Eq.~\eqref{eq:distillation-ler} is reached with the same number of syndrome extraction rounds for any considered $k$, it adds the same pre-factor $\nr$ to the cost of distilling each gate $Z_k$ and can be omitted. 

With this, the total space-time cost of distilling the state $\ket{Z_{k}}$ in a parity-unfolded layout $uRM(m=k+2)$ is $N_{\textrm{qubit}}(k)N_{\textrm{attempts}}(k)$. Since we are interested in the cost of deterministic gate $Z_k$ with the correct sign, distillation of a gate $Z_{k'<k}$ might be required with probability given in Eq.~\eqref{eq:p-smaller-k}. Denote $R_k(k')$ the space-time cost of applying non-Clifford rotation $k$ using parity-unfolded gates from the set $C_{k'}$. The average resource cost of applying rotation ${Z}_k$ deterministically using an unfolded $C_k$ gate reads
\begin{equation}\label{eq:cisc-cost}
\begin{aligned}
    &R_k(k)
    =
    \sum_{2 \leq j \leq k}
    \frac{N_{\textrm{qubits}}(j)
    N_{\textrm{attempts}}(j)}{2^{k-j}}    
    \\&\leq
    \textrm{max}_{2 \leq j \leq k}
    \big{[}
    N_{\textrm{qubits}}(j)
    N_{\textrm{attempts}}(j)
    \big{]}
    \sum_{2 \leq j \leq k}
    \frac{1}{2^{k-j}}
    \\&<
    2
    N_{\textrm{qubits}}(k)
    N_{\textrm{attempts}}(k),
\end{aligned}
\end{equation}
where in the last inequality we again used $\sum_{2 \leq j \leq k}{1}/{2^{k-j}} \leq 2$.

\begin{figure}[t]
\includegraphics[width=1.0\columnwidth]{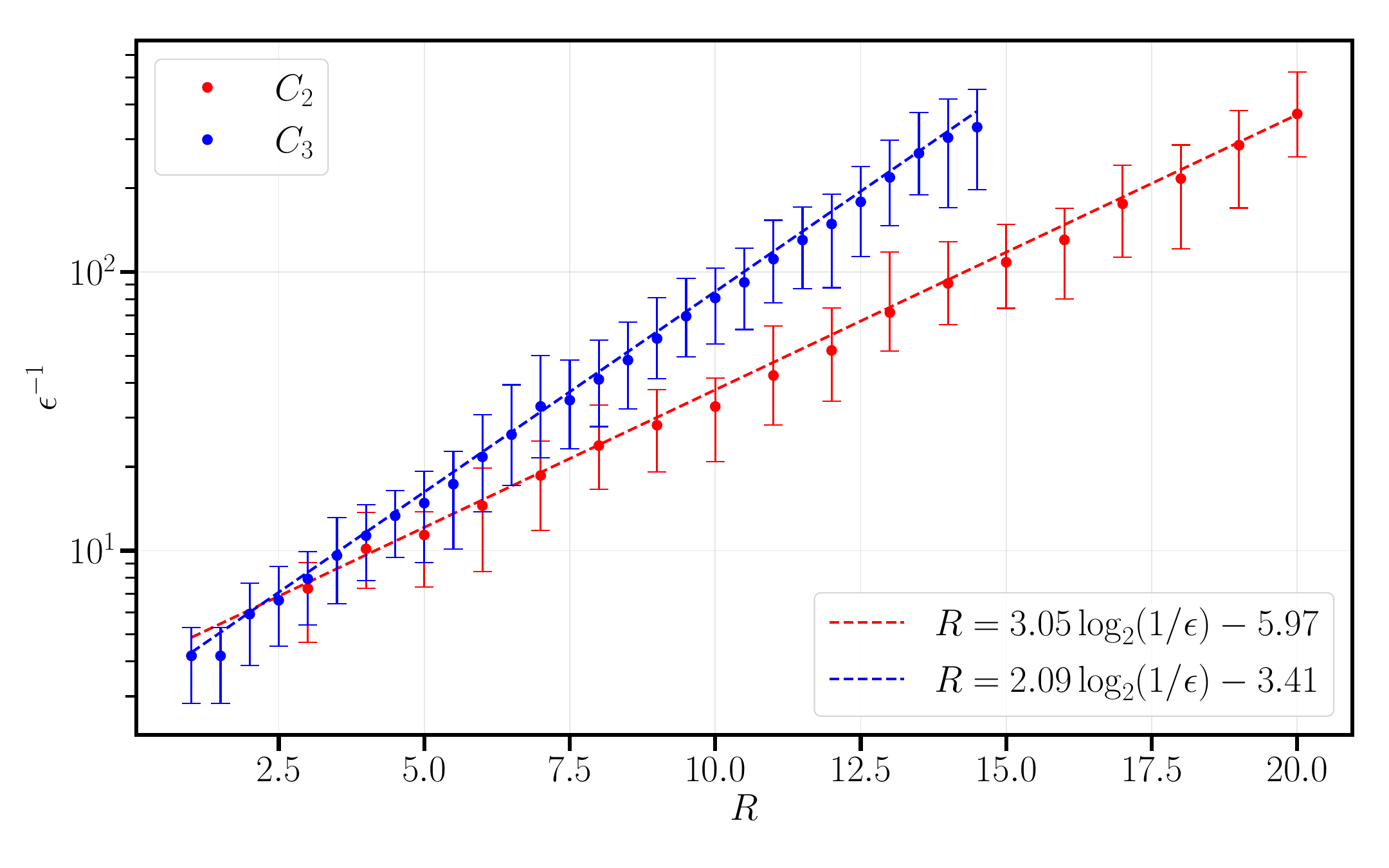}
\caption{\label{fig:synthesis} Average synthesis error $\epsilon$ as a function of total resource cost $R$ for the unfolded $C_2 = (\textrm{Clifford} + T)$ and $C_3 = (\textrm{Clifford} + T + \sqrt{T})$ gate set. Each data point displays the average synthesis error for 100 Haar random unitaries. Dashed lines indicate fits to $a\log(1/\epsilon) + b$. Data points mark the median while error bars indicate the $68\%$ confidence interval.}
\end{figure}

When the $C_2$ set is used to approximate $Z_{k=m-2}$ rotation, a sequence of Clifford and $T$ gates is required to achieve the required synthesis error $\epsilon$. The number of $T$ gates $N_{\textrm{gates}}(\epsilon)$ only depends on $\epsilon$ and is independent of $k$. Only Clifford corrections are required when $Z_{2}^{\dagger}$ is teleported instead of $Z_{2}$, and they are considered free. The total space-time overhead for applying the gate $Z_{k}$ with the $C_2$ set is therefore
\begin{equation}\label{eq:risc-cost}
    R_k(2)
    =
    N_{\textrm{gates}}
    N_{\textrm{qubits}}(2)
    N_{\textrm{attempts}}(2).
\end{equation}

For large $k$, the synthesis of the target gate $Z_k$ is more resource efficient than direct distillation, since distillation cost is independent of $k$, while the cost of distilling gates $Z_k$ grows fast with $k$. For relatively small $k$, however, direct distillation of $Z_k$ might be more efficient. Using Eqs.~\eqref{eq:cisc-cost} and \eqref{eq:risc-cost}, and solving $R_k(4) = R_k(k)$, we can make a rough estimation of break-even value $k_{\textrm{b}}$ such that for $k \leq k_{\textrm{b}}$, direct distillation of $Z_k$ using parity unfolding is cheaper that approximating it using unfolded $T$ gates. For moderate values of $k$, we use $N_{\textrm{attempts}}(2) \approx N_{\textrm{attempts}}(k) \approx 1$ and  $N_{\textrm{qubits}}(k) \approx 2N_{\textrm{qubits}}(k-1)$, which gives us the break-even value of 
\begin{equation}\label{eq:break-even}
    k_{\textrm{b}}
    \approx
    2 + \log_2{(N_{\textrm{gates}})},
\end{equation}
The only parameter here is the number of $T$ gates in a gate sequence, and it depends on the efficiency of used decomposition. 
Here, we consider an efficient decomposition we derive using \texttt{TRASYN}, a recently introduced method based on tensor networks \cite{Hao2025}, that yields
\begin{equation}\label{eq:number-of-T}
    N_{\textrm{gates}}(\epsilon)
    =
    3.05\log_{2}{{(} 1/\epsilon{)}} - 5.97.
\end{equation}
where $\epsilon$ is the trace distance between synthesised and target unitary. We choose the synthesis error such that $\epsilon^2 \approx N_{\textrm{gates}}(\epsilon) P_{\textrm{dist}}(2)$, that is, we want the synthesis errors approximately equal to the total error accumulated from noisy $T$ gates. Using $\epsilon = 10^{-3}$, Eq.~\eqref{eq:break-even} yields $k_{\textrm{b}} = 6.6$. That is, we can reduce the cost by directly distilling and teleporting all gates from the gate set $C_6$ rather than approximating them with Clifford and parity-unfolded $T$ gates. This rough estimate agrees well with the numerical calculations of $R_k(k)$ presented in Fig.~\ref{fig:cost}.

\begin{figure}[t]
\includegraphics[width=1.0\columnwidth]{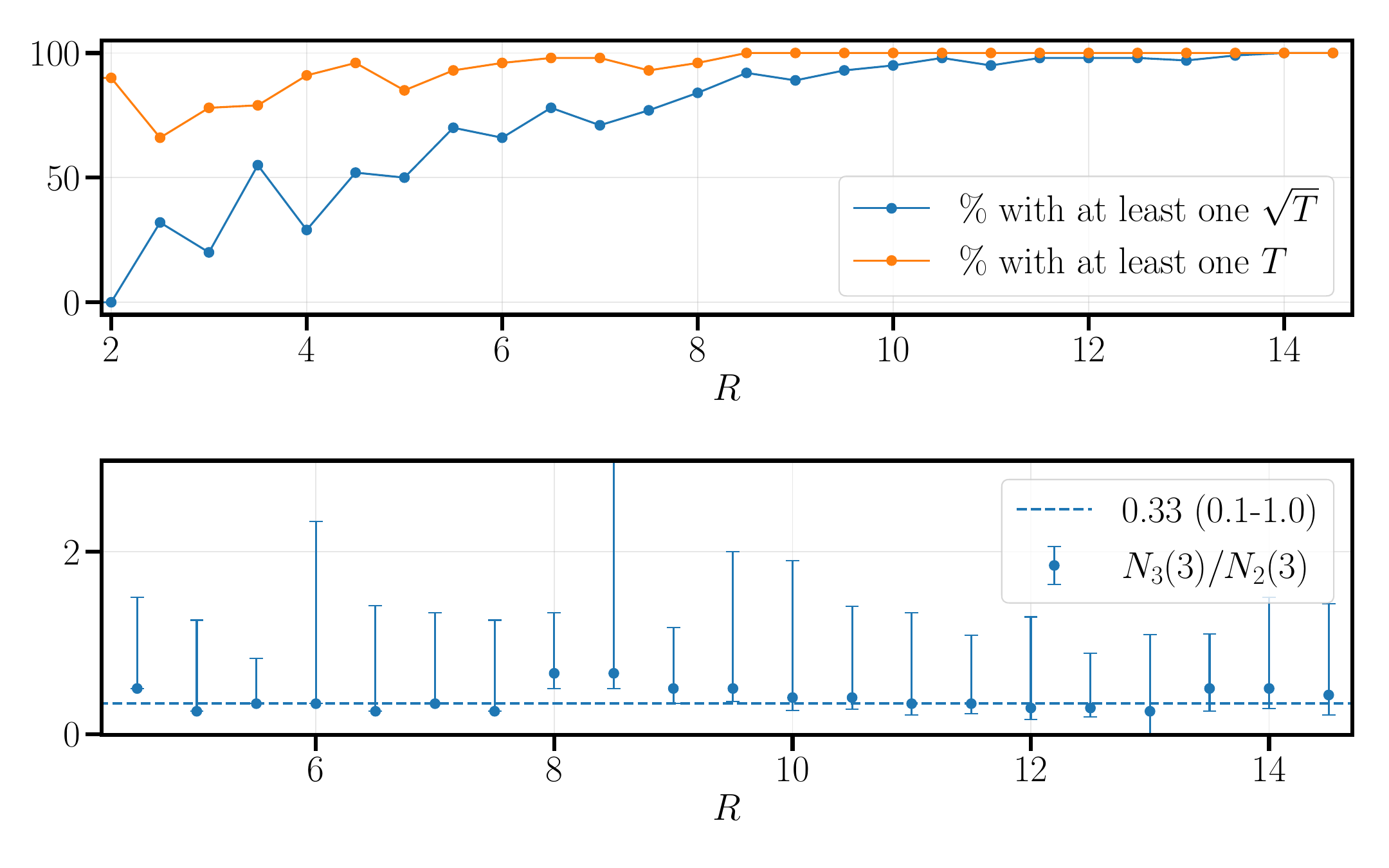}
\caption{\label{fig:analyse_sequences} Top panel: Percentage of synthesized unitaries $V$ which require at least one $\sqrt{T}$, at least one $T$, respectively. Bottom panel: Ratio of the number $\sqrt{T}$, $N_3(3)$, and $T$, $N_2(3)$, for each each synthesised sequence. Data points mark the median while error bars indicate the $68\%$ confidence interval.}
\end{figure}

\subsection{Arbitrary single-qubit unitary}\label{subsec:cost-arbitrary}

Consider now a random single-qubit unitary $U$. Can one save resources when synthesising it using the gate set $C_{k>2}$ instead of $C_{2}$? To answer this question, we require a decomposition algorithm that synthesises a desired unitary using gates from the extended set. 

The task is formulated as follows: For a given unitary $U$ and $k>2$,
fix a resource cost of gates from $C_k$, where Clifford gates are free and a gate cost $R_{k'}(k')$ for $2 \leq k' \leq k$ is given by Eq.~\eqref{eq:cisc-cost}. Given the total resource cost $R$, synthesise a unitary $V$ as a sequence of $Z_{k'} \in C_k$ gates and Clifford gates which minimizes the trace distance $\epsilon$ with respect to $U$

\begin{equation}
\label{eq:synthesis_minimisation}
    \min_{\{N_{k'}(k)\}_{k'<k}} 
    \epsilon
    {(}
    \{N_{k'}(k)\}
    {)} 
    \quad \text{s.t.} \quad \sum_{k' \leq k}
    N_{k'}(k) R_{k'}(k')
    \leq R,
\end{equation}
where $N_{k'}(k)$ is the number of $Z_{k'}$ gates. $\epsilon$ is the trace distance of Choi-states defined as 
\begin{equation}
\label{eq:synthesis_error}
    \epsilon = \frac{1}{2}\vert\vert U - V \vert\vert_\mathrm{tr} = \sqrt{1 - \frac{ \vert \mathrm{Tr}\left(VU^\dagger\right)\vert^2}{4}},
\end{equation}
where $\vert\vert A \vert \vert _{\mathrm{tr}} = \mathrm{tr}(\sqrt{A^\dagger A})$.

This problem is essentially a reformulation of the ancilla-free synthesis problem for a single qubit unitary $U$ where we aim to find the cheapest gate sequence (in terms of cost-weighted number of gates) that approximates $U$ within a fixed error budget $\epsilon$. Note that the trace distance $\epsilon$ is an upper bound for the diamond norm and related to the process fidelity $\mathcal{F}$ by $\epsilon^2 = 1 - \mathcal{F}$. 


For simplicity, we normalize the cost of each gate on the cost of distilling a $T$ gate, i.e., $R_2(2)=1$. The cost of approximating a random unitary with the $C_2$ gate set is simply the number of gates $N_{\textrm{gates}}$, as given by Eq.~\eqref{eq:number-of-T}. The normalised cost of $\sqrt{T}$ is then
\begin{equation}
    R_3(3)
    =
    \frac{N_{\textrm{qubits}}(3)
    N_{\textrm{attempts}}(3)}
    {N_{\textrm{qubits}}(2)
    N_{\textrm{attempts}}(2)}
    +
    1/2
    \approx
    2.5.
\end{equation}
Here, we used that $N_{\textrm{qubits}}(3)/ N_{\textrm{qubits}}(2)\approx 2$ and ${N_{\textrm{attempts}}(3)/N_{\textrm{attempts}}(2)} \approx 1$ at $p_{\textrm{eff},3} \approx p_{\textrm{eff},2} \approx 10^{-3}$. Recall that the $1/2$ is due to the fact that we need a $T$ gate to correct for the teleportation half of the time.

To solve the minimization problem of Eq.~\eqref{eq:synthesis_minimisation} we use an approach based on tensor network sampling akin to the recently introduced method \texttt{TRASYN}~\cite{Hao2025} for the gate sets $C_2$ and $C_3$. Details of the synthesis algorithm can be found in Appendix~\ref{app:synthesis}. Fig.~\ref{fig:synthesis} depicts the obtained trace distance $\epsilon$ as a function of total cost $R$ for the different gate sets. Fitting the data, the cost $R(\epsilon)$ scales polylogarithmically with $\epsilon$. For the set $C_3$, we find that the relative number of gates of both types in an average sequence is $\kappa = N_3(3)/N_2(2)=0.33 (0.1-1.0)$, as shown in Fig.~\ref{fig:analyse_sequences}. The average number of $T$ and $\sqrt{T}$ gates in a sequence can then be extracted as as
\begin{equation}
\begin{aligned}
    N_2(3)
    &=
    \frac{R(\epsilon)}{R_2(2) + \kappa R_3(3)},
    \\
    N_3(3) &= \kappa N_2(\epsilon),
\end{aligned}
\end{equation}
with $R_2(2)=1$ and $R_3(3)=2.5$. For $C_2$, $N_2(2) = R(\epsilon)$. Hence the number of $T$ gates used in the previous section~[Eq.~\eqref{eq:number-of-T}] matches the cost of $C_2$ in Fig.~\ref{fig:synthesis}. As a remark, we note that the total number of gates derived using the modified \texttt{TRASYN} algorithm scales with the synthesis error rate $\epsilon$ as 
\begin{equation}
    \begin{aligned}
        N_{\textrm{tot}}(2)
        =
        N_2(2) 
        &\sim 
        3.05 \log_2(1/\epsilon)
        \\
        N_{\textrm{tot}}(3)
        =
        N_2(3) + N_3(3)
        &\sim
        1.52 \log_2(1/\epsilon)
    \end{aligned}
\end{equation}
when, respectively, gates sets $C_2$ and $C_3$ are in use. For the gate set $C_k$ an asymptotic lower bound for the number of gates can be derived similar to the case for $T$-gates only. Assuming a generalization of the Matsumoto-Amano normal form~\cite{matsumoto2008representation} an information theoretic lower bound is given by
\begin{equation}
    \begin{aligned}
        N_{\textrm{tot,opt}}(k)
        &\sim 
        \frac{3}{\log_2[2(k-1)]} \log_2(1/\epsilon),
    \end{aligned}
\end{equation}
which can be derived via the same tiling argument of $SU(2)$ as in the $C_2$ case. Note that this constitutes a lower bound even if there is no straightforward generalization of the Mastumoto-Amano normal from, since additional relations among the generators of $C_k$ lead to longer asymptotic decompositions.

\begin{figure}[t]
\includegraphics[width=0.95\columnwidth]{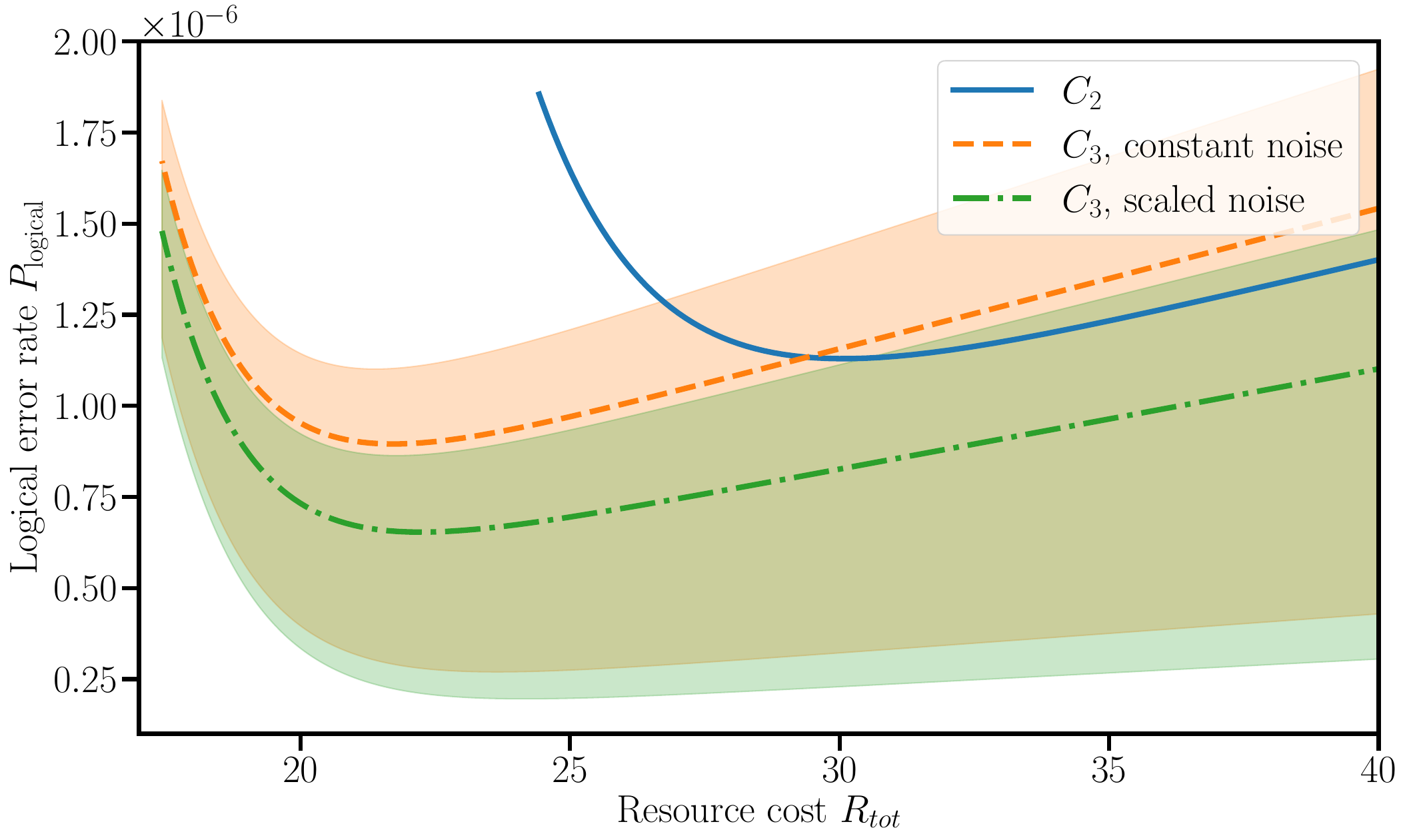}
\caption{\label{fig:ler-vs-cost} Total logical error rate~[Eq.~\eqref{eq:total-error}] versus space-time cost~[Eq.~\eqref{eq:total-cost}] of approximating an arbitrary single-qubit unitary $U$ using gate sets $C_2$ and $C_3$. The two curves shown for the set $C_3$ correspond to the constant and scaled noise models, as defined by Eqs.~\eqref{eq:noise-constant} and \eqref{eq:noise-scaled}, respectively. 
Noise parameters are the same as in Fig.~\ref{fig:ler-simulation}~(b).}
\end{figure}

The average total logical error of approximating a single-qubit unitary $U$ is a combination of decomposition and gate errors, and to the leading order reads
\begin{equation}\label{eq:total-error}
    P_{\textrm{logical}}(k)
    =
    \epsilon^2
    +
    \sum_{k' \leq k}
    N_{k'}(k)
    P_{\textrm{gate}}(k')
\end{equation}
with $P_{\textrm{gate}}(k')$ given in Eq.~\eqref{eq:cisc-seq-error}. The total logical error rate versus the total cost 
\begin{equation}\label{eq:total-cost}
    R_{\textrm{tot}}(k)
    =
    \sum_{k' \leq k}
    N_{k'}(k)
    R_{k'}(k')
\end{equation}
for $k=2$ and $k=3$ are shown in Fig.~\ref{fig:ler-vs-cost}. Both curves demonstrate a qualitatively similar behaviour. When available resource is too small, error in the total logical error explodes since the gate sequence is too short to approximate the desired unitary to a reasonable accuracy even with ideal gates. On the other hand, when long sequences of gates are used, errors accumulate from each imperfect gate and reduce the total fidelity. Hence, for each gate set $C_k$, there exist an optimal number of gates to achieve the smallest possible logical error. When comparing sets of parity-unfolded gates $C_2$ and $C_3$, we observe that the latter simultaneously reduces both the total logical noise $P_{\textrm{logical}}$ and the resource cost $R_{\textrm{tot}}$ at which it is achieved by 26\% and 49\%, respectively. Alternatively, the gate set $C_3$ can be used to achieve the smallest error achievable with $C_2$ at 61\% of the cost of the latter.

\remark{So far we have considered only the ancilla-free synthesis problem and showed that having access to higher-level non-Clifford gates such as the $\sqrt{T}$ gate is beneficial for that problem given the specific cost overhead. Current state of the art single qubit synthesis strategies also include ancilla-based methods like fallback/repeat until success protocols or unitary mixing~\cite{kliuchnikov2023shorter}. In both cases, one or multiple sub-problems have to be solved which involve approximating an entry of a single qubit unitary. For example, in the fallback protocol a projective rotation based on a unitary $V_{\textrm{FB}}$ is applied on the single qubit state and the ancilla followed by a fallback operation conditioned on the measurement outcome of the ancilla. The lower success probability~(which has to be compensated by repeating this process if needed) of the desired application of the single qubit rotation can be outweight by the simpler unitary approximation problem (since we only care about e.g. the top-left entry of $V_{\textrm{FB}}$). However, also this slightly easier problem is expected to be solved more efficiently with access to more fine-grained unitary sequences from including $\sqrt{T}$ to the non-Clifford gate set. Thus we would expect that also for these methods (and combinations thereof) our approach of using $C_{k>2}$ sets yields a better scaling of the total non-Clifford cost with the error budget.
}

\section{Conclusion and outlook}\label{sec:conclusion}

The parity-unfolded architecture introduced in this work yields significant improvement of two most relevant metrics in fault-tolerant quantum algorithm design, namely, the required space-time resource overhead and resulting error rate of the logical gate. While the most natural application of the proposed architecture is in algorithms for which the $Z_k$ gates are native, such as the quantum Fourier transform, our scheme demonstrates significant advantage in a more general task of an arbitrary unitary decomposition of $U(3)$ rotation gates. The advantage is achieved owing to two conceptual improvements introduced in this work. 

Firstly, unfolded distillation---the main building block of the unfolded architecture---offers hardware-friendly execution of quantum gates from higher levels of the Clifford hierarchy under the assumption of very strongly noise-bias platform. While gates from an $m$-th level are known to be transversal only in a code that is local in $m$-dimensional space, we provide a constructive scheme to extract any such gate in a two-dimensional qubit lattice with little to no additional qubit overhead. 

Secondly, to harness the full potential of the architecture, we equip state of the art synthesis algorithms with gates from extended sets of non-Clifford gates. Our results indicate that having access to the extended set (Clifford+$T$+$\sqrt{T}$) offers more resource-efficient execution of important quantum algorithm sub-routines. This paves the way for new research into more efficient decomposition strategies of quantum computing primitives. We note that unitary synthesis is a rapidly developing research field on its own and, while the methods developed in this work are sufficient to reach experimentally relevant logical error rates within the presented unfolding scheme, more sophisticated, scalable algorithms  are required to reach beyond the scope of his work. 

\begin{acknowledgments}
This project was supported by a FFG Funding (Project No. FO99918691) as part of the international Eureka cooperation, and by the Austrian Research Promotion Agency (FFG Project No. FO999937388, FFG Basisprogramm). For the purpose of open access, the author has applied a CC BY public copyright license to any Author Accepted Manuscript version arising from this submission.
\end{acknowledgments}

\bibliographystyle{apsrev4-1}
\bibliography{reflist}

@misc{perlin2023qldpc,
  author = {Perlin, Michael A.},
  title = {{qLDPC}},
  year = {2023},
  publisher = {GitHub},
  journal = {GitHub repository},
  howpublished = {\url{https://github.com/qLDPCOrg/qLDPC}},
}

@footnote{unfolding-simulation-repo,
	Note = {The open source code and the data used in this paper will be added in later revisions.
           }
}

@footnote{synthesis-repo,
	Note = {The open source code and the data used in this paper will be added in later revisions.
           }
}

@article{Quan_2018,
doi = {10.1088/1751-8121/aaad13},
url = {https://doi.org/10.1088/1751-8121/aaad13},
year = {2018},
month = {feb},
publisher = {IOP Publishing},
volume = {51},
number = {11},
pages = {115305},
author = {Quan, Dong-Xiao and Zhu, Li-Li and Pei, Chang-Xing and Sanders, Barry C},
title = {Fault-tolerant conversion between adjacent Reed–Muller quantum codes based on gauge fixing},
journal = {Journal of Physics A: Mathematical and Theoretical}
}

@article{matsumoto2008representation,
  title={Representation of quantum circuits with Clifford and $$\backslash$pi/8$ gates},
  author={Matsumoto, Ken and Amano, Kazuyuki},
  journal={arXiv preprint arXiv:0806.3834},
  year={2008}
}

@misc{gong2024computationquantumreedmullercodes,
      title={Computation with quantum Reed-Muller codes and their mapping onto 2D atom arrays}, 
      author={Anqi Gong and Joseph M. Renes},
      year={2024},
      eprint={2410.23263},
      archivePrefix={arXiv},
      primaryClass={quant-ph},
      url={https://arxiv.org/abs/2410.23263}, 
}

@misc{xu2026controlledjumpcliffordhierarchy,
      title={Controlled jump in the Clifford hierarchy}, 
      author={Yichen Xu and Xiao Wang},
      year={2026},
      eprint={2602.22201},
      archivePrefix={arXiv},
      primaryClass={quant-ph},
      url={https://arxiv.org/abs/2602.22201}, 
}

@article{PhysRevA.95.012329,
  title = {Diagonal gates in the Clifford hierarchy},
  author = {Cui, Shawn X. and Gottesman, Daniel and Krishna, Anirudh},
  journal = {Phys. Rev. A},
  volume = {95},
  issue = {1},
  pages = {012329},
  numpages = {7},
  year = {2017},
  month = {Jan},
  publisher = {American Physical Society},
  doi = {10.1103/PhysRevA.95.012329},
  url = {https://link.aps.org/doi/10.1103/PhysRevA.95.012329}
}

@misc{hu2021climbingdiagonalcliffordhierarchy,
      title={Climbing the Diagonal Clifford Hierarchy}, 
      author={Jingzhen Hu and Qingzhong Liang and Robert Calderbank},
      year={2021},
      eprint={2110.11923},
      archivePrefix={arXiv},
      primaryClass={quant-ph},
      url={https://arxiv.org/abs/2110.11923}, 
}

@article{Jones_2016,
   title={Gauge color codes in two dimensions},
   volume={93},
   ISSN={2469-9934},
   url={http://dx.doi.org/10.1103/PhysRevA.93.052332},
   DOI={10.1103/physreva.93.052332},
   number={5},
   journal={Physical Review A},
   publisher={American Physical Society (APS)},
   author={Jones, Cody and Brooks, Peter and Harrington, Jim},
   year={2016},
   month=may }

@article{Chamberland2020,
	author = {Chamberland, Christopher and Noh, Kyungjoo},
	date = {2020/10/27},
	doi = {10.1038/s41534-020-00319-5},
	id = {Chamberland2020},
	isbn = {2056-6387},
	journal = {npj Quantum Information},
	number = {1},
	pages = {91},
	title = {Very low overhead fault-tolerant magic state preparation using redundant ancilla encoding and flag qubits},
	url = {https://doi.org/10.1038/s41534-020-00319-5},
	volume = {6},
	year = {2020}}

@misc{xu2026distillingmagicstatesbicycle,
      title={Distilling Magic States in the Bicycle Architecture}, 
      author={Shifan Xu and Kun Liu and Patrick Rall and Zhiyang He and Yongshan Ding},
      year={2026},
      eprint={2602.20546},
      archivePrefix={arXiv},
      primaryClass={quant-ph},
      url={https://arxiv.org/abs/2602.20546}, 
}

@article{Bombín_2015,
doi = {10.1088/1367-2630/17/8/083002},
url = {https://doi.org/10.1088/1367-2630/17/8/083002},
year = {2015},
month = {aug},
publisher = {IOP Publishing},
volume = {17},
number = {8},
pages = {083002},
author = {Bombín, Héctor},
title = {Gauge color codes: optimal transversal gates and gauge fixing in topological stabilizer codes},
journal = {New Journal of Physics}
}

@article{PhysRevLett.113.080501,
  title = {Fault-Tolerant Conversion between the Steane and Reed-Muller Quantum Codes},
  author = {Anderson, Jonas T. and Duclos-Cianci, Guillaume and Poulin, David},
  journal = {Phys. Rev. Lett.},
  volume = {113},
  issue = {8},
  pages = {080501},
  numpages = {5},
  year = {2014},
  month = {Aug},
  publisher = {American Physical Society},
  doi = {10.1103/PhysRevLett.113.080501},
  url = {https://link.aps.org/doi/10.1103/PhysRevLett.113.080501}
}

@article{PhysRevResearch.7.023080,
  title = {Code switching revisited: Low-overhead magic state preparation using color codes},
  author = {Daguerre, Lucas and Kim, Isaac H.},
  journal = {Phys. Rev. Res.},
  volume = {7},
  issue = {2},
  pages = {023080},
  numpages = {12},
  year = {2025},
  month = {Apr},
  publisher = {American Physical Society},
  doi = {10.1103/PhysRevResearch.7.023080},
  url = {https://link.aps.org/doi/10.1103/PhysRevResearch.7.023080}
}

@article{PhysRevA.86.052329,
  title = {Magic-state distillation with low overhead},
  author = {Bravyi, Sergey and Haah, Jeongwan},
  journal = {Phys. Rev. A},
  volume = {86},
  issue = {5},
  pages = {052329},
  numpages = {10},
  year = {2012},
  month = {Nov},
  publisher = {American Physical Society},
  doi = {10.1103/PhysRevA.86.052329},
  url = {https://link.aps.org/doi/10.1103/PhysRevA.86.052329}
}

@inproceedings{Hao2025,
   title={Reducing T Gates with Unitary Synthesis},
   url={http://dx.doi.org/10.1145/3779212.3790210},
   DOI={10.1145/3779212.3790210},
   booktitle={Proceedings of the 31st ACM International Conference on Architectural Support for Programming Languages and Operating Systems, Volume 2},
   publisher={ACM},
   author={Hao, Tianyi and Xu, Amanda and Tannu, Swamit},
   year={2026},
   month=mar, pages={1589–1604} }

@article{Ferris2012,
  title = {Perfect sampling with unitary tensor networks},
  author = {Ferris, Andrew J. and Vidal, Guifre},
  journal = {Phys. Rev. B},
  volume = {85},
  issue = {16},
  pages = {165146},
  numpages = {10},
  year = {2012},
  month = {Apr},
  publisher = {American Physical Society},
  doi = {10.1103/PhysRevB.85.165146},
  url = {https://link.aps.org/doi/10.1103/PhysRevB.85.165146}
}

@misc{chen2009notesreedmullercodes,
      title={Notes on Reed-Muller Codes}, 
      author={Yanling Chen and Han Vinck},
      year={2009},
      eprint={0901.2062},
      archivePrefix={arXiv},
      primaryClass={cs.IT},
      url={https://arxiv.org/abs/0901.2062}, 
}

@article{10.1038/s41467-017-00045-1,
	author = {Heeres, Reinier W. and Reinhold, Philip and Ofek, Nissim and Frunzio, Luigi and Jiang, Liang and Devoret, Michel H. and Schoelkopf, Robert J.},
	date = {2017/07/21},
	doi = {10.1038/s41467-017-00045-1},
	id = {Heeres2017},
	isbn = {2041-1723},
	journal = {Nature Communications},
	number = {1},
	pages = {94},
	title = {Implementing a universal gate set on a logical qubit encoded in an oscillator},
	url = {https://doi.org/10.1038/s41467-017-00045-1},
	volume = {8},
	year = {2017}}

@misc{leroux2025romanescocodesbiastailoredqldpc,
      title={Romanesco codes: Bias-tailored qLDPC codes from fractal codes}, 
      author={Catherine Leroux and Joseph K. Iverson},
      year={2025},
      eprint={2506.00130},
      archivePrefix={arXiv},
      primaryClass={quant-ph},
      url={https://arxiv.org/abs/2506.00130}, 
}

@article{ldpc-cat-code,
	author = {Ruiz, Diego and Guillaud, J{\'e}r{\'e}mie and Leverrier, Anthony and Mirrahimi, Mazyar and Vuillot, Christophe},
	date = {2025/01/26},
	doi = {10.1038/s41467-025-56298-8},
	id = {Ruiz2025},
	isbn = {2041-1723},
	journal = {Nature Communications},
	number = {1},
	pages = {1040},
	title = {LDPC-cat codes for low-overhead quantum computing in 2D},
	url = {https://doi.org/10.1038/s41467-025-56298-8},
	volume = {16},
	year = {2025}}

@article{Lee2021surfacecode,
	author = {Lee, Jonghyun and Park, Jooyoun and Heo, Jun},
	date = {2021/07/07},
	doi = {10.1007/s11128-021-03130-z},
	id = {Lee2021},
	isbn = {1573-1332},
	journal = {Quantum Information Processing},
	number = {7},
	pages = {231},
	title = {Rectangular surface code under biased noise},
	url = {https://doi.org/10.1007/s11128-021-03130-z},
	volume = {20},
	year = {2021}}

@article{Campbell_2016,
doi = {10.1088/2058-9565/1/1/015007},
url = {https://doi.org/10.1088/2058-9565/1/1/015007},
year = {2016},
month = {dec},
publisher = {IOP Publishing},
volume = {1},
number = {1},
pages = {015007},
author = {Campbell, Earl T and O’Gorman, Joe},
title = {An efficient magic state approach to small angle rotations},
journal = {Quantum Science and Technology}
}

@article{PhysRevA.91.042315,
  title = {Reducing the quantum-computing overhead with complex gate distillation},
  author = {Duclos-Cianci, Guillaume and Poulin, David},
  journal = {Phys. Rev. A},
  volume = {91},
  issue = {4},
  pages = {042315},
  numpages = {9},
  year = {2015},
  month = {Apr},
  publisher = {American Physical Society},
  doi = {10.1103/PhysRevA.91.042315},
  url = {https://link.aps.org/doi/10.1103/PhysRevA.91.042315}
}

@footnote{footnote-circuits,
	Note = {A circuit for encoding a shortened QRM(1,4) code is given in Fig.~3 of Ref.~\cite{ruiz2025unfoldeddistillationlowcostmagic}. A circuit for encoding a shortened QRM(1,5) code is given in Fig.~3 of Ref.~\cite{landahl2013complexinstructionsetcomputing}. 
           }
}

@misc{tiurev2025optimaldecodererrorcorrecting,
      title={Optimal Decoder for the Error Correcting Parity Code}, 
      author={Konstantin Tiurev and Christophe Goeller and Leo Stenzel and Paul Schnabl and Anette Messinger and Michael Fellner and Wolfgang Lechner},
      year={2025},
      eprint={2505.05210},
      archivePrefix={arXiv},
      primaryClass={quant-ph},
      url={https://arxiv.org/abs/2505.05210}, 
}

@article{Gidney_2019,
   title={Efficient magic state factories with a catalyzed<mml:math xmlns:mml="http://www.w3.org/1998/Math/MathML"><mml:mrow class="MJX-TeXAtom-ORD"><mml:mo stretchy="false">|</mml:mo></mml:mrow><mml:mi>C</mml:mi><mml:mi>C</mml:mi><mml:mi>Z</mml:mi><mml:mo fence="false" stretchy="false">⟩</mml:mo></mml:math>to<mml:math xmlns:mml="http://www.w3.org/1998/Math/MathML"><mml:mn>2</mml:mn><mml:mrow class="MJX-TeXAtom-ORD"><mml:mo stretchy="false">|</mml:mo></mml:mrow><mml:mi>T</mml:mi><mml:mo fence="false" stretchy="false">⟩</mml:mo></mml:math>transformation},
   volume={3},
   ISSN={2521-327X},
   url={http://dx.doi.org/10.22331/q-2019-04-30-135},
   DOI={10.22331/q-2019-04-30-135},
   journal={Quantum},
   publisher={Verein zur Forderung des Open Access Publizierens in den Quantenwissenschaften},
   author={Gidney, Craig and Fowler, Austin G.},
   year={2019},
   month=apr, pages={135} }

@misc{gidney2024magicstatecultivationgrowing,
      title={Magic state cultivation: growing T states as cheap as CNOT gates}, 
      author={Craig Gidney and Noah Shutty and Cody Jones},
      year={2024},
      eprint={2409.17595},
      archivePrefix={arXiv},
      primaryClass={quant-ph},
      url={https://arxiv.org/abs/2409.17595}, 
}

@article{10.1038/s41586-024-07294-3,
	author = {R{\'e}glade, U. and Bocquet, A. and Gautier, R. and Cohen, J. and Marquet, A. and Albertinale, E. and Pankratova, N. and Hall{\'e}n, M. and Rautschke, F. and Sellem, L. -A. and Rouchon, P. and Sarlette, A. and Mirrahimi, M. and Campagne-Ibarcq, P. and Lescanne, R. and Jezouin, S. and Leghtas, Z.},
	date = {2024/05/01},
	doi = {10.1038/s41586-024-07294-3},
	id = {R{\'e}glade2024},
	isbn = {1476-4687},
	journal = {Nature},
	number = {8013},
	pages = {778--783},
	title = {Quantum control of a cat qubit with bit-flip times exceeding ten seconds},
	url = {https://doi.org/10.1038/s41586-024-07294-3},
	volume = {629},
	year = {2024}}

@article{Mirrahimi_2014,
doi = {10.1088/1367-2630/16/4/045014},
url = {https://doi.org/10.1088/1367-2630/16/4/045014},
year = {2014},
month = {apr},
publisher = {IOP Publishing},
volume = {16},
number = {4},
pages = {045014},
author = {Mirrahimi, Mazyar and Leghtas, Zaki and Albert, Victor V and Touzard, Steven and Schoelkopf, Robert J and Jiang, Liang and Devoret, Michel H},
title = {Dynamically protected cat-qubits: a new paradigm for universal quantum computation},
journal = {New Journal of Physics}
}

@article{PhysRevA.62.052316,
  title = {Methodology for quantum logic gate construction},
  author = {Zhou, Xinlan and Leung, Debbie W. and Chuang, Isaac L.},
  journal = {Phys. Rev. A},
  volume = {62},
  issue = {5},
  pages = {052316},
  numpages = {12},
  year = {2000},
  month = {Oct},
  publisher = {American Physical Society},
  doi = {10.1103/PhysRevA.62.052316},
  url = {https://link.aps.org/doi/10.1103/PhysRevA.62.052316}
}

@article{PRXQuantum.5.020345,
  title = {Fault-Tolerant Code-Switching Protocols for Near-Term Quantum Processors},
  author = {Butt, Friederike and Heu\ss{}en, Sascha and Rispler, Manuel and M\"uller, Markus},
  journal = {PRX Quantum},
  volume = {5},
  issue = {2},
  pages = {020345},
  numpages = {26},
  year = {2024},
  month = {May},
  publisher = {American Physical Society},
  doi = {10.1103/PRXQuantum.5.020345},
  url = {https://link.aps.org/doi/10.1103/PRXQuantum.5.020345}
}

@article{Litinski_2019,
   title={Magic State Distillation: Not as Costly as You Think},
   volume={3},
   ISSN={2521-327X},
   url={http://dx.doi.org/10.22331/q-2019-12-02-205},
   DOI={10.22331/q-2019-12-02-205},
   journal={Quantum},
   publisher={Verein zur Forderung des Open Access Publizierens in den Quantenwissenschaften},
   author={Litinski, Daniel},
   year={2019},
   month=dec, pages={205} }

@article{PhysRevA.95.032338,
  title = {Quantum computation with realistic magic-state factories},
  author = {O'Gorman, Joe and Campbell, Earl T.},
  journal = {Phys. Rev. A},
  volume = {95},
  issue = {3},
  pages = {032338},
  numpages = {19},
  year = {2017},
  month = {Mar},
  publisher = {American Physical Society},
  doi = {10.1103/PhysRevA.95.032338},
  url = {https://link.aps.org/doi/10.1103/PhysRevA.95.032338}
}

@article{PhysRevLett.110.170503,
  title = {Classification of Topologically Protected Gates for Local Stabilizer Codes},
  author = {Bravyi, Sergey and K\"onig, Robert},
  journal = {Phys. Rev. Lett.},
  volume = {110},
  issue = {17},
  pages = {170503},
  numpages = {5},
  year = {2013},
  month = {Apr},
  publisher = {American Physical Society},
  doi = {10.1103/PhysRevLett.110.170503},
  url = {https://link.aps.org/doi/10.1103/PhysRevLett.110.170503}
}

@misc{ruiz2025unfoldeddistillationlowcostmagic,
      title={Unfolded distillation: very low-cost magic state preparation for biased-noise qubits}, 
      author={Diego Ruiz and Jérémie Guillaud and Christophe Vuillot and Mazyar Mirrahimi},
      year={2025},
      eprint={2507.12511},
      archivePrefix={arXiv},
      primaryClass={quant-ph},
      url={https://arxiv.org/abs/2507.12511}, 
}

@misc{landahl2013complexinstructionsetcomputing,
      title={Complex instruction set computing architecture for performing accurate quantum $Z$ rotations with less magic}, 
      author={Andrew J. Landahl and Chris Cesare},
      year={2013},
      eprint={1302.3240},
      archivePrefix={arXiv},
      primaryClass={quant-ph},
      url={https://arxiv.org/abs/1302.3240}, 
}

@article{PhysRevLett.133.110601,
  title = {Domain Wall Color Code},
  author = {Tiurev, Konstantin and Pesah, Arthur and Derks, Peter-Jan H. S. and Roffe, Joschka and Eisert, Jens and Kesselring, Markus S. and Reiner, Jan-Michael},
  journal = {Phys. Rev. Lett.},
  volume = {133},
  issue = {11},
  pages = {110601},
  numpages = {6},
  year = {2024},
  month = {Sep},
  publisher = {American Physical Society},
  doi = {10.1103/PhysRevLett.133.110601},
  url = {https://link.aps.org/doi/10.1103/PhysRevLett.133.110601}
}

@article{PhysRevLett.129.180503,
  title = {Universal Parity Quantum Computing},
  author = {Fellner, Michael and Messinger, Anette and Ender, Kilian and Lechner, Wolfgang},
  journal = {Phys. Rev. Lett.},
  volume = {129},
  issue = {18},
  pages = {180503},
  numpages = {7},
  year = {2022},
  month = {Oct},
  publisher = {American Physical Society},
  doi = {10.1103/PhysRevLett.129.180503},
  url = {https://link.aps.org/doi/10.1103/PhysRevLett.129.180503}
}

@article{PhysRevA.71.022316,
  title = {Universal quantum computation with ideal Clifford gates and noisy ancillas},
  author = {Bravyi, Sergey and Kitaev, Alexei},
  journal = {Phys. Rev. A},
  volume = {71},
  issue = {2},
  pages = {022316},
  numpages = {14},
  year = {2005},
  month = {Feb},
  publisher = {American Physical Society},
  doi = {10.1103/PhysRevA.71.022316},
  url = {https://link.aps.org/doi/10.1103/PhysRevA.71.022316}
}

@article{PRXQuantum.2.020341,
  title = {Cost of Universality: A Comparative Study of the Overhead of State Distillation and Code Switching with Color Codes},
  author = {Beverland, Michael E. and Kubica, Aleksander and Svore, Krysta M.},
  journal = {PRX Quantum},
  volume = {2},
  issue = {2},
  pages = {020341},
  numpages = {46},
  year = {2021},
  month = {Jun},
  publisher = {American Physical Society},
  doi = {10.1103/PRXQuantum.2.020341},
  optUrl = {https://link.aps.org/doi/10.1103/PRXQuantum.2.020341}
}

@book{9781107002173,
  Author = {Michael A. Nielsen and Isaac L. Chuang},
  Title = {Quantum Computation and Quantum Information: 10th Anniversary Edition},
  Publisher = {Cambridge University Press},
  Year = {2011},
  ISBN = {9781107002173}
  }

@article{PhysRevX.9.041053,
  title = {Repetition Cat Qubits for Fault-Tolerant Quantum Computation},
  author = {Guillaud, J\'er\'emie and Mirrahimi, Mazyar},
  journal = {Phys. Rev. X},
  volume = {9},
  issue = {4},
  pages = {041053},
  numpages = {23},
  year = {2019},
  month = {Dec},
  publisher = {American Physical Society},
  doi = {10.1103/PhysRevX.9.041053},
  optUrl = {https://link.aps.org/doi/10.1103/PhysRevX.9.041053}
}

@article{PhysRevLett.102.110502,
  title = {Restrictions on Transversal Encoded Quantum Gate Sets},
  author = {Eastin, Bryan and Knill, Emanuel},
  journal = {Phys. Rev. Lett.},
  volume = {102},
  issue = {11},
  pages = {110502},
  numpages = {4},
  year = {2009},
  month = {Mar},
  publisher = {American Physical Society},
  doi = {10.1103/PhysRevLett.102.110502},
  optUrl = {https://link.aps.org/doi/10.1103/PhysRevLett.102.110502}
}

@article{XZZX,
title={XZZX surface code},
author={J. P. B. Ataides and D. K. Tuckett and S. D. Bartlett and S. T. Flammia  and B. J. Brown}, 
journal={Nature Comm.}, volume=12, pages={2172}, year=2021,
doi={10.1038/s41467-021-22274-1}
}

@article{mooney2021cost,
  title={Cost-optimal single-qubit gate synthesis in the Clifford hierarchy},
    url={http://dx.doi.org/10.22331/q-2021-02-15-396},
   DOI={10.22331/q-2021-02-15-396},
  author={Mooney, Gary J and Hill, Charles D and Hollenberg, Lloyd CL},
  journal={Quantum},
  volume={5},
  pages={396},
  year={2021},
  publisher={Verein zur F{\"o}rderung des Open Access Publizierens in den Quantenwissenschaften}
}

@article{kliuchnikov2023shorter,
  title={Shorter quantum circuits via single-qubit gate approximation},
  author={Kliuchnikov, Vadym and Lauter, Kristin and Minko, Romy and Paetznick, Adam and Petit, Christophe},
  journal={Quantum},
  volume={7},
  pages={1208},
  year={2023},
  publisher={Verein zur F{\"o}rderung des Open Access Publizierens in den Quantenwissenschaften}
}

\newpage
\onecolumngrid
\appendix

\newpage
\section{Proof of Lemma~\ref{lemma:1}}\label{app:lemma-1}

\textbf{Lemma 1}
Bulk stabilizers together with boundary stabilizers defined by Eq.~\eqref{eq:s-stabilizers} form an independent set of $2^m - m - 1$ stabilizers on $2^m$ physical qubits, i.e., generate a code with parameters $[n,k]$ of Eq.~\eqref{eq:rm-1-parameter}, required for the $RM(1,m)$ code.\\

\begin{proof}
First, we prove the completeness. For a fixed $w$ there are $\nq /w - 1 = 2^{l-s} - 1$ stabilizers $S(w,t)$, so the total number of stabilizers is
\begin{equation}
    \begin{aligned}
        \sum_{s=1}^{l-1}
        (2^{l-s} - 1)
        =
        2^l - l - 1, 
    \end{aligned}
\end{equation}
which matches the number of missing DOF along one side in Eq.~\eqref{eq:side-stabilizers}. 

To prove that stabilizers defined by $S(w,t)$ are independent, we first note that qubits of stabilizers come in pairs, such that each is always of the form $S(w,t) = [i,i+1, j,j+1]$. Assume $S(w,t)$ is not independent, that is, there exist stabilizers such that
\begin{equation}\label{eq:s-proof}
    S(w,t)
    =
    \prod_{q}
    S(w_q,t_q).
\end{equation}
Consider one of the pair of qubits in $S(w,t)$, e.g., the left two qubits $i$ and $i+1$. There are two possible cases of how the two qubits can belong to a product of other stabilizers. First, qubits $i$ and $i+1$ belong to two different stabilizers in the product of the RHS in Eq.~\eqref{eq:s-proof}, i.e., there are two stabilizers $S(w_1,t_1)$ and $S(w_2,t_2)$ in the product such that qubit $i$ belongs to $S(w_1,t_1)$ and qubit $i+1$ belongs to $S(w_2,t_2)$. In this case, the product in the equation above will have support on an odd number of qubits to the left of $i$ and on an odd number of qubits to the right of $i+1$. However, because qubits belonging to stabilizers always come in pairs, an odd number of qubits can not be a product of stabilizers. Hence, this configuration is not possible.

The second case is when the left pair of qubits $i$ and $i+1$ of $S(w,t)$ is also either the left of the right pair of another stabilizer from the product of Eq.~\eqref{eq:s-proof}. We will show that a right pair of stabilizer $S(w,t)$ can not be a right pair of any other stabilizer. 
Consider stabilizers $S(w,t)$ and $S(w',t')$. If the right pairs of both stabilizers match,
\begin{equation}
    \begin{aligned}
        \Big{[}        
        w(t+\frac{1}{2}),  w(t+\frac{1}{2}) + 1
        \Big{]}
        =
        \Big{[}        
        w'(t'+\frac{1}{2}),  w'(t'+\frac{1}{2}) + 1
        \Big{]},
    \end{aligned}
\end{equation}
where we subtracted $\nq/2$ on both sides. Substituting $w = 2^s$ and $w' = 2^{s'}$, we have for the first and the second index $2^{s-1}(2t+1) = 2^{s'-1}(2t'+1)$, 
which yields
\begin{equation}
    \begin{aligned}
        2^{s-s'}(2t+1) &= (2t'+1).
    \end{aligned}
\end{equation}
If $s \neq s'$, the LHS is an even number, while the RHS is an odd number, hence the equation is only valid if $s=s'$ and $t=t'$. Similarly, we can show that the right pair of stabilizer $S(w,t)$ can be the left pair of stabilizers $S(w',t')$ iff $w=w'$ and $t=t'-1$, that is, the two stabilizers are shifted versions of one another by one step. Indeed, if the right pair of $S(w,t)$ and the left pair of $S(w',t')$ match, we have
\begin{equation}
    \begin{aligned}
        \Big{[}        
        w(t-\frac{1}{2}),  w(t-\frac{1}{2}) + 1
        \Big{]}
        =
        \Big{[}        
        w'(t'+\frac{1}{2}),  w'(t'+\frac{1}{2}) + 1
        \Big{]}.
    \end{aligned}
\end{equation}
With $w = 2^s$ and $w' = 2^{s'}$, this yields
\begin{equation}
    \begin{aligned}
        2^{s-s'}(2t-1) &= (2t'+1),
    \end{aligned}
\end{equation}
which is valid iff $s=s'$ and $t - t' = 1$. Therefore, a left pair of qubits in stabilizer $S(w,k)$ can only be a right pair of stabilizers $S(w,k-1)$, meaning that all of $\nzs = 2^l - l - 1$ weight-4 stabilizers defined by Eq.~\eqref{eq:s-stabilizers} are independent. Such shifted versions of stabilizers $S(w,t)$ and $S(w,t\pm \Delta t)$ are independent by construction. The described cases cover all possible configurations of $S(w,t)$, hence, all boundary stabilizers along the horizontal boundary are independent. By symmetry, there are also $\nzs$ boundary stabilizers along the vertical boundary. 

We have shown that the number of independent boundary stabilizers along the boundary of each type is $\nzs$. The total number of independent weight-4 stabilizers is hence $\nzb + 2\nzs = \nz$, which forms the full generating set of $Z$ stabilizers required for the distillation of $Z_{m-2}$ as shown in Fig.~\ref{fig:distillation}. That is, fixing $\nzb$ independent bulk stabilizers and $2\nzs$ independent boundary stabilizers $S(w,t)$ on $2^m$ physical qubits defines the unfolded code $uRM(m)$ with parameters $[n,k]$ of Eq.~\ref{eq:rm-1-parameter}. 
\end{proof}

\newpage
\section{Proof of Lemma~\ref{lemma:2}} \label{app:lemma-2}

\textbf{Lemma 2}
The parity labels of qubits in $uRM(m)$ code are all possible odd-length combinations of $m+1$ logical indices.\\

\begin{proof}

Consider an unfolded code $uRM(m)$. The code encodes $m+1$ logical qubits into $2^m$ physical qubits. The number of odd-length combinations one can make from integer indices $q \in [0,m]$ is
\begin{equation}\label{eq:number-of-odd-parities}
    \sum_{i \in \textrm{odd}}^m \binom{m+1}{i}
    =
    2^m,
\end{equation}
which is exactly the number of qubits.

Consider an explicit parity layout construction, such as the one of Fig.~\ref{fig:bulk}. The layout encodes $2l+1$ logical qubits and is defined on a squared layout with $2^{l}\times 2^{l}$ physical qubits. First, consider the bottom row of the code. There are $2^l$ physical qubits and $\nzs = 2^l - l - 1$ boundary stabilizers. Therefore, we can place $2^l - \nzs = l+1$ base~(i.e., single-label) qubits along the bottom row. The remaining $2^l - l - 1$ physical qubits along the bottom row have multi-label parity indices. We will show that no even-length parity indices are present along the bottom row by induction. 

\textbf{Base case.} 
Consider a weight-4 boundary stabilizers that has a support on three base qubits, i.e., on parity qubits with single-label parity indices. Then, the fourth parity qubit has an odd-length parity index. 

This is trivial to show. Each parity index of qubits belonging to a stabilizer should enter an even number of times. Then, the parity index of the fourth qubit is the symmetric different between parity indices of the other three. The latter are there different single-label indices $i$, $j$, and $k$. Hence, the fourth qubit has an odd-length~(length-3) parity index $ijk$.

\textbf{Induction.}
Assume a weight-4 boundary stabilizer with three qubits having odd-length parity indices. Then, the parity index of the fourth qubit has odd length. 

Denote qubits in the support of a stabilizer 1 to 4. $C_q$ is an parity index of qubit $q \in [1,4]$, i.e., a some combination of indices $\{1,2,...,m+1\}$. Qubits 1,2,3 have having odd-length parity indices, $|C_q|\bmod 2 = 1$ for $q=\{1,2,3\}$. Because each parity index around the stabilizer must enter an even number of times, we have $C_4 = C_1 \Delta C_2 \Delta C_3$ and, using associativity, 
\begin{equation}
\begin{aligned}
    |C_4| \bmod 2
    &=
    |C_1 \Delta C_2 \Delta C_3| \bmod 2
    \\&=
    (|C_1| + |C_2| + |C_3|)
    \bmod 2
    =
    1.
\end{aligned}
\end{equation}
Hence, a weight-4 stabilizer generates an odd-length parity label from three odd-length parity labels. Starting from single-qubit labels~(Base), we generate all possible odd-length parity labels. Because all $\nzs$ boundary stabilizers are independent, all parity labels along the bottom boundary have to be unique.

In the same manner, we generate parity labels of qubits along the rightmost column. Finally, we generate all the remaining parity labels of the code step by step from weight-4 stabilizers with support on three previously generated odd-length parity labels. By the induction, all parity labels generated this way are odd-length. Because all $\nzb$ bulk stabilizers are independent, all parity labels along have to be unique. Since all parity labels are unique and odd length, and due to Eq.~\eqref{eq:number-of-odd-parities}, this process generates all possible odd-length parity labels. A step-by-step process of assigning parity labels is described in Fig.~\ref{fig:bulk}.
\end{proof}

\newpage
\section{Proof of Lemma~\ref{lemma:3}} \label{app:lemma-3}

\textbf{Lemma 3}
Distance of each logical operator in the $uRM(m)$ code is $2^{m-1}$.\\

\begin{proof}

A logical operator $i$ of the $uRM(m)$ code has support on all physical qubits with parity index containing single-qubit index $i$. Hence we need to show that the size of any subset of parity indices belongs to $2^{m-1}$ qubits of the $uRM(m)$ code. 

Let $S$ be a set of all parity labels. If an element $s$ of $S$ contains single-qubit label $i$, the remaining part of the parity label must have even length $k$ by Lemma~\ref{lemma:2}. The number of ways to choose $k$ elements from the remaining $m$ integers is $\binom{m}{k}$, and the number of all elements $s \in S$ that obey this property is
\begin{equation}
    \sum_{k=even}
    \binom{m}{k}
    =
    2^{m-1}.
\end{equation}
This applies to each single-qubit label $i \in [0,m]$.
\end{proof}

\newpage
\section{Transversality of $Z_k$ gates from the parity formalism}\label{app:transversality-criteria}

Here we provide an alternative proof that the unfolded parity code $uRM(m)$ supports the gate $Z_{m-2}$ transversally. We also introduce a criteria we call $k$-parity that might be applicable to analysing transversal gates in a broader range of error correcting codes. 

Consider a classical parity code that generates a set of parity labels $S$. We will call the code $k$-parity if each length-$k$ combination of single-qubit parity labels enters $S$ an even number of times. Then,
\begin{lemma}\label{lemma:5}
The $uRM(m)$ code is a $(m-1)$-parity code.
\end{lemma}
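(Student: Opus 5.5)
By Lemma~\ref{lemma:2}, the set $S$ of parity labels of $uRM(m)$ is known explicitly: it is exactly the family of all odd-length subsets of the $m+1$ logical indices $\{0,1,\dots,m\}$, and each such subset occurs exactly once. So the $k$-parity property becomes a counting statement about odd subsets of an $(m+1)$-element set. I read ``a length-$k$ combination enters $S$'' as ``the combination is contained in a label $s\in S$''. I would prove the condition for every combination of length $j\le m-1$, since transversality of $Z_{m-2}$ presumably needs all lower orders as well; the case $j=m-1$ is the one named in the statement.

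Fix a set $A\subseteq\{0,\dots,m\}$ with $|A|=j$. A label $s\in S$ contains $A$ exactly when $s=A\cup B$ with $B\subseteq\{0,\dots,m\}\setminus A$ and $|A|+|B|$ odd. Hence the number of labels containing $A$ equals the number of subsets $B$ of an $(m+1-j)$-element set with $|B|\equiv 1-j \pmod 2$. This is the same binomial identity already used in Eq.~\eqref{eq:number-of-odd-parities} and in the proof of Lemma~\ref{lemma:3}: for $n=m+1-j\ge1$, the even-size and odd-size subsets of an $n$-element set each number $2^{n-1}$. The count is therefore $2^{m-j}$, which is even precisely when $j\le m-1$. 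In particular every length-$(m-1)$ combination appears in exactly $2$ labels, so $uRM(m)$ is $(m-1)$-parity.

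As a sanity check I would note that for $j=m$ the count is $2^0=1$, which is odd. So the code is not $m$-parity, consistent with $Z_{m-2}$ (and not $Z_{m-1}$) being the transversal gate of $\overline{QRM}(1,m)$. I would also remark that the whole argument depends on the uniqueness of labels from Lemma~\ref{lemma:2}, so that counting subsets is the same as counting qubits.

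The only real obstacle is interpretive: pinning down what ``enters $S$'' means, containment versus equality, and whether lower lengths must also be included. Once containment is fixed and Lemma~\ref{lemma:2} is invoked, the computation is immediate. The substantive work is deferred to the subsequent equivalence with $k$-orthogonality, where one shows that even containment counts for all $j\le m-1$ make the phase $\sum_{s\in S}\pi|s\cap x|/2^{m-2}$ act as a logical $Z_{m-2}$.
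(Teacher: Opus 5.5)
Your proof is correct and follows the paper's route: both invoke Lemma~\ref{lemma:2} and count the odd-length labels that contain a fixed combination. The paper does this only for length $m-1$, with a case split on the parity of $m$ (for even $m$ the combination itself plus the full $(m+1)$-index label, for odd $m$ its two one-index extensions), whereas your uniform count $2^{m-j}$ avoids the case split and covers every $j\le m-1$.

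One refinement to your closing remark: the later transversality step needs the exact value $2^{m-j}$, not mere evenness, because the phase $e^{i\pi|x|/2^{m-2}}$ depends on $|x|$ modulo $2^{m-1}$, and in the inclusion--exclusion expansion of $|x|$ the $j$-fold intersections enter with a factor $2^{j-1}$.
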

\begin{proof}
     Consider the $uRM(m)$ code. It encodes $(m+1)$ logical qubits, i.e., has $(m+1)$ single-qubit labels. Pick a parity label made of $m$ single-qubit parity labels. If $m$ is even, such element enters only two elements--itself and the length-$(m+1)$ element made of all single-qubit parity labels. If $m$ is odd, the largest length of parity labels is $m$ single-qubit parity labels, and there are only two such parity labels, where $(m-1)$ labels are fixed the the remaining one can be one of two unused labels. Hence, for any choice of $m$, any parity label containing the combination of $m$ single-qubit labels enters exactly twice. The code is therefore $(m-1)$-parity.
\end{proof}

We now revisit the well-known $k$-orthogonality criteria of quantum error correcting codes. Consider an arbitrary quantum error correcting code $\mathcal{C}$. Let $S_x$ be a $r \times n$ parity check matrix. We say that $S_x$ satisfies the $k$-orthogonality condition if, for every subset of $k$ rows $\{r_{i_1}, r_{i_2}, \dots, r_{i_k}\}$ from $S_x$, the following condition holds,
\begin{equation}\label{eq:k-orthogonality}
    \sum_{j=1}^{n} (r_{i_1})_j
    \cdot
    (r_{i_2})_j
    \dots 
    (r_{i_k})_j 
    \equiv 
    0 \pmod 2. 
\end{equation}
Another words, the intersection of any $k$ rows of the matrix must have an even number of elements. Such codes are referred to as $k$-orthogonal. Any $(k+1)$-orthogonal code is known to support transversal implementation of the $Z_{k}$ gate. For instance, tri-orthogonal codes are commonly used to determine codes with orthogonal $T$ gate. 

We now show the connection between $k$-orthogonality and $k$-parity. Promote all $X$ stabilizers of the code $\mathcal{QC}$ to logical operators. Then each row of $S_x$ describes a logical operator of a classical code $\mathcal{CC}$, i.e., a parity code. The following Lemma then holds.
\begin{lemma}\label{lemma:4}
    $\mathcal{QC}$ is $k$-orthogonal if and only if $\mathcal{CC}$ is $k$-parity.
\end{lemma}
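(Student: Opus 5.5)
The plan is to make the correspondence between rows of $S_x$ and single-qubit labels of the parity code explicit, and then show that both conditions count the same thing modulo 2.

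\textbf{Setting up the dictionary.} Once all $X$ stabilizers of $\mathcal{QC}$ are promoted to logical operators, the $r$ rows $r_1,\dots,r_r$ of $S_x$ become the logical $X$ operators of the classical parity code $\mathcal{CC}$. Each row index $i$ is identified with a single-qubit label $i$. By the parity-label construction used in Lemmas~\ref{lemma:2} and \ref{lemma:3}, logical operator $i$ is supported exactly on the physical qubits whose parity label contains $i$. Hence, for physical qubit $j$, its parity label is
\begin{equation*}
C_j=\{\,i : (r_i)_j=1\,\},
\end{equation*}
that is, the support pattern of column $j$ of $S_x$. The multiset of parity labels $S$ is therefore $\{C_j\}_{j=1}^n$.

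\textbf{Translating the two conditions.} For a fixed subset $I=\{i_1,\dots,i_k\}$ of rows, the product $(r_{i_1})_j\cdots(r_{i_k})_j$ equals $1$ exactly when $I\subseteq C_j$. So the sum in Eq.~\eqref{eq:k-orthogonality} counts the physical qubits whose parity label contains the combination $I$. I will read ``each length-$k$ combination of single-qubit labels enters $S$ an even number of times'' in exactly this containment sense, which is how it is used in the proof of Lemma~\ref{lemma:5}. With that reading, $k$-parity says this count is even for every $I$ with $|I|=k$. This is literally the statement that Eq.~\eqref{eq:k-orthogonality} holds for every $k$-subset of rows, so both directions of the equivalence follow at once.

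\textbf{Main obstacle.} The only genuine difficulty is definitional. First, I must justify that $\mathcal{CC}$ is well defined, i.e.\ that the label of qubit $j$ really is the column of $S_x$ and not something changed by a different choice of base qubits. Row operations on $S_x$ change the labels by an invertible linear map. I would therefore fix the generating set of rows, as the statement implicitly does, and note that $k$-orthogonality is also stated relative to that generating set.

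Second, $k$-orthogonality conventionally includes all subsets of at most $k$ rows, possibly with repetition. A subset with repeated rows reduces to a smaller subset. I would therefore define, or remark, that $k$-parity is likewise meant for all combinations of length at most $k$. Alternatively, I would check that in the intended application, Lemma~\ref{lemma:5}, lower-order evenness also holds, since Lemma~\ref{lemma:3} gives even weights $2^{m-1}$ and similar binomial sums give the pairwise intersections.
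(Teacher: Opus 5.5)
Your proof is correct and takes the same route as the paper. Like the paper, you identify each column of $S_x$ with a parity label, so that the sum in Eq.~\eqref{eq:k-orthogonality} counts exactly the labels containing the chosen $k$-combination; $k$-orthogonality and $k$-parity then become the same condition stated in two languages. Your version is a more explicit rendering of the paper's one-line argument, and your remarks (the containment reading of ``enters'', fixing the generating set of rows, and the exactly-$k$ versus at-most-$k$ convention) are valid clarifications rather than departures.
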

\begin{proof}
    In the parity code formalism, each qubit acquires a parity label, i.e., each column correspond to a parity index. From $k$-orthogonality of $S_x$, the intersection of any $k$ rows has an even number of elements, meaning that any $k$ logical operators of $\mathcal{CC}$ has even number of common parity labels containing all $k$ single-qubit labels. Hence, $k$-parity is simply a reformulation of $k$-orthogonality using the parity code language.
\end{proof}

From Lemma~\ref{lemma:4} and Lemma~\ref{lemma:5}, the $uRM(m)$ code is $(m-1)$ orthogonal, hence, admits transversal gate $Z_{m-2}$.

\newpage
\section{Unfolded layouts with nearest-neighbour connectivity}\label{app:nn-layout}

Figures~\ref{fig:boundaries-16} and \ref{fig:boundaries-32} provide explicit construction of boundary stabilizers realized with nearest-neighbour connectivity up to the code $uRM(10)$, i.e., for magic factories for distillation of gates up to $Z_8=T^{1/64}$. Higher-level factories can be constructed using a similar construction. We stop at $k=8$ since, as we show in Sec.~\ref{subsec:cost-native}, this is the the break-even beyond which distilling higher-order gates becomes more resource-demanding than approximation with (Clifford+$T$), at least for use-cases and parameters considered in this paper. 

\begin{figure*}[h]
\includegraphics[width=0.5\textwidth]{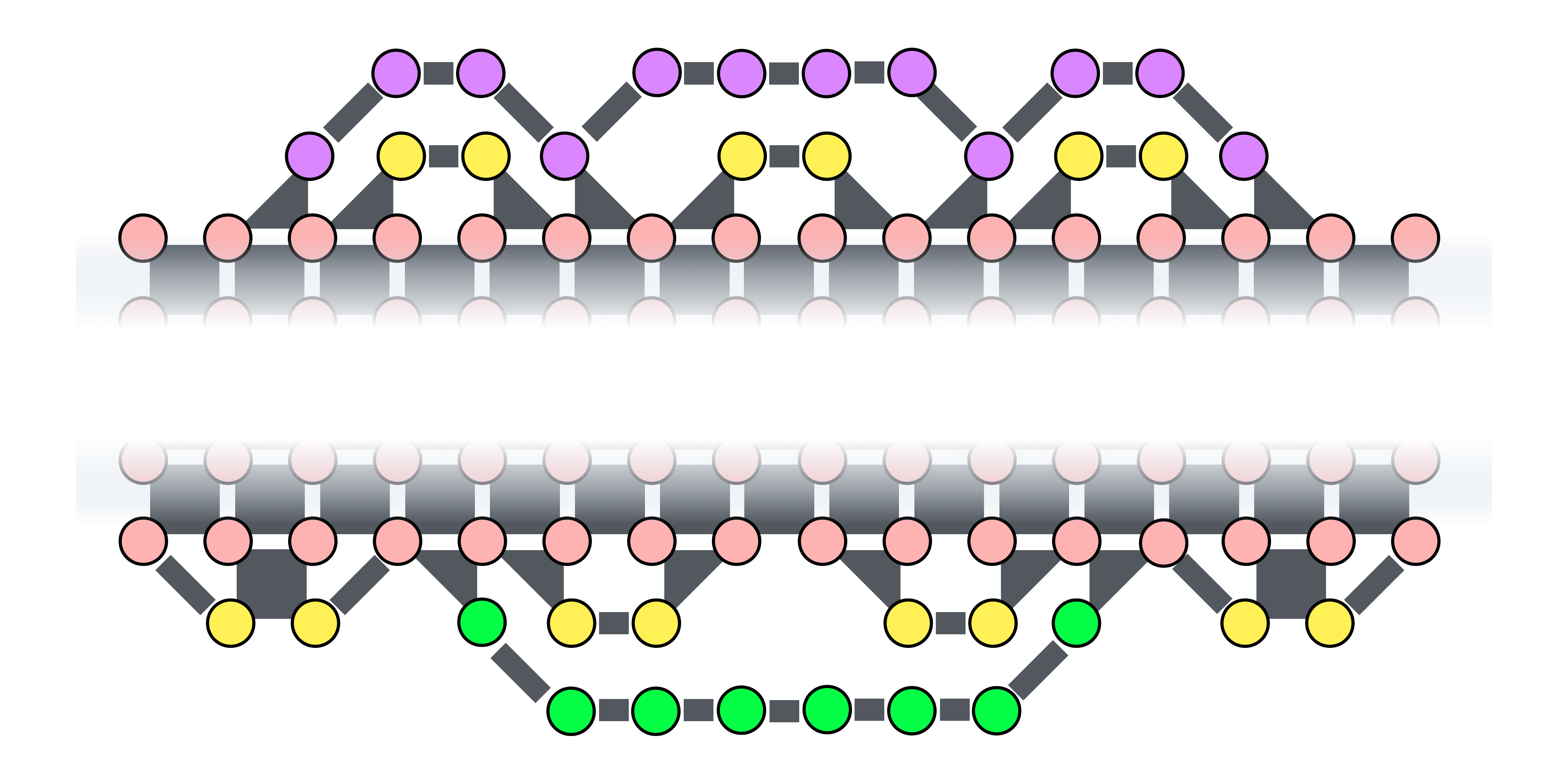}
\caption{\label{fig:boundaries-16} Boundary stabilizers with nearest-neighbour qubit connectivity. Shown are stabilizers along the horizontal boundaries of the $uRM(m)$ codes with a side length $2^4$, i.e., $\overline{QRM}(1,7)$ and $\overline{QRM}(1,8)$. Only top and bottom rows of the code are shown to save space. Stabilizers of different distances are shown with different colors. Distance-$w$ stabilizers requires exactly $w$ ancilla qubits, making it the lowest-weight possible realization on a planar chip with local interaction.}
\end{figure*}

\begin{figure*}[h]
\includegraphics[width=0.75\textwidth]{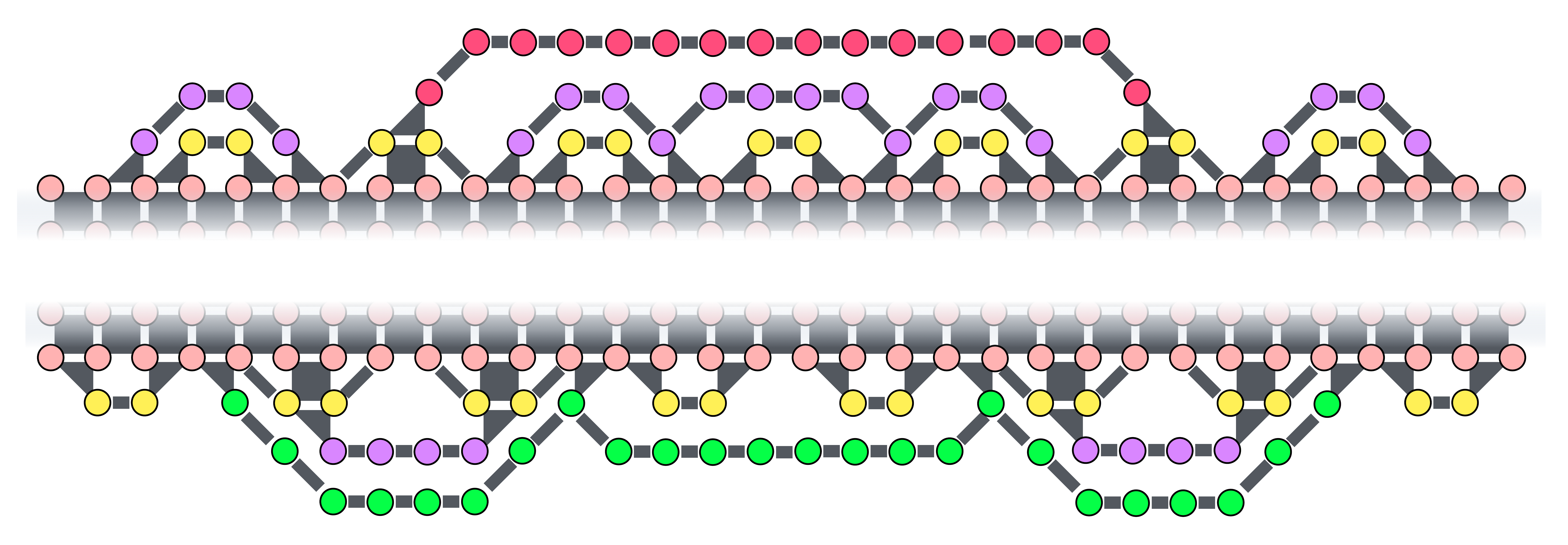}
\caption{\label{fig:boundaries-32} Same as in Fig.~\ref{fig:boundaries-16}, but for codes with a side length $2^4$, i.e., $\overline{QRM}(1,9)$ and $\overline{QRM}(1,10)$.}
\end{figure*}

\newpage
\section{Unitary synthesis}\label{app:synthesis}
To synthesise arbitrary single-body unitaries using the gate sets $C_2$ and $C_3$ we use a modified version of the recently introduced method \texttt{TRASYN} \cite{Hao2025}. \texttt{TRASYN} uses a tensor network approach to find suitable unitaries $V$ which minimize the distance to the target unitary $U$ as defined in Eq.~\eqref{eq:synthesis_error}. This requires three steps:  First, for a given set of gates $C$, all unique unitaries formed from combinations of gates $\in C$ up to a given non-Clifford cost $R$ are pre-computed (Clifford gates are assumed free). For each cost $c$, this yields a sequence of $M$ $2 \times 2$ matrices which we reshape into a single tensor $\tau_R$ with shape $2\times M\times 2$. Next, to find a minimal synthesis error for a given cost $R$, we compute the contraction of this tensor with the target unitary and sample from the resulting vector of length $M$. Since the number $M$ of possible matrices $V$ scales exponentially with the total cost, this brute force approach is very limited. To reach beyond the brute force scale, \texttt{TRASYN} introduces a tensor network $ \tau_\ell = [\tau_{R_1}, \tau_{R_2} , ..., \tau_{R_\ell}]$ formed from $\ell$ tensors $\tau_{R_i}$ with shape $2\times M_i\times 2$ and bond-dimension $2$. Each tensor $\tau_{R_i}$ contains all the unique matrices associated with a fixed non-Clifford cost $R_i$. Thus, any combination of matrices sampled from the tensors $\tau_{R_i}$ has a total cost $R= \sum_{i=1}^\ell R_i$. 

To evaluate the synthesis errors we contract $\tau_\ell$ with the target unitary $U$. \texttt{TRASYN} then adopts established sampling techniques \cite{Ferris2012} to find gate sequences with low synthesis errors. Concretely, to sample efficiently, the tensor network $\tau_\ell$ is contracted with $U$ and brought to canonical form using repeated singular value decomposition so that eventually only one tensor is non-isometric. All other tensors contract to the identity with their conjugate transpose. This allows to interpret the trace values as a joint probability distribution over all indices $M_1, M_2, \dots, M_\ell$ which can be reformulated in terms of conditional probabilities
\begin{equation}
    p(M_1, M_2, \dots, M_\ell) =  p(M_1) p(M_2 | m_1) \dots p(M_\ell | m_1, m_2, ..., m_{\ell-1}).
\end{equation}
The canonical form allows to efficiently compute the conditional probability distributions iteratively and sample in each step. This yields a list of indices $[m_1, m_2, ...,m_\ell]$ from which we can reconstruct a unitary by selecting the corresponding $2 \times 2$ matrices from the respective tensors $\tau_{R_i}$
\begin{equation}
    V_{i,j} = \tau_{R_1}^{i, m_1, k} \tau_{R_2}^ {k, m_2, n} \dots\tau_{R_{\ell-1}}^{p, m_{\ell-1}, o} \tau_{R_\ell}^{o, m_{\ell}, j},
\end{equation}
where the superscript indicates indices of the corresponding tensor using Einstein summation convention.

In Ref.~\cite{Hao2025} the authors use this method to synthesise arbitrary single-qubit unitaries from the gate set $C_2$ containing the gates $\{T,~S,~H,~X,~Y,~Z\}$ for a given non-Clifford budget. In this work we extend the gate sets to $C_3=\{T^{1/2},T,~S,~H,~X,~Y,~Z\}$. 
We proceed along similar lines: first, we pre-compute all unique unitaries up to a non-Clifford cost $R_{\mathrm{trunc}}$. The extended gate set leads to a rapid exponential growth in the number of unique matrices which limits $R_{\mathrm{trunc}}$ to moderate values ($R_{\mathrm{trunc}}=8$ for the results presented in this work). To synthesise unitaries beyond the cost $R_{\mathrm{trunc}}$, we construct tensor networks with a total cost $R>R_{\mathrm{trunc}}$ from fixed cost tensors $\tau_{R_i}$, where now $\tau_{R_i}$ contains all unique matrices formed from combinations of gates in the respective gate set. Then, we use the sampling approach outlined above to extract a unitary $V$ with a small synthesis error and total non-Clifford cost $\leq R$. 

As outlined in the main text, we attribute a non-Clifford cost of $1$ to any $T$ while $T^{1/2}$ 
contributes a cost of $2.5$. 
The cost distribution of the different non-Clifford gates involved complicates the partitioning of a total cost $R>R_{\mathrm{trunc}}$ into multiple smaller costs $R_i$. If there is only one type of non-Clifford gate and a total cost $R>R_{\mathrm{trunc}}$ we just need to find a partition of $R$ into multiple smaller costs $R_i$ such that $\sum_i R_i = R$ and $R_i \leq R_{\mathrm{trunc}} \forall i$. However, if there are different types of non-Clifford gates with different costs, a simple partitioning might not be sufficient. Consider the gate set $C_3$ and assume $R_{\mathrm{trunc}}=4$ and $R=5$. We could suggest a simple partitioning of $R$ into $R_1=4$ and $R_2=1$ build the corresponding tensor and sample from it according the scheme outlined above. However, neither of the two constituent tensors $\tau_{R_1}$ and $\tau_{R_2}$ contains any $T^{1/2}$ (since the cost of $T^{1/2}$ is $2.5$  which does neither fit in $R_1$ nor $R_2$). Conversely, we could have chosen a partitioning of $R$ into $R_1=2.5$ and $R_2=2.5$ which restricts us to tensors that only contain $T^{1/2}$. Since no single partitioning is sufficient for the case of multiple non-Clifford gate types, we need to perform the tensor network sampling on a set of partitionings $\mathcal{P}$. $\mathcal{P}$ has to be constructed such that any combination and order of the respective non-Clifford gates at fixed cost is at least contained in one partitioning $p \in \mathcal{P}$. For the example above, one such $\mathcal{P}$ is given by ${(4,1), (2.5,2.5)}$. Then we perform the tensor network sampling algorithm for each partitioning in $\mathcal{P}$ and return the best result obtained. Our code for gate synthesis is available under~\cite{synthesis-repo}.

\twocolumngrid

\end{document}